\newcommand{\nc}{\newcommand}
\nc{\toy}{\mathcal{TOY}} 
\nc{\NAT}{\mathbb{N}}
\nc{\REAL}{\mathbb{R}}
\nc{\tup}[1]{\overline{#1}}   
\nc{\ntup}[2]{\tup{#1}_{#2}}  
\newcommand{\qdom}{\mathcal{D}} 
\newcommand{\aqdom}{D_\qdom \setminus \{\bt\}} 
\nc{\bt}{\mathrm{\mathbf{b}}} 
\nc{\tp}{\mathrm{\mathbf{t}}} 
\nc{\dleq}{\trianglelefteqslant} 
\nc{\dlt}{\vartriangleleft} 
\nc{\dgeq}{\trianglerighteqslant} 
\nc{\dgt}{\vartriangleright} 
\newcommand{\B}{\mathcal{B}} 
\newcommand{\U}{\mathcal{U}} 
\newcommand{\W}{\mathcal{W}} 
\nc{\leqinfo}{\sqsubseteq}
\nc{\lsinfo}{\sqsubset}
\nc{\geqinfo}{\sqsupseteq}
\nc{\gtinfo}{\sqsupset}
\newcommand{\cdom}{\mathcal{C}} 
\newcommand{\rdom}{\mathcal{R}} 
\newcommand{\transform}[1]{{#1}^\mathcal{T}} 
\nc{\Unsat}[2]{\mbox{Unsat}_{#1}(#2)}
\nc{\Exp}{\mbox{Exp}_{\bot}(\Sigma,B,\Var)} 
\nc{\TExp}{\mbox{Exp}(\Sigma,B,\Var)} 
\nc{\GExp}{\mbox{Exp}_{\bot}(\Sigma,B)} 
\nc{\TGExp}{\mbox{Exp}(\Sigma,B)} 
\nc{\Term}{\mbox{Term}_{\bot}(\Sigma,B,\Var)} 
\nc{\TTerm}{\mbox{Term}(\Sigma,B,\Var)} 
\nc{\GTerm}{\mbox{Term}_{\bot}(\Sigma,B)} 
\nc{\TGTerm}{\mbox{Term}(\Sigma,B)} 
\nc{\Sol}[2]{\mbox{Sol}_{#1}(#2)} 
\nc{\Solc}[1]{\Sol{\cdom}{#1}} 
\nc{\Int}[2]{\mbox{Int}_{#1,#2}} 
\nc{\Intdc}{\Int{\qdom}{\cdom}} 
\nc{\ibot}{\bot\!\!\!\bot} 
\nc{\itop}{\top\!\!\!\top} 
\nc{\pc}[2]{\mathsf{#1}(#2)} 
\nc{\qval}[1]{\pc{qVal}{#1}} 
\nc{\qbound}[1]{\pc{qBound}{#1}} 
\nc{\encode}[1]{\ulcorner#1\urcorner} 
\nc{\scheme}[2]{\text{#1}(#2)} 
\nc{\qlp}[1]{\scheme{QLP}{#1}} 
\nc{\bqlp}[1]{BQLP({#1})} 
\nc{\clp}[1]{\scheme{CLP}{#1}} 
\nc{\cflp}[1]{\scheme{CFLP}{#1}} 
\nc{\qcflp}[2]{\scheme{QCFLP}{#1,#2}} 
\nc{\diff}{~{\Longleftrightarrow_{\mathrm{def}}}~} 
\nc{\eqdef}{~{=_{\mathrm{def}}}~} 
\nc{\union}{\bigcup} 
\nc{\inter}{\bigcap} 
\nc{\supr}{\bigsqcup} 
\nc{\infi}{\bigsqcap} 
\nc{\entail}[1]{~{\succcurlyeq_{#1}}~} 
\nc{\model}[1]{~{\models_{#1}}~} 
\nc{\Prog}{\mathcal{P}} 
\nc{\Var}{\mathcal V\!ar} 
\nc{\War}{\mathcal W\!ar} 
\nc{\set}[2]{\mbox{#1}(#2)} 
\nc{\domset}[1]{\set{vdom}{#1}} 
\nc{\ranset}[1]{\set{vran}{#1}} 
\nc{\varset}[1]{\set{var}{#1}} 
\nc{\bvarset}[1]{\set{bvar}{#1}} 
\nc{\warset}[1]{\set{war}{#1}} 
\nc{\qvalset}[1]{\mathrm{qVal}(#1)} 
\nc{\M}[1]{\mathcal{M}_{#1}} 
\nc{\Mp}{\M{\Prog}} 
\nc{\Sp}{S_\Prog} 
\nc{\STp}{ST_\Prog} 
\newcommand{\I}{\mathcal{I}} 
\nc{\qto}[1]{\xrightarrow{#1}} 
\nc{\qgets}[1]{\xleftarrow{#1}} 
\nc{\sep}{~{\talloblong}~} 
\nc{\closure}[2]{\mbox{cl}_{#1}(#2)} 
\nc{\true}{\texttt{tt}} 
\nc{\false}{\texttt{ff}} 
\nc{\cneq}{~{\!/\!\!=}~} 
\nc{\X}{\mathcal{X}} 
\nc{\infx}{\vdash} 
\nc{\infxx}{\vdash\!\!\vdash} 
\nc{\QHL}{\text{QHL}} 
\nc{\qhl}[1]{\infx_{\QHL(#1)}} 
\nc{\qhln}[2]{\infx_{\QHL(#1)}^{#2}} 
\nc{\CRWL}[1]{\text{CRWL}(#1)} 
\nc{\crwl}[1]{\infx_{#1}} 
\nc{\crwln}[2]{\infx_{#1}^{#2}} 
\nc{\CRWLc}{\CRWL{\cdom}} 
\nc{\crwlc}{\crwl{\cdom}} 
\nc{\ICRWL}[1]{\I\text{-}\CRWL{#1}} 
\nc{\icrwl}[1]{\infxx_{#1}} 
\nc{\icrwln}[2]{\infxx_{#1}^{#2}} 
\nc{\ICRWLc}{\ICRWL{\cdom}} 
\nc{\icrwlc}{\icrwl{\cdom}} 
\nc{\QCRWL}[2]{\text{QCRWL}(#1,#2)} 
\nc{\qcrwl}[2]{\infx_{#1,#2}} 
\nc{\qcrwln}[3]{\infx_{#1,#2}^{#3}} 
\nc{\QCRWLdc}{\QCRWL{\qdom}{\cdom}} 
\nc{\qcrwldc}{\qcrwl{\qdom}{\cdom}} 
\nc{\IQCRWL}[2]{\I\text{-}\QCRWL{#1}{#2}} 
\nc{\iqcrwl}[2]{\infxx_{#1,#2}} 
\nc{\iqcrwln}[3]{\infxx_{#1,#2}^{#3}} 
\nc{\IQCRWLdc}{\IQCRWL{\qdom}{\cdom}} 
\nc{\iqcrwldc}{\iqcrwl{\qdom}{\cdom}} 
\nc{\resx}{\Vdash} 
\nc{\qres}[1]{\resx_{#1}} 
\nc{\qresn}[2]{\resx_{#1}^{#2}} 
\nc{\fracc}[2]{\begin{array}{c}{#1}\\ \hline {#2} \end{array}}  
\title{A Generic Scheme for Qualified \\ Constraint Functional Logic Programming\thanks{Research partially supported by projects MERIT--FORMS (TIN2005-09027-C03-03), PROMESAS--CAM(S-0505/TIC/0407) and STAMP (TIN2008-06622-C03-01).}}
\titlerunning{A Generic Scheme for QCFLP}
\author{Rafael Caballero, Mario Rodr\'iguez-Artalejo and Carlos A. Romero-D\'iaz}
\authorrunning{R.~Caballero, M.~Rodr\'iguez-Artalejo and C.A.~Romero-D\'iaz}
\institute{Departamento de Sistemas Inform\'aticos y Computaci\'on, Universidad Complutense,\\
    Facultad de Inform\'atica, 28040 Madrid, Spain\\
    \email{\{rafa,mario\}@sip.ucm.es} and \email{cromdia@fdi.ucm.es}}
\begin{document}
\maketitle

\begin{abstract}
Qualification has been recently introduced as a generalization of uncertainty in the field of Logic Programming. In this report we investigate a more expressive language for First-Order Functional Logic Programming with Constraints and Qualification. We present a Rewriting Logic which characterizes the intended semantics of programs, and a prototype implementation based on a semantically correct program transformation. Potential applications of the resulting language include flexible information retrieval. As a concrete  illustration, we show how to write program rules to compute qualified answers for user queries concerning the books available in a given library. \\

\noindent {\bf Keywords:}
Constraints,
Functional Logic Programming,
Program Transformation,
Qualification,
Rewriting Logic.
\end{abstract}

\section{Introduction}\label{Introduction}

Various extensions of Logic Programming  with uncertain reasoning capabilities have been widely  investigated during the last 25 years. The recent recollection \cite{Sub07} reviews the evolution of the subject from the viewpoint of a committed researcher. All the proposals in the field replace classical two-valued logic by some kind of many-valued logic with more than two truth values, which are attached to computed answers and interpreted as truth degrees.

In a recent work \cite{RR08,RR08TR} we have presented a {\em Qualified Logic Programming} scheme $\qlp{\qdom}$ parameterized by a {\em qualification domain} $\qdom$, a lattice of so-called {\em qualification values} that are attached to computed answers and  interpreted as a measure of the satisfaction of certain user expectations. $\qlp{\qdom}$-programs are sets of clauses of the form $A \qgets{\alpha} \tup{B}$, where the head $A$ is an atom, the body $\tup{B}$ is a conjunction of atoms, and $\alpha \in \qdom$ is called {\em attenuation} factor. Intuitively, $\alpha$ measures the maximum confidence placed on  an inference performed by the clause. More precisely, any successful application of the clause attaches to the head a qualification value which cannot exceed the infimum of $\alpha \circ \beta_i \in \qdom$, where $\beta_i$ are the qualification values computed for the body atoms and $\circ$ is a so-called  {\em attenuation operator}, provided by $\qdom$.

Uncertain Logic Programming can be expressed by particular instances of $\qlp{\qdom}$, where the user expectation is understood as a lower bound for the {\em truth degree} of the computed answer and $\qdom$ is chosen to formalize a lattice of non-classical truth values. Other choices of $\qdom$  can be designed to model other kinds of user expectations, as e.g. an upper bound for the {\em size} of the logical proof underlying the computed answer. As shown in \cite{CRR08}, the $\qlp{\qdom}$ scheme is also well suited to deal with Uncertain Logic Programming  based on similarity relations in the line of  \cite{Ses02}. Therefore, Qualified Logic Programming has a potential for flexible information retrieval applications, where the answers computed for user queries may match the user expectations only to some degree. As shown in  \cite{RR08}, several useful  instances of $\qlp{\qdom}$ can be conveniently implemented by using constraint solving techniques.

In this report we investigate  an extension of  $\qlp{\qdom}$  to a more expressive scheme, supporting computation with first-order lazy functions and constraints. More precisely, we consider the first-order fragment of $\cflp{\cdom}$, a generic scheme for functional logic programming with constraints over a parametrically given domain $\cdom$ presented in \cite{LRV07}. We propose an extended scheme $\qcflp{\qdom}{\cdom}$ where the additional parameter $\qdom$ stands for a qualification domain. $\qcflp{\qdom}{\cdom}$-programs are sets of conditional rewrite rules of the form $f(\ntup{t}{n}) \qto{\alpha} r \Leftarrow \Delta$, where the condition $\Delta$ is a conjunction of $\cdom$-constraints that may involve user defined functions, and $\alpha \in \qdom$ is an attenuation factor. As in the logic programming case, $\alpha$ measures the maximum confidence placed on an inference performed by the rule: any successful application of the rule attaches to the computed result a qualification value which cannot exceed the infimum of $\alpha \circ \beta_i \in \qdom$, where $\beta_i$ are the qualification values computed for $r$ and $\Delta$, and $\circ$ is $\qdom$'s attenuation operator. $\qlp{\qdom}$ program clauses can be easily formulated as a particular case of $\qcflp{\qdom}{\cdom}$ program rules.

As far as we know, no  related work covers the expressivity of our approach.
Guadarrama et al. \cite{GMV04} have proposed to use real arithmetic constraints as an implementation tool for a Fuzzy Prolog, but their language does not support constraint programming as such.
Starting from the field of natural language processing, Riezler \cite{Rie96,Rie98phd}
has developed quantitative and probabilistic extensions of the classical $\clp{\cdom}$ scheme with the aim of computing good parse trees for  constraint logic grammars, but his work bears no relation to functional programming.
Moreno and Pascual \cite{MP07} have investigated similarity-based unification in the context of {\em needed narrowing} \cite{AEH00}, a narrowing strategy using so-called {\em definitional trees} that underlies the operational semantics of functional logic languages such as {\sf Curry} \cite{curry} and $\toy$ \cite{toy}, but they use neither constraints nor attenuation factors
and they provide no declarative semantics.
The approach of the present report is quite different. We work with a class of programs more general and expressive than the {\em inductively sequential} term rewrite systems used in \cite{MP07}, and our results focus on a rewriting logic used to characterize declarative semantics and to prove the correctness of an implementation technique based on a program transformation. Similarity relations could be easily incorporated to our scheme by using the techniques presented in \cite{CRR08} for the Logic Programming case. Moreover, the good properties of needed narrowing as a computation model are not spoiled by our implementation technique, because our program transformation preserves the structure of the definitional trees derived from the user-given  program rules.

\begin{figure}[!ht]
  \vspace*{-5mm}
  \scriptsize
  \begin{verbatim}
%% Data types:
type pages, id = int
type title, author, language, genre = [char]
data vocabularyLevel = easy | medium | difficult
data readerLevel = basic | intermediate | upper | proficiency
data book = book(id, title, author, language, genre, vocabularyLevel, pages)

%% Simple library, represented as list of books:
library :: [book]
library --> [ book(1, "Tintin", "Herge", "French", "Comic", easy, 65),
              book(2, "Dune", "F. P. Herbert", "English", "SciFi", medium, 345),
              book(3, "Kritik der reinen Vernunft", "Immanuel Kant", "German",
                   "Philosophy", difficult, 1011),
              book(4, "Beim Hauten der Zwiebel", "Gunter Grass", "German",
                   "Biography", medium, 432) ]

%% Auxiliary function for computing list membership:
member(B,[]) --> false
member(B,H:_T) --> true <== B == H
member(B,H:T) --> member(B,T) <== B /= H

%% Functions for getting the explicit attributes of a given book:
getId(book(Id,_Title,_Author,_Lang,_Genre,_VocLvl,_Pages)) --> Id
getTitle(book(_Id,Title,_Author,_Lang,_Genre,_VocLvl,_Pages)) --> Title
getAuthor(book(_Id,_Title,Author,_Lang,_Genre,_VocLvl,_Pages)) --> Author
getLanguage(book(_Id,_Title,_Author,Lang,_Genre,_VocLvl,_Pages)) --> Lang
getGenre(book(_Id,_Title,_Author,_Lang,Genre,_VocLvl,_Pages)) --> Genre
getVocabularyLevel(book(_Id,_Title,_Author,_Lang,_Genre,VocLvl,_Pages)) --> VocLvl
getPages(book(_Id,_Title,_Author,_Lang,_Genre,_VocLvl,Pages)) --> Pages

%% Function for guessing the genre of a given book:
guessGenre(B) --> getGenre(B)
guessGenre(B) -0.9-> "Fantasy" <== guessGenre(B) == "SciFi"
guessGenre(B) -0.8-> "Essay" <== guessGenre(B) == "Philosophy"
guessGenre(B) -0.7-> "Essay" <== guessGenre(B) == "Biography"
guessGenre(B) -0.7-> "Adventure" <== guessGenre(B) == "Fantasy"

%% Function for guessing the reader level of a given book:
guessReaderLevel(B) --> basic <== getVocabularyLevel(B) == easy, getPages(B) < 50
guessReaderLevel(B) -0.8-> intermediate <== getVocabularyLevel(B) == easy, getPages(B) >= 50
guessReaderLevel(B) -0.9-> basic <== guessGenre(B) == "Children"
guessReaderLevel(B) -0.9-> proficiency <== getVocabularyLevel(B) == difficult,
                                           getPages(B) >= 200
guessReaderLevel(B) -0.8-> upper <== getVocabularyLevel(B) == difficult, getPages(B) < 200
guessReaderLevel(B) -0.8-> intermediate <== getVocabularyLevel(B) == medium
guessReaderLevel(B) -0.7-> upper <== getVocabularyLevel(B) == medium

%% Function for answering a particular kind of user queries:
search(Language,Genre,Level) --> getId(B) <== member(B,library),
                                              getLanguage(B) == Language,
                                              guessReaderLevel(B) == Level,
                                              guessGenre(B) == Genre
\end{verbatim}
\normalsize\caption{Library with books in different languages}
\label{library}
\end{figure}

Figure \ref{library} shows a small  set of $\qcflp{\U}{\rdom}$ program rules, called the {\em library program} in the rest of the report. The concrete syntax is inspired by the functional logic language $\toy$, but the ideas and results of this report could be also applied to {\sf Curry} and other similar languages. In this example, $\U$ stands for a particular qualification domain which supports uncertain truth values in the real interval $[0,1]$, while $\rdom$ stands for a particular constraint domain which supports arithmetic constraints over the real numbers; see Section \ref{Qualification} for more details.

The program rules are intended to encode expert knowledge for computing qualified answers to user queries concerning the books available in a simplified library, represented as a list of objects of type \texttt{book}. The various \texttt{get} functions extract the explicit values of book attributes. Functions \texttt{guessGenre} and \texttt{guessReaderLevel} infer information by performing qualified inferences, relying on  analogies between different genres and heuristic rules to estimate reader levels on the basis of other features of a given book, respectively. Some program rules, as e.g. those of the auxiliary function \texttt{member}, have attached no explicit attenuation factor. By convention, this is understood as the implicit attachment of the attenuation factor \texttt{1.0}, the top value of $\U$. For any instance of the $\qcflp{\qdom}{\cdom}$ scheme, a similar convention allows to view $\cflp{\cdom}$-program rules as $\qcflp{\qdom}{\cdom}$-program rules whose attached qualification is optimal.

The last rule for function \texttt{search} encodes a method for computing qualified answers to a particular kind of user queries. Therefore, the queries can be formulated as goals to be solved by the program fragment. For instance, answering the query of a user who wants to find a book of genre {\tt "Essay"}, language {\tt "German"} and user level {\tt intermediate} with a certainty degree of at least 0.65 can be formulated as the goal:

\begin{center}\tt
   (search("German","Essay",intermediate) == R) \# W | W >= 0.65
\end{center}

\noindent The techniques presented in Section \ref{Implementing} can be used to translate the $\qcflp{\U}{\rdom}$ program rules and goal into the $\cflp{\rdom}$ language, which is implemented in the $\toy$ system. Solving the translated goal in $\toy$ computes the answer $\{R \mapsto 4\}\{0.65 \leq W, W \leq 0.7\}$, ensuring that the library book with \texttt{id} \texttt{4} satisfies the query's requirements with any certainty degree in the interval [0.65,0.7], in particular 0.7. The computation uses the 4th program rule of \texttt{guessGenre} to obtain \texttt{"Essay"} as the book's genre with qualification 0.7, and the 6th program rule of \texttt{guessReaderLevel} to obtain \texttt{intermediate} as the reader level with qualification 0.8.

The rest of the report is organized as follows. In Section \ref{Qualification} we recall known proposals concerning qualification  and constraint domains, and we introduce a technical notion needed to relate both kinds of domains for the purposes of this report. In Section \ref{QCFLP} we present the generic scheme $\qcflp{\qdom}{\cdom}$ announced in this introduction, and we formalize a special  Rewriting Logic which characterizes the declarative semantics of $\qcflp{\qdom}{\cdom}$-programs. In Section \ref{Implementing} we present a semantically correct program transformation converting $\qcflp{\qdom}{\cdom}$ programs and goals into the qualification-free $\cflp{\cdom}$ programming scheme, which is supported by existing systems such as $\toy$. Section \ref{Conclusions} concludes and points to some lines of planned future work.  
\section{Qualification and Constraint Domains} \label{Qualification}

\emph{Qualification Domains} were introduced in \cite{RR08}. Their  intended use has been already explained in the Introduction. In this section we recall and slightly improve their axiomatic definition.

\begin{definition}[Qualification Domains]
\label{def:qd}
A {\em Qualification Domain} is  any structure $\qdom = \langle D, \dleq, \bt, \tp, \circ \rangle$ verifying the following requirements:
\begin{enumerate}
  \item $D$, noted as $D_\qdom$ when convenient, is a set of elements called \emph{qualification values}.
    \item $\langle D, \dleq, \bt, \tp \rangle$ is a lattice with extreme points $\bt$ and $\tp$ w.r.t. the partial ordering $\dleq$. For given elements  $d, e \in D$, we  write $d \sqcap e$ for the {\em greatest lower bound} ($glb$) of $d$ and $e$, and $d \sqcup e$ for the {\em least upper bound} ($lub$) of $d$ and $e$. We also write $d \dlt e$ as abbreviation for $d \dleq e \land d \neq e$.
    \item $\circ : D \times D \longrightarrow D$, called {\em attenuation operation}, verifies the following axioms:
        \begin{enumerate}
            \item $\circ$ is associative, commutative and monotonic w.r.t. $\dleq$.
            \item $\forall d \in D : d \circ \tp = d$.
            \item $\forall d, e \in D \setminus \{\bt,\tp\} : d \circ e \dlt e$.
            \item $\forall d, e_1, e_2 \in D : d \circ (e_1 \sqcap e_2) = d \circ e_1 \sqcap d \circ e_2$. \qed
        \end{enumerate}
\end{enumerate}
\end{definition}

As an easy consequence of the previous definition one can prove the following proposition.
\footnote{The authors are thankful to G. Gerla for pointing out this fact.}

\begin{proposition}[Additional properties of qualification domains]
\label{prop:qd-properties}
Any qualification domain $\qdom$ satisfies the following properties:
\begin{enumerate}
  \item $\forall d,e \in D : d \circ e \dleq e$.
  \item $\forall d \in D : d \circ \bt = \bt$.
\end{enumerate}
\end{proposition}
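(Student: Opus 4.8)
The plan is to derive both properties directly from the axioms of Definition~\ref{def:qd}. The key observation guiding the whole argument is that one should \emph{not} try to use axiom~3(c): its scope is restricted to values in $D \setminus \{\bt,\tp\}$, so it cannot by itself cover the extreme cases that these two properties are really about. Instead, the proof rests on the interaction of monotonicity, commutativity, the unit law for $\tp$, and the lattice structure.

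For the first property, I would fix arbitrary $d,e \in D$ and exploit the top element. Since $\tp$ is the greatest element of the lattice, we have $d \dleq \tp$. Because $\circ$ is monotonic w.r.t. $\dleq$ (axiom~3(a)), combining on the left with $e$ preserves this inequality, so $e \circ d \dleq e \circ \tp$. Axiom~3(b) rewrites the right-hand side as $e \circ \tp = e$, and commutativity (again axiom~3(a)) rewrites the left-hand side as $e \circ d = d \circ e$. Chaining these two identities with the inequality yields $d \circ e \dleq e$, which is exactly property~1.

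The second property then follows almost immediately. Instantiating property~1 with $e = \bt$ gives $d \circ \bt \dleq \bt$. On the other hand, since $\bt$ is the least element of the lattice, $\bt \dleq d \circ \bt$ holds trivially. Antisymmetry of the partial ordering $\dleq$ forces $d \circ \bt = \bt$, establishing property~2.

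I do not expect a genuine obstacle here; the argument is short and purely order-theoretic. The only point demanding care is the selection of axioms: the tempting route through axiom~3(c) is a dead end at the extremes, whereas the clean route uses monotonicity and commutativity to reduce property~1 to the unit law $e \circ \tp = e$, and then obtains property~2 as a corollary via antisymmetry.
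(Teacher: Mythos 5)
Your proof is correct and follows essentially the same route as the paper: both establish property 1 by applying monotonicity of $\circ$ to $d \dleq \tp$ and then using commutativity together with the unit law $e \circ \tp = e$, and both derive property 2 by instantiating property 1 at $e = \bt$ and invoking minimality of $\bt$ (your explicit appeal to antisymmetry is just the same step spelled out). Your remark that axiom 3(c) is a dead end at the extremes is a sound observation, though the paper does not need to comment on it.
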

\begin{proof}
Since $\tp$ is the top element of the lattice, we know $d \dleq \tp$ for any $d \in D$. As $\circ$ is monotonic w.r.t. $\dleq$, $d \circ e \dleq \tp \circ e$ also holds for any $e \in D$ which, due to commutativity and axiom ($b$) of $\circ$, yields $d \circ e \dleq e$. Therefore $\mathit{1.}$ holds. Now, taking $e = \bt$, one has $d \circ \bt \dleq \bt$ which implies $d \circ \bt = \bt$ as $\bt$ is the bottom element of the lattice. Hence $\mathit{2.}$ also holds. \qed
\end{proof}

The examples in this report will use a particular qualification domain $\U$ whose values represent certainty degrees in the sense of fuzzy logic. Formally,  $\U = \langle U, \leq, 0, 1,\times \rangle$, where $U = [0,1] = \{d \in \REAL \mid 0 \le d \le 1\}$, $\le$ is the usual numerical ordering, and $\times$ is the multiplication operation. In this domain,  the bottom and top elements are $\bt = 0$ and $\tp = 1$, and the infimum of a finite $S \subseteq U$ is the minimum value $\mbox{min}(S)$, understood as $1$ if $S = \emptyset$.
The class of qualification domains is closed under cartesian products. For a proof of this fact and other examples of qualification domains, the reader is referred to \cite{RR08,RR08TR}.

{\em Constraint domains} are used in Constraint Logic Programming and its extensions as a tool to provide data values, primitive operations and constraints tailored to domain-oriented applications. Various formalizations of this notion are known. In this report,  constraint domains are related to signatures of the form $\Sigma = \langle DC, PF, DF \rangle$ where $DC = \bigcup_{n \in \NAT}  DC^n$, $PF = \bigcup_{n \in \NAT}  PF^n$ and  $DF = \bigcup_{n \in \NAT}  DF^n$ are mutually disjoint sets of {\em data constructor} symbols, {\em primitive function} symbols and {\em defined function} symbols, respectively, ranked by arities. Given a signature $\Sigma$, a symbol $\bot$ to note the {\em undefined value}, a set $B$ of {\em basic values} $u$ and a countably infinite set $\Var$ of variables $X$, we define the notions listed below, where $\ntup{o}{n}$ abbreviates the $n$-tuple of syntactic objects $o_1, \ldots, o_n$.

\begin{itemize}
  \item {\em Expressions}  $e \in \Exp$ have the syntax $e ::= \bot | X | u | h(\ntup{e}{n})$, where $h \in DC^n \cup PF^n \cup DF^n$. In the case $n = 0$, $h(\ntup{e}{n})$ is written simply as $h$.
  \item {\em Constructor Terms}  $t \in \Term$ have the syntax $e ::= \bot | X | u | c(\ntup{t}{n})$,
where $c \in DC^n$. They will be called just terms in the sequel.
  \item {\em Total Expressions} $e \in \TExp$ and {\em Total Terms}  $t \in \TTerm$
have a similar syntax, with the $\bot$ case omitted.
  \item An expression or term (total or not) is called {\em ground} iff it includes no occurrences of variables.
$\GExp$ stands for the set of all ground expressions.
The notations $\GTerm$, $\TGExp$ and $\TGTerm$ have a similar meaning.
  \item We note as $\leqinfo$ the {\em information ordering}, defined as the least partial ordering over $\Exp$
compatible with contexts and verifying $\bot \leqinfo e$ for all $e \in \Exp$.
  \item Substitutions are defined as mappings $\sigma : \Var \to \Term$ assigning not necessarily total terms to variables. They can be represented as sets of bindings $X \mapsto t$ and extended to act over other syntactic objects $o$.
 The {\em domain} $\domset{\sigma}$ and {\em variable range} $\ranset{\sigma}$ are defined in the usual way.
  We will write $o \sigma$ for the result of applying $\sigma$ to $o$. The {\em composition} $\sigma\sigma'$ of two substitutions is such that $o(\sigma\sigma')$ equals $(o\sigma)\sigma'$.
\end{itemize}

By adapting the definition found in Section 2.2 of \cite{LRV07} to a first-order setting, we  obtain:
\footnote{We slightly modify  the statement of the {\em radicality} property, rendering it  simpler than in \cite{LRV07}  but sufficient for practical purposes.}

\begin{definition}[Constraint Domains]
\label{def:cd}
A {\em Constraint Domain} of signature $\Sigma$ is any algebraic structure of the form $\cdom = \langle C, \{p^\cdom \mid p \in PF\}\rangle$ such that:
\begin{enumerate}
  \item The carrier set $C$ is $\GTerm$ for a certain set $B$ of {\em basic values}. When convenient, we note $B$ and $C$ as $B_\cdom$ and $C_\cdom$, respectively.
  \item $p^{\cdom} \subseteq C^n \times C$, written simply as $p^{\cdom} \subseteq C$ in the case $n = 0$, is called the {\em interpretation} of $p$ in $\cdom$. We will write $p^{\cdom}(\ntup{t}{n}) \to t$ (or simply $p^{\cdom} \to t$ if $n = 0$) to indicate that $(\ntup{t}{n},\, t) \in p^{\cdom}$.
  \item Each primitive interpretation $p^{\cdom}$ has  {\em monotonic} and {\em radical} behavior w.r.t. the information ordering $\leqinfo$. More precisely:
\begin{enumerate}
\item {\bf Monotonicity}: For all $p \in PF^n$,  $p^{\cdom}(\ntup{t}{n}) \to t$ behaves
monotonically w.r.t. the arguments $\ntup{t}{n}$ and antimonotonically
w.r.t. the result $t$.
Formally: For all $\ntup{t}{n}, \ntup{t'}{n}, t, t' \in  C$ such that
$p^{\cdom}(\ntup{t}{n}) \to t$, $\ntup{t}{n} \sqsubseteq \ntup{t'}{n}$ and $t \sqsupseteq t'$,
$p^{\cdom}(\ntup{t'}{n}) \to t'$ also holds.
\item {\bf Radicality}: For all $p \in PF^n$, as soon as the arguments given to $p^{\cdom}$ have enough information to return  a result other than $\bot$, the same arguments suffice already  for returning a simple total result.
Formally:
For all $\overline{t}_n, t \in C$, if $p^{\cdom}(\overline{t}_n) \to t$ then $t=\bot$ or else $t \in B \cup DC^0$.
 \end{enumerate}
\end{enumerate}
\end{definition}

Note that symbols $h \in DC \cup DF$ are given no interpretation in $\cdom$. As we will see in Section \ref{QCFLP}, symbols in $c \in DC$ are interpreted as free constructors, and the interpretation of symbols $f \in DF$ is program-dependent. We assume that any signature $\Sigma$ includes two nullary constructors $true$ and $false$ for the boolean values, and a binary symbol $==\,\, \in PF^2$ used in infix notation and interpreted as {\em strict equality}; see \cite{LRV07} for details. For the examples in this report we will use a constraint domain $\rdom$ whose set of basic elements is $C_\rdom = \REAL$ and whose primitives functions correspond to the usual arithmetic operations $+,\times, \ldots$ and the usual boolean-valued comparison operations $\leq, <, \ldots$ over $\REAL$. Other useful instances of constraint domains can be found in \cite{LRV07}.

{\em Atomic constraints} over $\cdom$ have the form $p(\ntup{e}{n}) == v$
\footnote{Written as $p(\ntup{e}{n}) \to!\, v$ in \cite{LRV07}.}
 with $p \in PF^n$, $e_i \in \Exp$ and $v \in \Var \cup DC^0 \cup B_\cdom$. Atomic constraints of the form $p(\ntup{e}{n}) == true$ are abbreviated as $p(\ntup{e}{n})$. In particular, $(e_1 == e_2) == true$ is abbreviated as $e_1 == e_2$. Atomic constraints of the form $(e_1 == e_2) == false$ are abbreviated as $e_1 \cneq e_2$.

{\em Compound constraints} are built from  atomic constraints using logical conjunction, existential quantification, and sometimes other logical operations. Constraints without occurrences of symbols $f \in DF$ are called {\em primitive}. We will note atomic constraints as $\delta$, sets of atomic constraints as $\Delta$, atomic primitive constraints as $\pi$, and sets of atomic primitive constraints as $\Pi$. When interpreting set of constraints, we will treat them as the conjunction of their members.

Ground substitutions $\eta$ such that $X \eta \in \GTerm$ for all $X \in \domset{\eta}$ are called {\em variable valuations} over $\cdom$.
The set of all possible variable  valuations  is noted $\mbox{Val}_\cdom$. The {\em solution set} $\Solc{\Pi} \subseteq \mbox{Val}_\cdom$ includes as members those valuations $\eta$ such that $\pi\eta$ is true in $\cdom$ for all $\pi \in \Pi$; see \cite{LRV07} for a formal definition. In case that $\Solc{\Pi} = \emptyset$ we say that $\Pi$ is {\em unsatisfiable} and we write $\Unsat{\cdom}{\Pi}$. In case that $\Solc{\Pi} \subseteq \Solc{\pi}$ we say that $\pi$ is {\em entailed} by $\Pi$ in $\cdom$ and we write $\Pi \model{\cdom} \pi$. Note that the notions defined in this paragraph only make sense for primitive constraints.

In this report we are interested in pairs consisting of a qualification domain and a constraint domain that are related in the following technical sense:

\begin{definition} [Expressing $\qdom$ in $\cdom$]
\label{dfn:expressible}
A qualification domain $\qdom$ with carrier set $D_\qdom$ is expressible in a constraint domain $\cdom$ with carrier set $C_\cdom$  if $\aqdom \subseteq C_\cdom$ and the two following requirements are satisfied:
  \begin{enumerate}
     \item There is a primitive $\cdom$-constraint $\qval{X}$ depending on the variable $X$, such that $\Solc{\qval{X}} = \{\eta \in \mbox{Val}_\cdom \mid \eta(X) \in \aqdom\}$.
    \item There is a primitive $\cdom$-constraint $\qbound{X,Y,Z}$ depending on the variables $X$, $Y$, $Z$,
    such that any $\eta \in \mbox{Val}_\cdom$ such that $\eta(X)$, $\eta(Y)$, $\eta(Z) \in \aqdom$ verifies
    $\eta \in \Solc{\qbound{X,Y,Z}} \Longleftrightarrow \eta(X) \dleq \eta(Y) \circ \eta(Z)$. \qed
  \end{enumerate}
\end{definition}

Intuitively, $\qbound{X,Y,Z}$ encodes the $\qdom$-statement $X \dleq Y \circ Z$ as a $\cdom$-constraint. As convenient notations, we will write  $\encode{X \dleq Y \circ Z}$, $\encode{X \dleq Y}$ and $\encode{X \dgeq Y}$ in place of $\qbound{X,Y,Z}$, $\qbound{X,\tp,Y}$ and $\qbound{Y,\tp,X}$, respectively. In the sequel, $\cdom$-constraints of the form $\encode{\kappa}$ are called {\em qualification constraints}, and $\Omega$ is used as notation for sets of qualification constraints. We also write $\mbox{Val}_\qdom$ for the set  of all $\mu \in \mbox{Val}_\cdom$ such that $X\mu \in \aqdom$ for all $X \in \domset{\mu}$, called {\em $\qdom$-valuations}.

Note that $\U$ can be expressed in $\rdom$, because $D_\U \setminus \{0\} = (0,1] \subseteq \REAL \subseteq C_\rdom$, $\qval{X}$ can be built as the $\rdom$-constraint $0 < X \land X \leq 1$ and $\encode{X \dleq Y \circ Z}$ can be built as the $\rdom$-constraint $X \leq Y \times Z$. Other instances of qualification domains presented in \cite{RR08} are also expressible in $\rdom$.
\section{A Qualified Declarative Programming Scheme} \label{QCFLP}

In this section we present the scheme $\qcflp{\qdom}{\cdom}$ announced in the Introduction, and we develop alternative characterizations of its declarative semantics using an interpretation transformer and a rewriting logic. The parameters $\qdom$ and $\cdom$  respectively stand for a qualification domain and a constraint domain with certain signature $\Sigma$. By convention, we only allow those instances of the scheme verifying that $\qdom$ is expressible in $\cdom$ in the sense of Definition \ref{dfn:expressible}. For example, $\qcflp{\U}{\rdom}$ is an allowed instance.

Technically, the results presented here extend similar ones known for the $\cflp{\cdom}$ sheme  \cite{LRV07}, omitting higher-order functions and adding a suitable treatment of qualifications. In particular, the qc-interpretations for $\qcflp{\qdom}{\cdom}$-programs are a natural extension of the c-interpretations for $\cflp{\cdom}$-programs introduced in \cite{LRV07}. In turn, these were inspired by the $\pi$-interpretations for the $\clp{\cdom}$ scheme proposed by Dore, Gabbrielli and Levi \cite{GL91,GDL95}.

\subsection{Programs, Interpretations and Models}

A $\qcflp{\qdom}{\cdom}$-program is a set $\Prog$ of program rules. A program rule has the form $f(\ntup{t}{n}) \qto{\alpha} r \Leftarrow \Delta$ where $f \in DF^n$, $\ntup{t}{n}$ is a  lineal sequence of $\Sigma$-terms, $\alpha \in \aqdom$ is an attenuation factor,  $r$ is a $\Sigma$-expression and $\Delta$ is a sequence of atomic $\cdom$-constraints $\delta_j ~ (1 \leq j \leq m)$,  interpreted as conjunction. The undefined symbol $\bot$ is not allowed to occur in program rules.

The library program shown in Figure \ref{library} is  an example of $\qcflp{\U}{\rdom}$-program. Leaving aside the attenuation factors, this is clearly not a confluent conditional term rewriting system. Certain program rules, as e.g. those for \texttt{guessGenre}, are intended to specify the behavior of {\em non-deterministic functions}. As argued elsewhere \cite{Rod01}, the semantics of non-deterministic functions for the purposes of Functional Logic Programming is not suitably described by ordinary rewriting.
Inspired by the approach in \cite{LRV07},  we will overcome this difficulty by designing special inference mechanisms to derive semantically meaningful statements from programs. The kind of statements that  we will consider are defined below:

\begin{definition}[qc-Statements]
\label{dfn:qc-statements}
Assume partial $\Sigma$-expression $e$,  partial $\Sigma$-terms $t, t', \ntup{t}{n}$, a qualification value $d \in \aqdom$, an atomic $\cdom$-constraint $\delta$ and a finite set of atomic primitive $\cdom$-constraints $\Pi$. A {\em qualified constrained statement} (briefly, qc-statement) $\varphi$ must have one of the following two forms:
       \begin{enumerate}
        \item \emph{qc-production} $(e \to t)\sharp d \Leftarrow \Pi$. Such a qc-statement is called \emph{trivial} iff either $t$ is $\bot$ or else $\Unsat{\cdom}{\Pi}$. Its intuitive meaning is that a rewrite sequence $e \to^* t'$ using program rules and with attached qualification value $d$ is allowed in our intended semantics for some $t' \sqsupseteq t$,
under the assumption that $\Pi$ holds.
By convention, qc-productions of the form  $(f(\ntup{t}{n}) \to t)\sharp d \Leftarrow \Pi$ with $f \in DF^n$
are called \emph{qc-facts}.
        \item \emph{qc-atom} $\delta\sharp d \Leftarrow \Pi$. Such a qc-statement is called \emph{trivial} iff $\Unsat{\cdom}{\Pi}$. Its intuitive meaning is that $\delta$ is entailed by the program rules with attached qualification value $d$, under the assumption that $\Pi$ holds. \qed
      \end{enumerate}
\end{definition}

Our semantics will use program interpretations defined as sets of qc-facts with certain closure properties. As an auxiliary tool we need the following technical notion:

\begin{definition}[$(\qdom,\cdom)$-Entailment]
\label{dfn:entailment}
Given two qc-statements $\varphi$ and $\varphi'$, we say that $\varphi$ $(\qdom,\cdom)$-entails $\varphi'$ (in symbols, $\varphi \entail{\qdom,\cdom} \varphi'$) iff one of the following two cases hold:
      \begin{enumerate}
        \item $\varphi = (e \to t)\sharp d \Leftarrow \Pi$, $\varphi' = (e' \to t')\sharp d' \Leftarrow \Pi'$, and there is some substitution $\sigma$ such that $\Pi' \model{\cdom} \Pi\sigma$, $d \dgeq d'$, $e\sigma \sqsubseteq e'$ and $t\sigma \sqsupseteq t'$.
        \item $\varphi = \delta\sharp d \Leftarrow \Pi$, $\varphi' = \delta'\sharp d' \Leftarrow \Pi'$, and there is some substitution $\sigma$ such that $\Pi' \model{\cdom} \Pi\sigma$, $d \dgeq d'$, $\delta\sigma \sqsubseteq \delta'$. \qed
      \end{enumerate}
\end{definition}

The intended meaning of $\varphi \entail{\qdom,\cdom} \varphi'$ is that $\varphi'$ follows from  $\varphi$, regardless of the interpretation of the defined function symbols $f \in DF$ occurring in $\varphi$, $\varphi'$. Intuitively, this is the case because the interpretations of defined function symbols are expected to satisfy the monotonicity properties stated for the case of primitive function symbols in Definition \ref{def:cd}. The following example may help to understand the idea:

\begin{example}[$(\U,\rdom)$-entailment]
\label{exmp:entailment}
Let $\varphi$, $\varphi'$ be defined as:
$$
\begin{array}{lll}
\varphi &:& (f(X\!:\!Xs) \rightarrow Xs)\sharp 0.8 \Leftarrow X\times X \neq 0 \\
\varphi' &:& (f(A\!:\!(B\!:\![\,])) \rightarrow \bot\!:\!\bot)\sharp 0.7 \Leftarrow A < 0
\end{array}
$$

\noindent Then $\varphi \entail{\U,\rdom} \varphi'$ with $\sigma = \{ X \mapsto A,\ Xs \mapsto B\!:\!\bot\}$ because:
\begin{itemize}
\item $\Pi' \model{\rdom} \Pi\sigma$, since $\Pi' = \{ A < 0 \}$, $\Pi\sigma= \{ X\times X \neq 0 \}\sigma = \{ A\times A \neq 0 \}$,
and $A \times A \neq 0$ is entailed by $A < 0$ in $\rdom$.
\item $d \dgeq d'$ holds  in $\U$, since $d = 0.8 \geq 0.7 = d'$.
\item $e\sigma \sqsubseteq e'$, since $e\sigma = f(X\!:\!Xs)\sigma =  f(A\!:\!(B\!:\!\bot)) \sqsubseteq f(A\!:\!(B\!:\![\,])) = e'$.
\item $t\sigma \sqsupseteq t'$, since $t\sigma = Xs\sigma =  B\!:\!\bot \sqsupseteq \bot\!:\!\bot = t'$. \qed
\end{itemize}
\end{example}

Now we can define program interpretations as follows:

\begin{definition}[qc-Interpretations] 
\label{dfn:interpretation}
A \emph{qualified constrained interpretation} (or \emph{qc-interpretation}) over $\qdom$ and $\cdom$ is a set $\I$ of qc-facts including all trivial and entailed qc-facts. In other words, a set $\I$ of qc-facts such that $\closure{\qdom,\cdom}{\I} \subseteq \I$, where the closure over $\qdom$ and $\cdom$ of $\I$ is defined as: $$\closure{\qdom,\cdom}{\I} \eqdef \{ \varphi \mid \varphi \text{ trivial}\} \cup \{\varphi' \mid \varphi \entail{\qdom,\cdom} \varphi' \text{ for some } \varphi \in \I\} \enspace .$$ We write $\Intdc$ for the set of all qc-interpretations over $\qdom$ and $\cdom$.
\end{definition}

\begin{figure}[ht]
  \small
  \centering
    \begin{tabular}{|@{\hspace*{0.5cm}}l@{\hspace*{0.5cm}}|}\hline
      \\
      \textbf{QTI} $\quad$ $\displaystyle\frac{}{\quad\varphi\quad}$ ~
      if $\varphi$ is a trivial qc-statement.\\
      \\
      \textbf{QRR} $\quad$ $\displaystyle\frac{}{(v \to v)\sharp d \Leftarrow \Pi}$ ~
      if $v \in \Var \cup B_\cdom$ and $d \in \aqdom$.\\
      \\
      \textbf{QDC} $\quad$ $\displaystyle\frac
        {(~ (e_i \to t_i)\sharp d_i \Leftarrow \Pi ~)_{i = 1 \ldots n}}
        {(c(\ntup{e}{n}) \to c(\ntup{t}{n}))\sharp d \Leftarrow \Pi}$ ~
      \begin{tabular}{l}
        if $c \in DC^n$ and $d \in \aqdom$ \\
        verifies $d \dleq d_i$ $(1 \le i \le n)$.\\
      \end{tabular} \\
      \\
      \textbf{QDF$_{\I}$} $\quad$ $\displaystyle\frac
        {(~ (e_i \to t_i)\sharp d_i \Leftarrow \Pi ~)_{i = 1 \ldots n}}
        {(f(\ntup{e}{n}) \to t)\sharp d \Leftarrow \Pi}$ \\ \\
      if $f \in DF^n$, non-trivial $((f(\ntup{t}{n}) \to t) \sharp d_0 \Leftarrow \Pi) \in \I$ \\
      and $d \in \aqdom$ verifies $d \dleq d_i ~ (0 \le i \le n)$.\\
      \\
      \textbf{QPF} $\quad$ $\displaystyle\frac
        {(~ (e_i \to t_i)\sharp d_i \Leftarrow \Pi ~)_{i=1 \ldots n}}
        {(p(\ntup{e}{n}) \to v)\sharp d \Leftarrow \Pi}$ ~
      if $p \in PF^n$, $v \in \Var \cup DC^0 \cup B_\cdom$, \\ \\
      $\Pi \model{\cdom} p(\ntup{t}{n}) \to v$ and $d \in \aqdom$ verifies $d \dleq d_i ~ (1 \le i \le n)$.\\
      \\
      \textbf{QAC} $\quad$ $\displaystyle\frac
        {(~ (e_i \to t_i)\sharp d_i \Leftarrow \Pi ~)_{i=1 \ldots n}}
        {(p(\ntup{e}{n}) == v)\sharp d \Leftarrow \Pi}$ ~
      if $p \in PF^n$, $v \in \Var \cup DC^0 \cup B_\cdom$, \\ \\
      $\Pi \model{\cdom} p(\ntup{t}{n}) == v$ and $d \in \aqdom$ verifies $d \dleq d_i ~ (1 \le i \le n)$.\\
      \\\hline
    \end{tabular}
  \caption{Qualified Constrained Rewriting Logic for Interpretations}
  \label{fig:iqcrwl}
\end{figure}

Given a qc-interpretation $\I$, the inference rules displayed in Fig. \ref{fig:iqcrwl} are used to derive qc-statements from the qc-facts belonging to $\I$. The inference system consisting of these rules is called {\em Qualified Constrained Rewriting Logic for Interpretations} and noted as $\IQCRWLdc$. The notation $\I \iqcrwldc \varphi$ is used to indicate that $\varphi$ can be derived from $\I$ in  $\IQCRWLdc$. By convention, we agree that no other  inference rule is used whenever \textbf{QTI} is applicable. Therefore, trivial qc-statements can only be inferred by rule \textbf{QTI}. As usual in formal inference systems, $\IQCRWLdc$ proofs can be represented as trees whose nodes correspond to inference steps.

In the sequel, the inference rules \textbf{QDF}$_\I$, \textbf{QPF} and \textbf{QAC} will be called {\em crucial}. The notation $|\mathcal{T}|$  will denote the number of inference steps within the proof tree $\mathcal{T}$ that are {\em not crucial}.  Proof trees with no crucial inferences (i.e. such that $|\mathcal{T}| = 0$) will be called {\em easy}. The following lemma states some technical properties of $\IQCRWLdc$.

\begin{lemma}[Some properties of $\IQCRWLdc$]
\label{lem:iqcrwl-properties}
\begin{enumerate}
  \item\label{lem:iqcrwl-properties:1} Approximation property: For any non-trivial $\varphi$ of the form $(t \to t')\sharp d \Leftarrow \Pi$ where $t, t' \in \Term$, the three following affirmations are equivalent:
  (a) $t \sqsupseteq t'$;
  (b) $\I \iqcrwldc \varphi$ with an easy proof tree; and
  (c) $\I \iqcrwldc \varphi$.
  \item\label{lem:iqcrwl-properties:2} Primitive c-atoms: For any primitive c-atom $p(\ntup{t}{n}) == v$, one has $\I \iqcrwldc (p(\ntup{t}{n}) == v)\sharp d \Leftarrow \Pi \iff  \Pi \model{\cdom} p(\ntup{t}{n}) == v$.
  \item\label{lem:iqcrwl-properties:3} Entailment property: $\I \iqcrwldc \varphi$  with a proof tree $\mathcal{T}$
  and $\varphi \entail{\qdom,\cdom} \varphi'$ $\Longrightarrow$ $\I \iqcrwldc \varphi'$ with a proof tree $\mathcal{T'}$ such that  $|\mathcal{T'}| \leq |\mathcal{T}|$.
  \item\label{lem:iqcrwl-properties:4} Conservation property: For any qc-fact $\varphi$, one has $\I \iqcrwldc \varphi \iff \varphi \in \I$.
\end{enumerate}
\end{lemma}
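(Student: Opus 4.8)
The plan is to establish the four properties in the stated order, each by induction on the structure of $\IQCRWLdc$ proof trees (or, where handier, on the structure of terms) together with a case analysis on the last inference rule. The organising observation, used throughout, is that a non-trivial statement is never concluded by \textbf{QTI}, so the root rule of a proof of a non-trivial $\varphi$ is fixed by the shape of $\varphi$: a production whose left-hand side is a constructor term can only arise from \textbf{QRR} or \textbf{QDC}, one rooted by $f \in DF$ only from \textbf{QDF}$_\I$, one rooted by $p \in PF$ only from \textbf{QPF}, and a c-atom only from \textbf{QAC}.

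For Part~\ref{lem:iqcrwl-properties:1} I would first prove a mild generalisation of (a)$\Rightarrow$(b): if $t' \in \Term$, $e' \in \Exp$ and $t' \leqinfo e'$, then $(e' \to t')\sharp d \Leftarrow \Pi$ has an easy proof, by induction on $t'$, using \textbf{QTI} (case $t' = \bot$), \textbf{QRR} (case $t' \in \Var \cup B_\cdom$, where $t' \leqinfo e'$ forces $e' = t'$) and \textbf{QDC} (constructor case, where $t' \leqinfo e'$ forces $e'$ to share the head of $t'$). Taking $e'$ to be a term yields (a)$\Rightarrow$(b); (b)$\Rightarrow$(c) is immediate; and (c)$\Rightarrow$(a) is an induction on the proof tree whose only admissible root rules for a non-trivial term production are \textbf{QRR} (forcing $t = t'$) and \textbf{QDC} (forcing equal heads and, by the induction hypothesis and the fact that trivial subgoals force $t'_i = \bot$, giving $t_i \geqinfo t'_i$, hence $t \geqinfo t'$ by context-compatibility of $\leqinfo$). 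Part~\ref{lem:iqcrwl-properties:2} then reduces to \textbf{QAC}: for $\Leftarrow$ I apply \textbf{QAC} with the reflexive premises $(t_i \to t_i)\sharp d \Leftarrow \Pi$ from Part~\ref{lem:iqcrwl-properties:1}; for $\Rightarrow$ the trivial case is immediate from $\Unsat{\cdom}{\Pi}$, and otherwise the derivation ends in \textbf{QAC} with premises $(t_i \to s_i)$ giving $t_i \geqinfo s_i$ (Part~\ref{lem:iqcrwl-properties:1}) and side condition $\Pi \model{\cdom} p(\ntup{s}{n}) == v$, so that \emph{monotonicity} of $p^\cdom$ in its arguments lifts the entailment up to $\ntup{t}{n}$.

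Part~\ref{lem:iqcrwl-properties:3} is the core of the lemma and the place I expect the real work. I would argue by induction on $\mathcal{T}$ and split on its root rule, always transporting the constraint store along $\sigma$ via stability of $\model{\cdom}$ under substitution (if $\eta \in \Solc{\Pi'}$ then $\sigma\eta \in \Solc{\Pi}$) combined with $\Pi' \model{\cdom} \Pi\sigma$, and keeping $d' \dleq d \dleq d_i$ so that every qualification side condition survives. The \textbf{QTI}, \textbf{QRR}, \textbf{QDC} cases are routine refinements driven by the structural shape of $\leqinfo$ and the induction hypothesis on immediate subtrees, each preserving the crucial-step count because these rules are non-crucial. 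The delicate cases are the crucial rules. For \textbf{QDF}$_\I$ the stored fact $(f(\ntup{t}{n}) \to t)\sharp d_0 \Leftarrow \Pi \in \I$ cannot be reused verbatim; instead I would manufacture the instance $(f(\ntup{t\sigma}{n}) \to t'')\sharp d' \Leftarrow \Pi'$, observe that the original fact $(\qdom,\cdom)$-entails it, and conclude it lies in $\I$ by closure ($\closure{\qdom,\cdom}{\I} \subseteq \I$); the matching premises $(e''_i \to t_i\sigma)$ come from the induction hypothesis, and re-applying \textbf{QDF}$_\I$ adds exactly one crucial step on each side, preserving the bound. For \textbf{QPF} (and symmetrically \textbf{QAC}) the refined result $t''$ satisfies $t'' \leqinfo v\sigma$, so antimonotonicity of $p^\cdom$ in its result lowers the entailed value from $v\sigma$ to $t''$, while \emph{radicality} is exactly what forces this a priori arbitrary $t''$ to be either $\bot$ (the trivial case, discharged by \textbf{QTI}) or a member of $\Var \cup DC^0 \cup B_\cdom$, so that the rule applies again. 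The main obstacle is precisely this bookkeeping in the crucial cases: re-synthesising an admissible fact or value under the refinement $\sigma$ while simultaneously certifying that the number of crucial steps does not increase.

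Finally, Part~\ref{lem:iqcrwl-properties:4} follows the same pattern. For $\Leftarrow$, a qc-fact $\varphi \in \I$ is re-derived by a single \textbf{QDF}$_\I$ step taking $\varphi$ itself as the stored fact and the reflexive premises $(t_i \to t_i)$ from Part~\ref{lem:iqcrwl-properties:1} (a trivial $\varphi$ going through \textbf{QTI}). For $\Rightarrow$, a non-trivial qc-fact can only be concluded by \textbf{QDF}$_\I$; the stored fact $\psi \in \I$ shares its result and has qualification $d_0 \dgeq d$, and Part~\ref{lem:iqcrwl-properties:1} applied to the premises yields $s_i \leqinfo t_i$, whence $\psi \entail{\qdom,\cdom} \varphi$ via the identity substitution and therefore $\varphi \in \closure{\qdom,\cdom}{\I} \subseteq \I$.
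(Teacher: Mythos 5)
Your proposal is correct and follows essentially the same route as the paper's proof: structural induction for the approximation property, \textbf{QAC} with reflexive premises plus monotonicity of $p^\cdom$ for primitive c-atoms, induction on the proof tree with a case split on the root rule — using closure of $\I$ under $(\qdom,\cdom)$-entailment to manufacture the new stored fact in the \textbf{QDF}$_\I$ case and the radicality/monotonicity of primitives in the \textbf{QPF}/\textbf{QAC} cases — for the entailment property, and a single \textbf{QDF}$_\I$ step (resp.\ entailment plus closure via the identity substitution) for the two directions of conservation. Your explicit generalisation of (a)$\Rightarrow$(b) to $t' \leqinfo e'$ with $e'$ an expression, and your explicit appeal to radicality where the paper only says the result value ``can be assumed'' to lie in $\Var \cup DC^0 \cup B_\cdom$, are slightly more detailed than the paper's text but are elaborations of the same argument, including the same accounting of non-crucial steps for $|\mathcal{T'}| \leq |\mathcal{T}|$.
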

\begin{proof}
We argue separately for each of the four properties:

\smallskip
\noindent [$\mathit{\ref{lem:iqcrwl-properties:1}.}$] (\emph{Approximation property}).
The terms $t$ and $t'$ involve neither defined nor primitive function symbols.
Due to the form of the $\IQCRWLdc$ inference rules, a proof of the qc-statement $(t \to t')\sharp d \Leftarrow \Pi$ will involve no crucial inferences and it will succeed iff $t \sqsupseteq t'$.
A formal proof can be easily obtained reasoning by induction on the syntactic size of $t$, similarly as in  item 3. of Lemma 1 from \cite{LRV07}.

\smallskip
\noindent [$\mathit{\ref{lem:iqcrwl-properties:2}.}$] (\emph{Primitive c-atoms}).
Let $\varphi$ be $(p(\ntup{t}{n}) == v)\sharp d \Leftarrow \Pi$. If $\varphi$ is trivial,
then $\I \iqcrwldc (p(\ntup{t}{n}) == v)\sharp d \Leftarrow \Pi$ can be proved with just one {\bf QTI} inference,
and $\Pi \model{\cdom} p(\ntup{t}{n}) == v$ also holds because of $\Unsat{\cdom}{\Pi}$.
If $\varphi$ is not trivial, then:

  \begin{itemize}
    \item ($\Leftarrow$) Assume $\Pi \model{\cdom} p(\ntup{t}{n}) == v$.
    Then $\I \iqcrwldc (p(\ntup{t}{n}) == v)\sharp d \Leftarrow \Pi$ can be obtained with a proof of the form
      $$
      \displaystyle\frac
      {(~ (t_i \to t_i)\sharp \tp \Leftarrow \Pi ~)_{i=1 \ldots n}}
      {(p(\ntup{t}{n}) == v)\sharp d \Leftarrow \Pi} \text{ QAC}
      $$
      where each of the $n$ premises has an easy $\IQCRWLdc$-proof  due to the approximation property
      (since $t_i \sqsupseteq t_i$).
      \item ($\Rightarrow$) Assume now $\I \iqcrwldc (p(\ntup{t}{n}) == v)\sharp d \Leftarrow \Pi$. The $\IQCRWLdc$-proof
      will have the form
      $$
      \displaystyle\frac
      {(~ (t_i \to t'_i)\sharp d_i \Leftarrow \Pi ~)_{i=1 \ldots n}}
      {(p(\ntup{t}{n}) == v)\sharp d \Leftarrow \Pi} \text{ QAC}
      $$
      where $\Pi \model{\cdom} p(\ntup{t'}{n}) == v$ and
      $\I \iqcrwldc (t_i \to t_i)\sharp d_i \Leftarrow \Pi$, $d \dleq d_i$ hold for all $1 \leq i \leq n$.
      Due to the approximation property, we can conclude that $t_i \sqsupseteq t'_i$ holds for $1 \leq i \leq n$,
      which implies  $\Pi \model{\cdom} p(\ntup{t}{n}) == v$  because of  the monotonic behavior of primitive functions in constraint domains.
  \end{itemize}

\smallskip
\noindent [$\mathit{\ref{lem:iqcrwl-properties:3}.}$] (\emph{Entailment property}).
Assume $\I \iqcrwldc \varphi$  with a $\IQCRWLdc$-proof tree $\mathcal{T}$.
We must prove that $\I \iqcrwldc \varphi'$ with some  proof tree $\mathcal{T'}$ such that  $|\mathcal{T'}| \leq |\mathcal{T}|$.
If $\varphi'$ results trivial, then it is proved with just one \textbf{QTI} inference step,
and therefore $|\mathcal{T'}| = 0 \leq |\mathcal{T}|$.
In the sequel, we assume $\varphi'$ non-trivial and we reason by induction on the number of inference steps within $\mathcal{T}$.  We distinguish cases according to the inference step at the root of $\mathcal{T}$:
 \begin{itemize}
    \item \textbf{QTI}: From Definition \ref{dfn:entailment} it is easy to check that $\varphi'$ must be trivial whenever $\varphi \entail{\qdom,\cdom} \varphi'$ and $\varphi$ is trivial. Since we are assuming that $\varphi'$ is not trivial, this case cannot happen.
    \item \textbf{QRR}: In this case $\varphi$ is of the form $(v \to v)\sharp d \Leftarrow \Pi$ with either $v \in B_\cdom$ or $v \in \Var$. Since $\varphi \entail{\qdom,\cdom} \varphi'$, we assume $\varphi' : (v' \to v')\sharp d' \Leftarrow \Pi'$ with $\Pi' \model{\cdom} \Pi\sigma$, $d \dgeq d'$ and $v\sigma = v'$ for some substitution $\sigma$. If $v \in B_\cdom$, then also $v' \in B_\cdom$ and $\I \iqcrwldc \varphi'$ can be proved with a proof tree $\mathcal{T'}$ consisting of just one \textbf{QRR} inference step. If $v \in \Var$, then $v' \in \Term$, and $\I \iqcrwldc \varphi'$ can be  proved with a proof tree $\mathcal{T'}$ consisting only of \textbf{QDC} and \textbf{QRR} inferences. In both cases, $|\mathcal{T'}| = 0 \leq |\mathcal{T}|$.

    \item \textbf{QDC}: In this case $\varphi : (c(\ntup{e}{n}) \to c(\ntup{t}{n}))\sharp d \Leftarrow \Pi$ and $\mathcal{T}$ has the form $$\displaystyle\frac{(~ (e_i \to t_i)\sharp d_i \Leftarrow \Pi ~)_{i=1 \ldots n}}{(c(\ntup{e}{n}) \to c(\ntup{t}{n}))\sharp d \Leftarrow \Pi}  \text{ QDC}$$ where $c \in DC^n$,
$\I \iqcrwldc (e_i \to t_i)\sharp d_i \Leftarrow \Pi$ with proof tree $\mathcal{T}_i$,
 and $d \dleq d_i ~ (1 \le i \le n)$. Since $\varphi \entail{\qdom,\cdom} \varphi'$, we can assume that $\varphi'$ has the form $(c(\ntup{e'}{n}) \to c(\ntup{t'}{n}))\sharp d' \Leftarrow \Pi'$ with $e_i\sigma \sqsubseteq e'_i ~ (1 \le i \le n)$, $c(\ntup{t}{n})\sigma \sqsupseteq c(\ntup{t'}{n})$, $d \dgeq d'$ and $\Pi' \model{\cdom} \Pi\sigma$ for some substitution $\sigma$. For $1 \leq i \leq n$, we clearly obtain
 $(e_i \to t_i)\sharp d_i \Leftarrow \Pi \entail{\qdom,\cdom} (e'_i \to t'_i)\sharp d_i \Leftarrow \Pi'$,
 and by induction hypothesis we can assume $\I \iqcrwldc (e'_i \to t'_i)\sharp d_i \Leftarrow \Pi'$
 with proof tree $\mathcal{T'}_i$ such that $|\mathcal{T'}_i| \leq |\mathcal{T}_i|$.
 Then we get $\I  \iqcrwldc (c(\ntup{e'}{n}) \to c(\ntup{t'}{n}))\sharp d' \Leftarrow \Pi'$ with a
 proof tree $\mathcal{T'}$ such that $|\mathcal{T'}| \leq |\mathcal{T}|$.
More precisely,  $\mathcal{T'}$ has the form
$$\displaystyle\frac{(~ (e'_i \to t'_i)\sharp d_i \Leftarrow \Pi' ~)_{i=1 \ldots n}}{(c(\ntup{e'}{n}) \to c(\ntup{t'}{n}))\sharp d' \Leftarrow \Pi'}  \text{ QDC}$$ where $d' \dleq d_i$ follows from $d' \dleq d \dleq d_i ~ (1 \le i \le n)$ and each
premise is proved by $\mathcal{T'}_i$.

    \item \textbf{QDF}$_\I$: In this case   $\varphi : (f(\ntup{e}{n}) \to t)\sharp d \Leftarrow \Pi$ and $\mathcal{T}$ has the form $$\displaystyle\frac{(~ (e_i \to t_i)\sharp d_i \Leftarrow \Pi ~)_{i=1 \ldots n}}{(f(\ntup{e}{n}) \to t)\sharp d \Leftarrow \Pi} \text{ QDF}_\I$$ where $f \in DF^n$ and there is some non-trivial $\psi = (f(\ntup{t}{n}) \to t)\sharp d_0 \Leftarrow \Pi)$ such that $\psi \in \I$, $\I \iqcrwldc (e_i \to t_i)\sharp d_i \Leftarrow \Pi$ with proof tree $\mathcal{T}_i$ and $d \dleq d_i ~ (0 \le i \le n)$. Since $\varphi \entail{\qdom,\cdom} \varphi'$, we can assume $\varphi' = (f(\ntup{e'}{n}) \to t')\sharp d' \Leftarrow \Pi'$ with $e_i\sigma \sqsubseteq e'_i ~ (1 \le i \le n)$, $t\sigma \sqsupseteq t'$, $d \dgeq d'$ and $\Pi' \model{\cdom} \Pi\sigma$ for some substitution $\sigma$.
For $1 \leq i \leq n$, we get
$(e_i \to t_i)\sharp d_i \Leftarrow \Pi \entail{\qdom,\cdom} (e'_i \to t_i\sigma)\sharp d_i \Leftarrow \Pi'$,
and by induction hypothesis we can assume
$\I \iqcrwldc (e'_i \to t_i\sigma)\sharp d_i \Leftarrow \Pi'$
with proof tree $\mathcal{T'}_i$ such that $|\mathcal{T'}_i| \leq |\mathcal{T}_i|$.
Consider now $\psi' = ((f(\ntup{t}{n})\sigma \to t')\sharp d_0 \Leftarrow \Pi')$.
Clearly, $\psi \entail{\qdom,\cdom} \psi'$ and therefore $\psi' \in \I$ because $\I$ is closed under $(\qdom, \cdom)$-entailment. Using this $\psi'$ we get $\I  \iqcrwldc (f(\ntup{e'}{n}) \to t')\sharp d' \Leftarrow \Pi'$ with a
 proof tree $\mathcal{T'}$ such that $|\mathcal{T'}| \leq |\mathcal{T}|$.
More precisely,  $\mathcal{T'}$ has the form
$$\displaystyle\frac{(~ (e'_i \to t_i\sigma)\sharp d_i \Leftarrow \Pi' ~)_{i=1 \ldots n}}{(f(\ntup{e'}{n}) \to t')\sharp d' \Leftarrow \Pi'} \text{ QDF}_\I$$
where $d' \dleq d_i$ follows from $d' \dleq d \dleq d_i ~ (0 \le i \le n)$ and each
premise is proved by $\mathcal{T'}_i$.

    \item \textbf{QPF}: In this case $\varphi : (p(\ntup{e}{n}) \to v)\sharp d \Leftarrow \Pi$ and $\mathcal{T}$ has the form $$\displaystyle\frac{(~ (e_i \to t_i)\sharp d_i \Leftarrow \Pi ~)_{i=1 \ldots n}}{(p(\ntup{e}{n}) \to v)\sharp d \Leftarrow \Pi} \text{ QPF}$$ where $p \in PF^n$, $v \in \Var \cup DC^0 \cup B_\cdom$, $\Pi \model{\cdom} p(\ntup{t}{n}) \to v$,
$d \dleq d_i $ and $\I \iqcrwldc (e_i \to t_i)\sharp d_i \Leftarrow \Pi$ with proof tree $\mathcal{T}_i\, (1 \leq i \leq n)$.
Since  $\varphi \entail{\qdom,\cdom} \varphi'$, we can assume $\varphi'$ to be of the form $(p(\ntup{e'}{n}) \to v')\sharp d' \Leftarrow \Pi'$ with $e_i\sigma \sqsubseteq e'_i ~ (1 \le i \le n)$, $v\sigma \sqsupseteq v'$, $d \dgeq d'$ and $\Pi' \model{\cdom} \Pi\sigma$ for some substitution $\sigma$.
For $1 \leq i \leq n$, we get
$(e_i \to t_i)\sharp d_i \Leftarrow \Pi \entail{\qdom,\cdom} (e'_i \to t_i\sigma)\sharp d_i \Leftarrow \Pi'$,
and by induction hypothesis we can assume
$\I \iqcrwldc (e'_i \to t_i\sigma)\sharp d_i \Leftarrow \Pi'$
with proof tree $\mathcal{T'}_i$ such that $|\mathcal{T'}_i| \leq |\mathcal{T}_i|$.
Moreover, we can also assume
$v' \in \Var \cup DC^0 \cup B_\cdom$ because $p$ is a primitive function symbol and $\varphi'$ is not trivial.
From $v, v' \in \Var \cup DC^0 \cup B_\cdom$ and $v\sigma \sqsupseteq v'$ we can conclude that $v\sigma = v'$.
Then,  from $\Pi \model{\cdom} p(\ntup{t}{n}) \to v$ and $\Pi' \model{\cdom} \Pi\sigma$ we can deduce
$\Pi' \model{\cdom} p(\ntup{t}{n})\sigma \to v'$.
Putting everything together, we get
$\I  \iqcrwldc (p(\ntup{e'}{n}) \to v')\sharp d' \Leftarrow \Pi'$
with a proof tree $\mathcal{T'}$ such that $|\mathcal{T'}| \leq |\mathcal{T}|$.
More precisely,  $\mathcal{T'}$ has the form
$$\displaystyle\frac{(~ (e'_i \to t_i\sigma)\sharp d_i \Leftarrow \Pi' ~)_{i=1 \ldots n}}{(p(\ntup{e'}{n}) \to v')\sharp d' \Leftarrow \Pi'} \text{ QPF}$$
where $d' \dleq d_i$ follows from $d' \dleq d \dleq d_i ~ (1 \le i \le n)$ and each
premise is proved by $\mathcal{T'}_i$.

    \item \textbf{QAC}: Similar to the case for \textbf{QPF}.
  \end{itemize}

\smallskip
\noindent [$\mathit{\ref{lem:iqcrwl-properties:4}.}$] (\emph{Conservation property}).
Assume $\varphi : (f(\ntup{t}{n}) \to t)\sharp d \Leftarrow \Pi$. In the case that $\varphi$ is a trivial qc-fact, it is true by definition of qc-interpretation that $\varphi \in \I$, and $\I \iqcrwldc \varphi$ follows by rule \textbf{QTI}. Therefore the property is satisfied for trivial qc-facts. If $\varphi$ is not trivial, we prove each implication as follows:
  \begin{itemize}
    \item ($\Leftarrow$) Assume $\varphi \in \I$. Then $\I \iqcrwldc \varphi$ with a $\IQCRWLdc$-proof tree of the form:
      $$
      \displaystyle\frac
      {(~ (t_i \to t_i)\sharp \tp \Leftarrow \Pi ~)_{i=1 \ldots n}}
      {(f(\ntup{t}{n}) \to t)\sharp d \Leftarrow \Pi} \text{ QDF}_\I \text{ using } \varphi \in \I
      $$ where each premise has an easy  $\IQCRWLdc$-proof tree due to the approximation property, and  $d \dleq d, \tp$ hold trivially.
    \item ($\Rightarrow$) Assume $\I \iqcrwldc \varphi$. As $\varphi$ is not trivial, there is a $\IQCRWLdc$-proof tree of the form:
      $$
      \displaystyle\frac
      {(~ (t_i \to t'_i)\sharp d_i \Leftarrow \Pi ~)_{i=1 \ldots n}}
      {(f(\ntup{t}{n}) \to t)\sharp d \Leftarrow \Pi} \text{ QDF}_\I
      \text{ using } \varphi' = (f(\ntup{t'}{n}) \to t) \sharp d' \Leftarrow \Pi) \in \I
      $$
      where $d \dleq d', d_i$ and  $\I \iqcrwldc (t_i \to t'_i)\sharp d_i \Leftarrow \Pi\, (1 \le i \le n)$.
      For each $1 \leq i \leq n$, we claim that $t'_i \sqsubseteq t_i$.
      If $t'_i = \bot$ the claim is trivial.
      If $t'_i \neq \bot$, then $(t_i \to t'_i)\sharp d_i \Leftarrow \Pi$ is a non-trivial qc-production
      and the claim follows from $\I \iqcrwldc (t_i \to t'_i)\sharp d_i \Leftarrow \Pi$
      and the approximation property.
      Now, the claim together with $\Pi \model{\cdom} \Pi$, $d' \dgeq d$ and $t \sqsupseteq t$
      yields $\varphi' \entail{\qdom,\cdom} \varphi$.
      Since $\varphi' \in \I$ and $\I$ is closed under $(\qdom, \cdom)$-entailment,
      we can conclude that $\varphi \in \I$. \qed
       \end{itemize}
\end{proof}

Next, we can define program models and semantic consequence, adapting ideas from the so-called {\em strong semantics} of \cite{LRV07}.
\footnote{Weak models and weak semantic consequence could be also defined similarly as in \cite{LRV07}, but strong semantics suffices for the purposes of this report.}

\begin{definition}[Models and semantic consequence] 
\label{dfn:model}
Let a $\qcflp{\qdom}{\cdom}$-program $\Prog$  be given.
\begin{enumerate}
\item
A qc-interpretation $\I$ is a model of  $R_l : (f(\ntup{t}{n}) \qto{\alpha} r \Leftarrow \ntup{\delta}{m}) \in \Prog$
(in symbols, $\I \model{\qdom,\cdom} R_l$) iff
for every  substitution $\theta$,
for every set of atomic primitive $\cdom$-constraints $\Pi$,
for every c-term $t \in \Term$ and for all $d, d_0,\ldots,d_m \in \aqdom$ such that
$\I \iqcrwldc \delta_i\theta\sharp d'_i \Leftarrow \Pi\, (1 \le i \le m)$,
$\I \iqcrwldc (r\theta \to t)\sharp d'_0 \Leftarrow \Pi$ and
$d \dleq \alpha \circ d_i\, (0 \le i \le m)$, one has
$((f(\ntup{t}{n})\theta \to t)\sharp d \Leftarrow \Pi) \in \I$.
\item
A qc-interpretation $\I$ is a \emph{model} of $\Prog$ (in symbols, $\I \model{\qdom,\cdom} \Prog$) iff
$\I$ is a model of every program rule belonging to $\Prog$.
\item
A qc-statement $\varphi$ is a semantic consequence of $\Prog$ (in symbols, $\Prog \model{\qdom,\cdom} \varphi$) iff
$\I \iqcrwldc \varphi$ holds for every qc-interpretation $\I$ such that  $\I \model{\qdom,\cdom} \Prog$. \qed
\end{enumerate}
\end{definition}

\subsection{Least Models}

We  will now present two different characterizations for the least model of a given program $\Prog$: in the first place as a least fixpoint of an interpretation transformer
and in the second place as the set of qc-facts derivable from $\Prog$ in a special rewriting logic.

\subsection*{\it A fixpoint characterization of least models.}

A well-known way of characterizing least program models is to exploit the lattice structure of the family of all program interpretations to obtain the least model of a given program $\Prog$ as the least fixpoint of an interpretation transformer related to $\Prog$. Such characterizations are know for logic programming \cite{Llo87,Apt90}, constraint logic programming \cite{GL91,GDL95,JMM+98}, constraint functional logic programming \cite{LRV07} and qualified logic programming \cite{RR08}.
Our approach here extends that in \cite{LRV07} by adding qualification values.

The next result, whose easy proof is omitted, provides a lattice structure of program interpretations:

\begin{proposition}[Interpretations Lattice]
\label{prop:interpretation-lattice}
$\Intdc$ defined as the set of all qc-interpretations over the qualification domain $\qdom$ and the constraint domain $\cdom$ is a complete lattice w.r.t. the set inclusion ordering ($\subseteq$). Moreover, the bottom element $\ibot$ and the top element $\itop$ of this lattice are characterized as $\ibot = \closure{\qdom,\cdom}{\{\varphi \mid \varphi \mbox{ is a trivial qc-fact}\}}$ and $\itop = \{\varphi \mid \varphi \mbox{ is any qc-fact}\}$.
\end{proposition}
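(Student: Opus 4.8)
The plan is to recognize $\Intdc$ as the family of fixed points of the closure operator $\closure{\qdom,\cdom}{\cdot}$ and to invoke the standard fact that any family of subsets of a fixed universe which is closed under arbitrary intersections and contains the whole universe is a complete lattice under $\subseteq$. Here the universe is the set of all qc-facts, which is exactly $\itop$. By Definition~\ref{dfn:interpretation} a set $\I$ of qc-facts is a qc-interpretation iff $\closure{\qdom,\cdom}{\I} \subseteq \I$; since $\entail{\qdom,\cdom}$ is reflexive (via the identity substitution, using $\Pi \model{\cdom} \Pi$, together with $e \sqsubseteq e$ and $t \sqsupseteq t$), the closure is extensive, $\I \subseteq \closure{\qdom,\cdom}{\I}$, so this condition is equivalent to $\I$ being a fixed point $\closure{\qdom,\cdom}{\I} = \I$.

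First I would check that $\closure{\qdom,\cdom}{\cdot}$ is monotone: if $\I \subseteq \J$ the ``trivial'' part of the closure is unchanged and the ``entailed'' part can only grow, so $\closure{\qdom,\cdom}{\I} \subseteq \closure{\qdom,\cdom}{\J}$. Monotonicity immediately gives that $\Intdc$ is closed under arbitrary intersections: for a family $\{\I_k\}_{k \in K}$ of qc-interpretations and $\I = \bigcap_{k \in K} \I_k$, from $\I \subseteq \I_j$ we get $\closure{\qdom,\cdom}{\I} \subseteq \closure{\qdom,\cdom}{\I_j} \subseteq \I_j$ for every $j$, hence $\closure{\qdom,\cdom}{\I} \subseteq \I$ and $\I \in \Intdc$. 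Next I would verify that $\itop$, the set of all qc-facts, is itself a qc-interpretation: its closure is again a set of qc-facts, hence contained in $\itop$; moreover every qc-interpretation is by definition a set of qc-facts and therefore contained in $\itop$, so $\itop$ is the greatest element. These two facts (closure under intersection, existence of a top element) suffice to conclude that $(\Intdc,\subseteq)$ is a complete lattice, with the glb of a family given by its intersection and the lub by the least qc-interpretation containing the union, i.e. $\bigcap\{\J \in \Intdc \mid \bigcup_k \I_k \subseteq \J\}$, a well-defined intersection since $\itop$ belongs to the collection. With the usual convention $\bigcap \emptyset = \itop$, the glb and lub of the empty family come out as $\itop$ and $\ibot$ respectively, as expected.

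It remains to identify the bottom element. Since $\Intdc$ is closed under intersection, $\bigcap \Intdc$ is a qc-interpretation and is clearly the least one; I would then show it coincides with the announced $\closure{\qdom,\cdom}{\{\varphi \mid \varphi~\mbox{a trivial qc-fact}\}}$. The key observation, already recorded in the \textbf{QTI} case of Lemma~\ref{lem:iqcrwl-properties}, is that a trivial qc-statement $(\qdom,\cdom)$-entails only trivial qc-statements. Consequently the set $T$ of all trivial qc-facts satisfies $\closure{\qdom,\cdom}{T} = T$, so $T$ is a qc-interpretation; and since every qc-interpretation contains all trivial qc-facts, $T$ is contained in each of them. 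Hence $\bigcap \Intdc = T = \closure{\qdom,\cdom}{T} = \ibot$.

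The only point needing a little care is the ``trivial entails only trivial'' fact underlying the bottom element: if $\varphi$ is trivial and $\varphi \entail{\qdom,\cdom} \varphi'$ via a substitution $\sigma$, one argues that $t = \bot$ forces $t\sigma \sqsupseteq t'$ to give $t' = \bot$, while $\Unsat{\cdom}{\Pi}$ forces $\Unsat{\cdom}{\Pi\sigma}$ and then, through $\Pi' \model{\cdom} \Pi\sigma$, also $\Unsat{\cdom}{\Pi'}$, so that $\varphi'$ is trivial in either case. This uses only that the undefined symbol $\bot$ and unsatisfiability are stable under substitution, which is routine. Everything else is bookkeeping, which is why the statement is labelled as having an easy proof.
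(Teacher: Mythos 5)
Your proof is correct. Note that the paper deliberately omits any proof of this proposition (``whose easy proof is omitted''), so there is no authorial argument to compare against; your Moore-family route --- extensivity of $\closure{\qdom,\cdom}{\cdot}$ via reflexivity of $\entail{\qdom,\cdom}$, monotonicity giving closure of $\Intdc$ under arbitrary intersections, $\itop$ as top, and hence a complete lattice --- is exactly the standard argument the remark alludes to. The one point that genuinely needs checking, namely that a trivial qc-statement $(\qdom,\cdom)$-entails only trivial ones (so that the set of trivial qc-facts is itself a qc-interpretation, equal to its closure, and is therefore $\ibot$), you verify correctly, and it agrees with the observation the authors record in the \textbf{QTI} case of the proof of Lemma~\ref{lem:iqcrwl-properties}.
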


 Now we define an {\em interpretations transformer} $\STp$ intended to formalize the computation of immediate consequences from the qc-facts belonging to a given qc-interpretation.

 \begin{definition}[Interpretations transformers] 
\label{dfn:transformers}
Assuming a $\qcflp{\qdom}{\cdom}$-program $\Prog$ and a qc-interpreta\-tion $\I$, $\STp : \Intdc \to \Intdc$ is defined as $\STp(\I) \eqdef \closure{\qdom,\cdom}{pre\STp(\I)}$ where the closure operator $\mbox{cl}_{\qdom,\cdom}$ is defined as in Def. \ref{dfn:interpretation} and the auxiliary interpretation pre-transformer pre$\STp$ acts as follows:
  $$\begin{array}{l}
    pre\STp(\I) \eqdef \{ (f(\ntup{t}{n})\theta \to t) \sharp d \Leftarrow \Pi \mid \text{ there are}\\
    \qquad \text{some } (f(\ntup{t}{n}) \qto{\alpha} r \Leftarrow \ntup{\delta}{m}) \in \Prog,\\
    \qquad \text{some substitution } \theta, \\
    \qquad \text{some set } \Pi \text{ of primitive atomic } \cdom\text{-constraints }, \\
    \qquad \text{some c-term } t \in \Term, \text{ and} \\
    \qquad \text{some qualification values } d_0, d_1, \ldots, d_m \in \aqdom \text{ such that} \\
    \qquad \text{-- } \I \iqcrwldc \delta_i\theta\sharp d_i \Leftarrow \Pi ~ (1 \le i \le m),\\
    \qquad \text{-- } \I \iqcrwldc (r\theta \to t)\sharp d_0 \Leftarrow \Pi, \text{ and}  \\
    \qquad \text{-- } d \dleq \alpha \circ d_i ~ (0 \le i \le m) \\
    \}.
  \end{array}$$
\end{definition}

Proposition \ref{prop:preSTp-entailment} below shows that $pre\STp(\I)$ is closed under $(\qdom, \cdom)$-entailment. Its proof relies on the next technical, but easy result:

\begin{lemma}[Auxiliary Result]
\label{lem:linear-subst}
Given terms $t, t' \in \Term$ and a substitution $\eta$ such that $t$ is linear and $t\eta \sqsubseteq t'$, there is some substitution $\eta'$ such that:
\begin{enumerate}
  \item $t\eta' = t'$ ,
  \item $\eta \sqsubseteq \eta'$ (i.e. $X\eta \sqsubseteq X\eta'$ for all $X \in \Var)$ , and
  \item $\eta = \eta' ~ [\setminus \varset{t}]$ .
\end{enumerate}
\end{lemma}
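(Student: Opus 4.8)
The plan is to argue by structural induction on the linear term $t$, at each stage reading off the value of $\eta'$ on the variables of $t$ from the corresponding subterms of $t'$ while leaving $\eta$ untouched elsewhere. I would define $\eta'$ so that it coincides with $\eta$ on $\Var \setminus \varset{t}$, which secures clause $\mathit{3.}$ by construction; clauses $\mathit{1.}$ and $\mathit{2.}$ then have to be verified alongside the recursion.

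For the base cases: if $t = X$ is a variable, I set $X\eta' = t'$ and $Y\eta' = Y\eta$ for $Y \neq X$; then $t\eta' = t'$, while $X\eta = t\eta \sqsubseteq t' = X\eta'$ yields $\eta \sqsubseteq \eta'$. If $t = u \in B_\cdom$ (or $t \in DC^0$), then $t\eta = u$, and since $u$ contains no occurrence of $\bot$, the hypothesis $u \sqsubseteq t'$ forces $t' = u$, so $\eta' = \eta$ works. The degenerate possibility $t = \bot$ does not arise for the $\bot$-free patterns to which the lemma is applied, namely the left-hand sides $f(\ntup{t}{n})$ of program rules, so I restrict attention to $\bot$-free linear $t$.

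The inductive step is the case $t = c(\ntup{t}{n})$ with $c \in DC^n$. Here $t\eta = c(t_1\eta, \ldots, t_n\eta) \sqsubseteq t'$, and because $t\eta$ is not $\bot$ and has a constructor root, the form of the information ordering forces $t' = c(\ntup{t'}{n})$ with $t_i\eta \sqsubseteq t'_i$ for each $i$. Since $t$ is linear, the sets $\varset{t_i}$ are pairwise disjoint, and each $t_i$ is itself linear, so I can apply the induction hypothesis to each pair $(t_i, t'_i)$, obtaining substitutions $\eta'_i$ with $t_i\eta'_i = t'_i$, $\eta \sqsubseteq \eta'_i$ and $\eta = \eta'_i\,[\setminus \varset{t_i}]$.

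The main obstacle, and the step where linearity is essential, is gluing the $\eta'_i$ into a single $\eta'$. I would define $\eta'$ to agree with $\eta'_i$ on $\varset{t_i}$ and with $\eta$ outside $\varset{t} = \bigcup_i \varset{t_i}$; pairwise disjointness of the $\varset{t_i}$ makes this well-defined, and the fact that each $\eta'_i$ already equals $\eta$ off $\varset{t_i}$ guarantees the pieces are mutually consistent. It then remains to check that $t_i\eta' = t_i\eta'_i = t'_i$ (because $\eta'$ and $\eta'_i$ agree on $\varset{t_i}$), whence $t\eta' = c(\ntup{t'}{n}) = t'$, that $\eta \sqsubseteq \eta'$ (immediate from $\eta \sqsubseteq \eta'_i$ together with $\eta \sqsubseteq \eta$ off $\varset{t}$), and that $\eta = \eta'\,[\setminus \varset{t}]$ (by construction). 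Without linearity a variable shared between two distinct $t_i$ could be forced to take incompatible values $t'_i \neq t'_j$, so this is precisely where the hypothesis of the lemma is used.
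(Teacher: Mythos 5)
Your proof is correct and rests on the same core idea as the paper's: use linearity so that the image under $\eta'$ of each variable of $t$ can be read off from the corresponding subterm of $t'$. The paper does this in one shot --- for each $X \in \varset{t}$ it takes the unique position $p_X$ at which $X$ occurs in $t$, observes that $t\eta \sqsubseteq t'$ forces $t'$ to have a subterm $t'_X$ at $p_X$ with $X\eta \sqsubseteq t'_X$, and defines $X\eta' = t'_X$ on $\varset{t}$ and $\eta' = \eta$ elsewhere --- whereas you reconstruct the same substitution by structural induction on $t$, using pairwise disjointness of the sets $\varset{t_i}$ (which is exactly linearity) to glue the substitutions $\eta'_i$ obtained for the arguments. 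Your induction is in effect a formalization of the paper's ``it is easy to check'' step, so the two arguments differ only in presentation. There is, however, one point where you genuinely improve on the paper: your restriction to $\bot$-free $t$ is not a convenience but a necessary correction. The lemma as literally stated, with $t \in \Term$, is false for partial $t$: take $t = \bot$ and $t' \neq \bot$, so that $t\eta = \bot \sqsubseteq t'$ holds while $t\eta' = \bot \neq t'$ for every $\eta'$ (and similarly for any $\bot$ occurring at a non-variable position of $t$ that $t'$ refines); the paper's positional construction silently requires $t \in \TTerm$, since only then does $t\eta \sqsubseteq t'$ force $t'$ to agree with $t$ on the whole non-variable skeleton, which is what makes clause $\mathit{1.}$ verifiable. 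As you note, in the only place the lemma is invoked, namely Proposition~\ref{prop:preSTp-entailment}, the terms $\ntup{t}{n}$ are left-hand-side patterns of program rules, where $\bot$ is disallowed by definition, so your restricted version is exactly what the paper needs.
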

\begin{proof}
Since $t$ is linear, for each variable $X$ occurring in $t$ there is one single position $p$ such that $X$ occurs in $t$ at position $p$. Let $p_X$ be this position. Since $t\theta \sqsubseteq t'$, there must be a subterm $t'_X$ occurring in $t'$ at position $p_X$ such that $X\eta \sqsubseteq t'_X$. Let $\eta'$ be a substitution such that $X\eta' = t'_X$ for each variable $X$ occurring in $t$, and $Y\eta' = Y\theta$ for each variable $Y$ not occurring in $t$. It is easy to check that $\eta'$ has all the desired properties. \qed
\end{proof}

\begin{proposition}[$pre\STp(\I)$ is closed under $(\qdom, \cdom)$-entailment]
\label{prop:preSTp-entailment}
Assume two qc-facts $\varphi$ and $\varphi'$. If $\varphi \in pre\STp(\I)$ and $\varphi \entail{\qdom,\cdom}$ $\varphi'$, then $\varphi' \in pre\STp(\I)$.
\end{proposition}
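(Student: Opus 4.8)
The plan is to unfold the definition of $pre\STp(\I)$ for $\varphi$, and then to manufacture the witnesses needed to show $\varphi' \in pre\STp(\I)$, using the entailment data relating $\varphi$ to $\varphi'$. So I would start by writing $\varphi = (f(\ntup{t}{n})\theta \to t)\sharp d \Leftarrow \Pi$ and recording the objects that certify $\varphi \in pre\STp(\I)$: a program rule $R_l : (f(\ntup{s}{n}) \qto{\alpha} r \Leftarrow \ntup{\delta}{m})$, a substitution $\theta$, values $d_0,\ldots,d_m \in \aqdom$ with $\I \iqcrwldc \delta_i\theta\sharp d_i \Leftarrow \Pi$, $\I \iqcrwldc (r\theta \to t)\sharp d_0 \Leftarrow \Pi$, and $d \dleq \alpha \circ d_i$ for $0 \le i \le m$. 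Since $\varphi \entail{\qdom,\cdom} \varphi'$, there is a substitution $\rho$ with $\varphi' = (f(\ntup{t}{n})\theta\rho' \to t')\sharp d' \Leftarrow \Pi'$, where (case 1 of Definition \ref{dfn:entailment}) $\Pi' \model{\cdom} \Pi\rho$, $d \dgeq d'$, $f(\ntup{t}{n})\theta\,\rho \sqsubseteq f(\ntup{t}{n})\theta\rho'$ (the left-hand side of $\varphi'$), and $t\rho \sqsupseteq t'$.

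The key idea is to build a new instance of the \emph{same} program rule $R_l$ witnessing $\varphi' \in pre\STp(\I)$. The natural candidate substitution is the composition $\theta\rho$, together with the constraint set $\Pi'$, the result $t'$, and the qualification values $d_0,\ldots,d_m$ carried over unchanged. I would then verify the three membership requirements in turn. First, applying the entailment property (Lemma \ref{lem:iqcrwl-properties}, item \ref{lem:iqcrwl-properties:3}) to each premise gives $\I \iqcrwldc \delta_i\theta\rho\sharp d_i \Leftarrow \Pi'$ and $\I \iqcrwldc (r\theta\rho \to t')\sharp d_0 \Leftarrow \Pi'$, since $\Pi' \model{\cdom} \Pi\rho$ supplies the needed constraint entailment and $t\rho \sqsupseteq t'$ handles the result side. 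Second, the inequality $d' \dleq \alpha \circ d_i$ follows from $d' \dleq d \dleq \alpha \circ d_i$ using transitivity. Concluding that the head matches requires that $f(\ntup{s}{n})\theta\rho = f(\ntup{t}{n})\theta\rho$, i.e. that $\ntup{t}{n}\theta = f(\ntup{s}{n})\theta$ reduces correctly under $\rho$; this is where the precise form of the head in $pre\STp$ must be reconciled with the left-hand side appearing in $\varphi'$.

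The main obstacle I anticipate is precisely this reconciliation of the head. The left-hand side of $\varphi'$ is only required to \emph{dominate} $f(\ntup{t}{n})\theta\,\rho$ in the information ordering, so it may carry more information than the mere instance $f(\ntup{t}{n})\theta\rho$; to exhibit $\varphi'$ as an element of $pre\STp(\I)$, I must produce a substitution whose application to the rule's head $f(\ntup{s}{n})$ yields \emph{exactly} the left-hand side of $\varphi'$. This is exactly what Lemma \ref{lem:linear-subst} is designed to provide: since the program-rule left-hand side $f(\ntup{s}{n})$ is linear (the arguments $\ntup{s}{n}$ form a linear sequence of terms), and $f(\ntup{s}{n})\theta\rho$ is below the left-hand side of $\varphi'$, there exists a refined substitution $\eta'$ extending $\theta\rho$ that maps the head onto the left-hand side of $\varphi'$ exactly, agreeing with $\theta\rho$ off $\varset{f(\ntup{s}{n})}$. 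I would apply the lemma to obtain $\eta'$, check that replacing $\theta\rho$ by $\eta'$ in the premises is harmless (the premises involve $r$ and $\Delta$, whose extra variables lie outside $\varset{f(\ntup{s}{n})}$ only if the rule is well-formed — and if $r$ or $\Delta$ share variables with the head, monotonicity of the already-derived premises under the upgrade $\theta\rho \sqsubseteq \eta'$ keeps the derivations valid), and thereby close the argument. The remaining bookkeeping — transitivity of $\dleq$ for the qualification bound and of $\model{\cdom}$ for the constraints — is routine.
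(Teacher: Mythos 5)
Your proposal is correct and takes essentially the same route as the paper's own proof: unfold the definition of $pre\STp(\I)$, apply Lemma \ref{lem:linear-subst} to the linear rule head to refine $\theta\rho$ into a substitution $\eta'$ that maps the head \emph{exactly} onto the left-hand side of $\varphi'$, transfer the premise derivations via the entailment property (Lemma \ref{lem:iqcrwl-properties}(\ref{lem:iqcrwl-properties:3})), keep the qualification values $d_0,\ldots,d_m$ unchanged, and conclude the bound by transitivity $d' \dleq d \dleq \alpha \circ d_i$. Apart from minor notational slips (the undefined $\rho'$, and mixing the rule head $\ntup{s}{n}$ with $\ntup{t}{n}$), your two-stage transport of the premises --- first along $\rho$, then upgrading via $\theta\rho \sqsubseteq \eta'$ --- is just the paper's single application of the entailment property split into two steps, so the arguments coincide.
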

\begin{proof}Since $\varphi \in pre\STp(\I)$, there are some
$R_l : (f(\ntup{t}{n}) \qto{\alpha} r \Leftarrow \ntup{\delta}{m}) \in  \Prog$
and some substitution $\theta$ such that
$\varphi : (f(\ntup{t}{n})\theta \to t)\sharp d \Leftarrow \Pi$ and
\begin{itemize}
  \item (1) $\I \iqcrwldc \delta_i\theta \sharp d_i \Leftarrow \Pi ~ (1 \le i \le m)$ ,
  \item (2) $\I \iqcrwldc (r\theta \to t)\sharp d_0 \Leftarrow \Pi$ , and
  \item (3) $d \dleq \alpha \circ d_i ~ (0 \le i \le m)$ .
\end{itemize}

Since $\varphi \entail{\qdom,\cdom} \varphi'$,
we can assume $\varphi' : (f(\ntup{t'}{n}) \to t')\sharp d' \Leftarrow \Pi'$
and a substitution $\sigma$ such that
$t_i\theta\sigma \sqsubseteq t'_i ~ (1 \le i \le n)$,
$t\sigma \sqsupseteq t'$,
(4) $d \dgeq d'$
and $\Pi' \model{\cdom} \Pi\sigma$.

Given that $\ntup{t}{n}$ is a linear tuple of terms, and applying Lemma \ref{lem:linear-subst} with $\eta = \theta\sigma$, we obtain a substitution $\eta'$ satisfying $t_i\eta' = t'_i ~ (1 \le i \le n)$, $\theta\sigma \sqsubseteq \eta'$ and $\theta\sigma = \eta' ~ [\setminus \varset{\ntup{t}{n}}]$.
Now, in order to prove $\varphi' \in pre\STp(\I)$ it suffices to consider $R_l$, $\eta'$ and some some $d'_0$, $d'_1$, \ldots, $d'_m \in \aqdom$ satisfying:
\begin{itemize}
  \item (1') $\I \iqcrwldc \delta_i\eta' \sharp d'_i \Leftarrow \Pi' ~ (1 \le i \le m)$ ,
  \item (2') $\I \iqcrwldc (r\eta' \to t')\sharp d'_0 \Leftarrow \Pi'$ , and
  \item (3') $d' \dleq \alpha \circ d'_i ~ (0 \le i \le m)$ .
\end{itemize}
Let us see that (1'), (2') and (3') hold when choosing $d'_i = d_i ~ (0 \le i \le m)$:

\smallskip
\noindent [1'] For any $1 \le i \le m$ we have $\delta_i\theta\sharp d_i \Leftarrow \Pi \entail{\qdom,\cdom} \delta_i\eta' \sharp d_i \Leftarrow \Pi'$ using $\sigma$, because  $\delta_i\theta\sigma \sqsubseteq \delta_i\eta'$, $d_i \dgeq d_i$ and $\Pi' \model{\cdom} \Pi\sigma$. Therefore (1) $\Rightarrow$ (1') by the entailment property (Lemma \ref{lem:iqcrwl-properties}(\ref{lem:iqcrwl-properties:3})).

\smallskip
\noindent [2']  Similarly as for (1'), $(r\theta \to t) \sharp d_0 \Leftarrow \Pi \entail{\qdom,\cdom} (r\theta' \to t')\sharp d_0\mu \Leftarrow \Pi'$ using $\sigma$, because $r\theta\sigma \sqsubseteq r\eta'$, $t\sigma \sqsupseteq t'$, $d_0 \dgeq d_0$ and $\Pi' \model{\cdom} \Pi\sigma$. Therefore (2) $\Rightarrow$ (2') again by the entailment property (Lemma \ref{lem:iqcrwl-properties}(\ref{lem:iqcrwl-properties:3})).

\smallskip
\noindent [3']  From (3) and (4) we trivially get $d' \dleq \alpha \circ d_i ~ (0 \le i \le m)$. Therefore,  (3') holds when choosing $d'_i = d_i ~ (0 \le i \le m)$. \qed
\end{proof}

As a consequence of the previous proposition, we can establish a stronger relation between $\STp(\I)$ and $pre\STp(\I)$ for non-trivial qc-facts, as given in the following lemma.

\begin{lemma}[$\STp(\I)$ versus $pre\STp(\I)$]
\label{lem:non-trivial-STp}
For any non-trivial qc-fact $\varphi$ one has: $\varphi \in \STp(\I) \Longrightarrow \varphi \in pre\STp(\I)$.
\end{lemma}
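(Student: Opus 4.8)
The plan is to unfold the definition of $\STp(\I)$ and to reduce the whole claim to a single application of Proposition~\ref{prop:preSTp-entailment}. By Definition~\ref{dfn:transformers} we have $\STp(\I) = \closure{\qdom,\cdom}{pre\STp(\I)}$, and by the definition of the closure operator in Definition~\ref{dfn:interpretation} this set equals $\{\psi \mid \psi \text{ trivial}\} \cup \{\varphi' \mid \psi \entail{\qdom,\cdom} \varphi' \text{ for some } \psi \in pre\STp(\I)\}$. So the strategy is simply to trace through which of these two components can contain a \emph{non-trivial} $\varphi$.

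First I would use the hypothesis that $\varphi$ is non-trivial to discard the first component of this union outright. Consequently, the assumption $\varphi \in \STp(\I)$ can only be witnessed through the second component, which supplies some $\psi \in pre\STp(\I)$ with $\psi \entail{\qdom,\cdom} \varphi$. This is the one place where non-triviality is genuinely used: it is precisely what forces the existence of an entailment witness rather than allowing $\varphi$ to be admitted ``for free'' as a trivial qc-fact.

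Next I would verify the mild side condition required to invoke Proposition~\ref{prop:preSTp-entailment}, namely that both $\psi$ and $\varphi$ are qc-facts. For $\varphi$ this is part of the hypothesis. For $\psi$ it is immediate from the definition of $pre\STp(\I)$, every member of which has the shape $(f(\ntup{t}{n})\theta \to t)\sharp d \Leftarrow \Pi$ with $f \in DF^n$ and is therefore a qc-fact. With this in place, Proposition~\ref{prop:preSTp-entailment} applies to the pair $\psi$, $\varphi$ and yields $\varphi \in pre\STp(\I)$, which is exactly the desired conclusion.

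I do not expect any substantial obstacle: all the real content lies in Proposition~\ref{prop:preSTp-entailment}, which establishes that $pre\STp(\I)$ is already closed under $(\qdom,\cdom)$-entailment, so that the closure operator can only enlarge it with trivial qc-facts. The only point deserving a line of care is the interaction between the trivial and entailment parts of the closure, that is, recognizing that the non-triviality of $\varphi$ rules out the trivial branch and hence guarantees the entailment witness $\psi$. Once that observation is made, the proof is a one-step invocation of the preceding proposition.
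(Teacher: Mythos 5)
Your proposal is correct and follows essentially the same route as the paper's own proof: unfold $\STp(\I) = \closure{\qdom,\cdom}{pre\STp(\I)}$, use non-triviality of $\varphi$ to exclude the trivial branch of the closure and so obtain a witness $\psi \in pre\STp(\I)$ with $\psi \entail{\qdom,\cdom} \varphi$, and conclude via Proposition~\ref{prop:preSTp-entailment}. Your extra check that members of $pre\STp(\I)$ are indeed qc-facts is a small point the paper leaves implicit, but otherwise the two arguments coincide.
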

\begin{proof}
From $\varphi \in \STp(\I)$ it follows by definition of $\STp$ that $\varphi \in \closure{\qdom,\cdom}{pre\STp(\I)}$. As we are assuming that $\varphi$ is not trivial, there must be some $\psi \in pre\STp(\I)$ such that $\psi \entail{\qdom,\cdom} \varphi$. Then $\varphi \in pre\STp(\I)$ follows from Proposition \ref{prop:preSTp-entailment}. \qed
\end{proof}

The main properties of the interpretation transformer $\STp$ are given in the following proposition.

\begin{proposition}[Properties of interpretation transformers]
Let $\Prog$ be a $\qcflp{\qdom}{\cdom}$-program. Then:
\label{prop:properties-STp}
\begin{enumerate}
  \item $\STp$ is monotonic and continuous.
  \item For any $\I \in \Intdc$: $\I \model{\qdom,\cdom} \Prog \Longleftrightarrow \STp(\I) \subseteq \I$ .
\end{enumerate}
\end{proposition}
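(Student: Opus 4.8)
The plan is to prove each part of Proposition~\ref{prop:properties-STp} separately, relying heavily on the previously established properties of $\IQCRWLdc$ (Lemma~\ref{lem:iqcrwl-properties}) and on the fact that $pre\STp(\I)$ is closed under $(\qdom,\cdom)$-entailment (Proposition~\ref{prop:preSTp-entailment}).

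\smallskip
\noindent \textbf{Part 1 (Monotonicity and continuity).}
First I would prove monotonicity. Assume $\I_1 \subseteq \I_2$; I must show $\STp(\I_1) \subseteq \STp(\I_2)$. Since $\STp(\I) = \closure{\qdom,\cdom}{pre\STp(\I)}$ and the closure operator is clearly monotonic w.r.t.\ $\subseteq$, it suffices to show $pre\STp(\I_1) \subseteq pre\STp(\I_2)$. Inspecting the definition of $pre\STp$, a qc-fact lands in $pre\STp(\I)$ because certain premises of the form $\I \iqcrwldc \delta_i\theta \sharp d_i \Leftarrow \Pi$ and $\I \iqcrwldc (r\theta \to t)\sharp d_0 \Leftarrow \Pi$ hold. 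So the core fact I need is that $\IQCRWLdc$-derivability is monotonic in the interpretation: if $\I_1 \subseteq \I_2$ and $\I_1 \iqcrwldc \varphi$, then $\I_2 \iqcrwldc \varphi$. This follows by a routine induction on the $\IQCRWLdc$-proof tree, since the only inference rule mentioning $\I$ is $\textbf{QDF}_\I$, which only requires membership of some qc-fact in $\I$, and that membership is preserved under $\subseteq$. Given this, every witness establishing $\varphi \in pre\STp(\I_1)$ also establishes $\varphi \in pre\STp(\I_2)$, so monotonicity follows. For continuity, I would show that $\STp$ commutes with least upper bounds of directed (or just increasing) chains $\{\I_k\}_k$. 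The inclusion $\supr_k \STp(\I_k) \subseteq \STp(\supr_k \I_k)$ is immediate from monotonicity. For the reverse, I would take a non-trivial qc-fact $\varphi \in \STp(\supr_k \I_k)$, apply Lemma~\ref{lem:non-trivial-STp} to get $\varphi \in pre\STp(\supr_k \I_k)$, and observe that the finitely many $\IQCRWLdc$-derivations witnessing $\varphi$'s membership use only finitely many qc-facts from $\supr_k \I_k = \union_k \I_k$; since the chain is increasing, all of these lie in a single $\I_k$, so $\varphi \in pre\STp(\I_k) \subseteq \STp(\I_k)$. Trivial qc-facts are handled separately, since they belong to every $\STp(\I_k)$.

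\smallskip
\noindent \textbf{Part 2 (Model characterization).}
For the equivalence $\I \model{\qdom,\cdom} \Prog \Longleftrightarrow \STp(\I) \subseteq \I$, I would unfold both sides against Definition~\ref{dfn:model} and Definition~\ref{dfn:transformers}. For the direction ($\Leftarrow$), assuming $\STp(\I) \subseteq \I$, I take any program rule $R_l$, any $\theta$, $\Pi$, $t$ and qualification values satisfying the hypotheses of $\I \model{\qdom,\cdom} R_l$; these hypotheses are precisely the conditions placing $(f(\ntup{t}{n})\theta \to t)\sharp d \Leftarrow \Pi$ into $pre\STp(\I)$, hence into $\STp(\I) \subseteq \I$, which is what the model condition demands. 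For the direction ($\Rightarrow$), assuming $\I \model{\qdom,\cdom} \Prog$, I must show $\STp(\I) \subseteq \I$. Since $\I$ is a qc-interpretation it already contains all trivial qc-facts, so it is enough to handle non-trivial $\varphi \in \STp(\I)$; by Lemma~\ref{lem:non-trivial-STp} such $\varphi \in pre\STp(\I)$, which exposes a program rule and witnesses satisfying exactly the antecedent of the model condition, so the model condition delivers $\varphi \in \I$.

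\smallskip
\noindent The main obstacle I anticipate is the reverse inclusion in the continuity argument, specifically the compactness reasoning that a single $\IQCRWLdc$-derivation of $\varphi$ depends on only finitely many qc-facts from the union $\union_k \I_k$. This requires that $\IQCRWLdc$-proof trees be finite, so that the set of qc-facts used via $\textbf{QDF}_\I$ inferences is finite and therefore absorbed into some single $\I_k$ by the chain's increasing nature. Establishing this cleanly leans crucially on Lemma~\ref{lem:non-trivial-STp}, which reduces membership in the closed set $\STp(\I)$ back to membership in $pre\STp(\I)$ for non-trivial facts, thereby letting me read off explicit finite derivations rather than reasoning about the closure operator directly. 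The remaining steps, including the monotonicity of $\IQCRWLdc$-derivability in $\I$, are straightforward structural inductions that I would not expect to present in full detail.
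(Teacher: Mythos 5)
Your proof is correct, and on item 2 it takes essentially the paper's route: both arguments rest on the observation that, unfolding Definition~\ref{dfn:model}, $\I \model{\qdom,\cdom} \Prog$ says exactly $pre\STp(\I) \subseteq \I$. The paper then finishes in one line via the chain $pre\STp(\I) \subseteq \I \iff \closure{\qdom,\cdom}{pre\STp(\I)} \subseteq \closure{\qdom,\cdom}{\I} \iff \STp(\I) \subseteq \I$, using monotonicity of the closure operator and the identities $\closure{\qdom,\cdom}{pre\STp(\I)} = \STp(\I)$, $\closure{\qdom,\cdom}{\I} = \I$; you instead split into trivial and non-trivial qc-facts and invoke Lemma~\ref{lem:non-trivial-STp} for the latter---equivalent in substance, marginally less direct. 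The genuine divergence is item 1: the paper gives no proof at all, merely citing Prop.~3 of \cite{LRV07} as ``well-known results for similar semantics'', whereas you actually carry out the argument, and correctly so: monotonicity of $\iqcrwldc$ in the interpretation (sound because \textbf{QDF}$_\I$ only tests membership of a non-trivial qc-fact in $\I$, preserved under $\subseteq$), monotonicity of the closure, and for continuity the compactness step that the finitely many finite $\IQCRWLdc$-proof trees witnessing $\varphi \in pre\STp(\union_k \I_k)$ touch only finitely many qc-facts and are therefore absorbed into a single $\I_k$---with Lemma~\ref{lem:non-trivial-STp} used as the right lever to descend from $\STp$ to $pre\STp$. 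This is precisely the standard argument the paper's citation gestures at, so your version buys self-containedness at the cost of length. Two small points you use silently and should make explicit in a full write-up: first, that the lub of an increasing chain in $\Intdc$ is the plain set union (the union of a chain of sets each closed under $(\qdom,\cdom)$-entailment and containing all trivial qc-facts is again such a set, hence a qc-interpretation); second, that $pre\STp(\I) \subseteq \STp(\I)$, needed in your ($\Leftarrow$) direction of item 2, which holds because $\entail{\qdom,\cdom}$ is reflexive (take $\sigma$ the identity), so any set is contained in its closure.
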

\begin{proof}
Monotonicity and continuity are well-known results for similar semantics; see e.g. Prop. 3 in \cite{LRV07}.
Item 2 can be proved as follows:
as an easy consequence of Def. \ref{dfn:model},  $\I \model{\qdom,\cdom} \Prog \iff pre\STp(\I) \subseteq \I$.
Moreover, $pre\STp(\I) \subseteq \I \iff
\closure{\qdom,\cdom}{pre\STp(\I)} \subseteq \closure{\qdom,\cdom}{\I} \iff
\STp(\I) \subseteq \I$,
where the first equivalence is obvious and the second equivalence is due to the equalities
$\closure{\qdom,\cdom}{pre\STp(\I)} = \STp(\I)$ and $\closure{\qdom,\cdom}{\I} = \I$.
Therefore, $\I \model{\qdom,\cdom} \Prog \iff \STp(\I) \subseteq \I$, as desired. \qed
\end{proof}

Finally, we can conclude that the least fixpoint of $\STp$ characterizes the least model of any given $\qcflp{\qdom}{\cdom}$-program $\Prog$, as stated in the following theorem.

\begin{theorem}\label{thm:least-model-lfp}
For every $\qcflp{\qdom}{\cdom}$-program $\Prog$ there exists the \emph{least model} $\Sp = l\!f\!p(\STp) = \union_{k \in \NAT} \STp \!\!\uparrow^k\! (\ibot)$.
\end{theorem}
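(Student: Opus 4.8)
The plan is to apply the Knaster--Tarski fixpoint theorem in its constructive form, which is the standard route for such immediate-consequence operators. The ingredients are already in place: Proposition~\ref{prop:interpretation-lattice} tells us that $\Intdc$ is a complete lattice with bottom element $\ibot$, and Proposition~\ref{prop:properties-STp} tells us that $\STp$ is both monotonic and continuous, and that the models of $\Prog$ are exactly the pre-fixpoints of $\STp$ (that is, $\I \model{\qdom,\cdom} \Prog \iff \STp(\I) \subseteq \I$). Since $\STp$ is monotonic on a complete lattice, by Knaster--Tarski it has a least fixpoint, and that least fixpoint coincides with the least pre-fixpoint; composing this with item~2 of Proposition~\ref{prop:properties-STp} identifies $l\!f\!p(\STp)$ as the least model $\Sp$.

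First I would invoke the Kleene fixpoint theorem to obtain the concrete iterative description. Because $\STp$ is continuous (and hence monotonic) on the complete lattice $\Intdc$, the ascending Kleene chain $\ibot \subseteq \STp(\ibot) \subseteq \STp\!\!\uparrow^2\!(\ibot) \subseteq \cdots$ has a least upper bound $\union_{k \in \NAT} \STp\!\!\uparrow^k\!(\ibot)$, and continuity guarantees that this supremum is a fixpoint of $\STp$. Concretely, $\STp(\union_k \STp\!\!\uparrow^k\!(\ibot)) = \union_k \STp(\STp\!\!\uparrow^k\!(\ibot)) = \union_k \STp\!\!\uparrow^{k+1}\!(\ibot) = \union_k \STp\!\!\uparrow^k\!(\ibot)$, where the first equality is continuity and the last absorbs the extra bottom term of the chain.

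Next I would establish minimality. If $\I$ is any fixpoint of $\STp$, then in particular $\STp(\I) \subseteq \I$, so a routine induction on $k$ using monotonicity of $\STp$ and the fact that $\ibot$ is the bottom element shows $\STp\!\!\uparrow^k\!(\ibot) \subseteq \I$ for every $k \in \NAT$; taking the union gives $\union_k \STp\!\!\uparrow^k\!(\ibot) \subseteq \I$. Hence the Kleene supremum is the least fixpoint, so $l\!f\!p(\STp) = \union_{k \in \NAT} \STp\!\!\uparrow^k\!(\ibot)$. Finally, by Proposition~\ref{prop:properties-STp}(2) the models of $\Prog$ are precisely the qc-interpretations $\I$ with $\STp(\I) \subseteq \I$; since every fixpoint is a model and the least fixpoint is below every pre-fixpoint, $l\!f\!p(\STp)$ is itself a model and is contained in every model, so it is the least model $\Sp$.

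I do not expect a genuine obstacle here: the theorem is essentially a corollary of the two preceding propositions together with standard lattice-theoretic fixpoint machinery. The only point requiring mild care is making sure the continuity argument is phrased correctly so that the Kleene supremum is verified to be a \emph{fixpoint} (not merely a pre-fixpoint), and that the minimality induction correctly uses $\STp(\I)\subseteq\I$ for an arbitrary model $\I$ rather than assuming $\I$ is a fixpoint. Everything else is routine, and much of the work has already been discharged in Proposition~\ref{prop:properties-STp}.
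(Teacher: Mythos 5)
Your proposal is correct and takes essentially the same route as the paper: the paper's proof likewise invokes the Knaster--Tarski theorem (least fixpoint = least pre-fixpoint, with the Kleene-iteration characterization $\union_{k \in \NAT} \STp\!\uparrow^k\!(\ibot)$ for continuous maps) and combines it with Proposition~\ref{prop:properties-STp}, which supplies monotonicity, continuity, and the identification of models with pre-fixpoints. You have merely spelled out the lattice-theoretic details that the paper compresses into a single sentence, and each of your elaborations (the continuity computation, the minimality induction, the fixpoint-versus-model bookkeeping) is accurate.
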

\begin{proof}Due to a well-known theorem by Knaster and Tarski \cite{Tar55}, a monotonic mapping from a complete lattice into itself always has a least fixpoint which is also its least pre-fixpoint. In the case that the mapping is continuous, its least fixpoint can be characterized as the lub of the sequence of lattice elements obtained by reiterated application of the mapping to the bottom element. Combining these results with Prop. \ref{prop:properties-STp} trivially proves the theorem. \qed
\end{proof}

\subsection*{\it A qualified constraint rewriting logic.}


In order to obtain a logical  view of  program semantics and an alternative characterization of least program models, we define the {\em Qualified Constrained Rewriting Logic for Programs}  $\QCRWLdc$ as the formal system consisting of the six inference rules displayed in Fig. \ref{fig:qcrwl}. Note that  $\QCRWLdc$ is very similar Qualified Constrained Rewriting Logic for Interpretations $\IQCRWLdc$ (see Fig.  \ref{fig:iqcrwl}),
except that the inference rule \textbf{QDF}$_\I$ from $\IQCRWLdc$ is replaced by the inference rule
\textbf{QDF}$_\Prog$ in  $\QCRWLdc$. The inference rules in $\QCRWLdc$ formalize provability of qc-statements from a given program $\Prog$ according to their intuitive meanings.
In particular,  {\bf QDF}$_\Prog$ formalizes the behavior of program rules and attenuation factors that was informally explained in the Introduction, using the set $[\Prog]_\bot$ of {\em program rule instances}.

In the sequel we use the notation $\Prog \qcrwldc \varphi$ to indicate that $\varphi$ can be inferred from $\Prog$ in $\QCRWLdc$. By convention, we agree that no other inference rule is used whenever \textbf{QTI} is applicable. Therefore, trivial qc-statements can only be inferred by rule \textbf{QTI}. As usual in formal inference systems, $\QCRWLdc$ proofs can be represented as trees whose nodes correspond to inference steps. For example, if $\Prog$ is the library program, $\Pi$ is empty, and $\psi$ is

\begin{center}\tt
(guessGenre(book(4,"Beim Hauten der Zwiebel","Gunter Grass", "German","Biography", medium, 432)) --> "Essay")\#0.7
\end{center}

\noindent
then $\Prog \qcrwl{\U}{\rdom} \psi \Leftarrow \Pi$ with a proof tree whose root inference may be chosen as {\bf QDF}$_\Prog$ using
a suitable instance of the fourth program rule for \texttt{guessGenre}.

\begin{figure}[ht]
  \small
  \centering
    \begin{tabular}{|@{\hspace*{0.5cm}}l@{\hspace*{0.5cm}}|}\hline
      \\
      \textbf{QTI} $\quad$ $\displaystyle\frac{}{\quad\varphi\quad}$ ~
      if $\varphi$ is a trivial qc-statement.\\
      \\
      \textbf{QRR} $\quad$ $\displaystyle\frac{}{(v \to v)\sharp d \Leftarrow \Pi}$ ~
      if $v \in \Var \cup B_\cdom$ and $d \in \aqdom$.\\
      \\
      \textbf{QDC} $\quad$ $\displaystyle\frac
        {(~ (e_i \to t_i)\sharp d_i \Leftarrow \Pi ~)_{i = 1 \ldots n}}
        {(c(\ntup{e}{n}) \to c(\ntup{t}{n}))\sharp d \Leftarrow \Pi}$ ~
      \begin{tabular}{l}
        if $c \in DC^n$ and $d \in \aqdom$ \\
        verifies $d \dleq d_i$ $(1 \le i \le n)$.\\
      \end{tabular} \\
      \\
      \textbf{QDF$_{\Prog}$} $\quad$ $\displaystyle\frac
        {(~ (e_i \to t_i)\sharp d_i \Leftarrow \Pi ~)_{i = 1 \ldots n}
          \quad (r \to t)\sharp d'_0  \Leftarrow \Pi
          \quad ( \delta_j\sharp d'_j  \Leftarrow \Pi )_{j=1 \ldots m}}
        {(f(\ntup{e}{n}) \to t)\sharp d \Leftarrow \Pi}$ \\ \\
      if $f \in DF^n$ and $(f(\ntup{t}{n}) \qto{\alpha} r \Leftarrow \delta_1,\ldots,\delta_m) \in [\Prog]_{\bot}$ \\
      where $[\Prog]_\bot = \{ R_l\theta \mid R_l \mbox{ is a rule in } \Prog \mbox{ and } \theta \mbox{ is a substitution}\}$,\\
      and $d \in \aqdom$ verifies $d \dleq d_i ~ (1 \le i \le n)$, $d \dleq \alpha \circ d'_j ~ (0 \le j \le m)$.\\
      \\
      \textbf{QPF} $\quad$ $\displaystyle\frac
        {(~ (e_i \to t_i)\sharp d_i \Leftarrow \Pi ~)_{i=1 \ldots n}}
        {(p(\ntup{e}{n}) \to v)\sharp d \Leftarrow \Pi}$ ~
      if $p \in PF^n$, $v \in \Var \cup DC^0 \cup B_\cdom$, \\ \\
      $\Pi \model{\cdom} p(\ntup{t}{n}) \to v$ and $d \in \aqdom$ verifies $d \dleq d_i ~ (1 \le i \le n)$.\\
      \\
      \textbf{QAC} $\quad$ $\displaystyle\frac
        {(~ (e_i \to t_i)\sharp d_i \Leftarrow \Pi ~)_{i=1 \ldots n}}
        {(p(\ntup{e}{n}) == v)\sharp d \Leftarrow \Pi}$ ~
      if $p \in PF^n$, $v \in \Var \cup DC^0 \cup B_\cdom$,\\ \\
      $\Pi \model{\cdom} p(\ntup{t}{n}) == v$ and $d \in \aqdom$ verifies $d \dleq d_i ~ (1 \le i \le n)$.\\
      \\\hline
    \end{tabular}
  \caption{Qualified Constrained Rewriting Logic for Programs}
  \label{fig:qcrwl}
\end{figure}


The following lemma states the main properties of $\QCRWLdc$.
The proof is similar to that of Lemma \ref{lem:iqcrwl-properties} and omitted here.
The interested reader is also referred to  the proof of Lemma 2 in \cite{LRV07}.

\begin{lemma}[Some properties of $\QCRWLdc$]
\label{lem:qcrwl-properties}
The  three first items of Lemma \ref{lem:iqcrwl-properties} also hold for $\QCRWLdc$, with the natural reformulation of their statements.
More precisely:
\begin{enumerate}
  \item\label{lem:qcrwl-properties:1} Approximation property: For any non-trivial $\varphi$ of the form $(t \to t')\sharp d \Leftarrow \Pi$ where $t, t' \in \Term$, the three following affirmations are equivalent:
  (a) $t \sqsupseteq t'$;
  (b) $\Prog \qcrwldc \varphi$ with an easy proof tree; and
  (c) $\Prog \qcrwldc \varphi$.
  \item\label{lem:qcrwl-properties:2} Primitive c-atoms: For any primitive c-atom $p(\ntup{t}{n}) == v$, one has $\Prog \qcrwldc (p(\ntup{t}{n}) == v)\sharp d \Leftarrow \Pi \iff  \Pi \model{\cdom} p(\ntup{t}{n}) == v$.
  \item\label{lem:qcrwl-properties:3} Entailment property: $\Prog \qcrwldc \varphi$  with a proof tree $\mathcal{T}$
  and $\varphi \entail{\qdom,\cdom} \varphi'$ $\Longrightarrow$ $\Prog \qcrwldc \varphi'$ with a proof tree $\mathcal{T'}$ such that
  $|\mathcal{T'}| \leq |\mathcal{T}|$.
\end{enumerate}
\end{lemma}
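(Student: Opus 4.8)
The plan is to transfer the proof of Lemma~\ref{lem:iqcrwl-properties} almost verbatim, exploiting the fact that $\QCRWLdc$ and $\IQCRWLdc$ share the rules \textbf{QTI}, \textbf{QRR}, \textbf{QDC}, \textbf{QPF} and \textbf{QAC}, and differ only in the rule governing defined function symbols (\textbf{QDF}$_\Prog$ in place of \textbf{QDF}$_\I$). Hence any argument that never inspects a \textbf{QDF} inference carries over unchanged. For item~\ref{lem:qcrwl-properties:1} (approximation) I would note that a non-trivial qc-production $(t \to t')\sharp d \Leftarrow \Pi$ with $t, t' \in \Term$ contains neither defined nor primitive function symbols, so any $\QCRWLdc$-derivation of it uses only \textbf{QTI}, \textbf{QRR} and \textbf{QDC}; since these coincide with their $\IQCRWLdc$ counterparts, the induction on the syntactic size of $t$ applies unchanged and establishes the equivalence of $t \sqsupseteq t'$, easy derivability and derivability. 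For item~\ref{lem:qcrwl-properties:2} (primitive c-atoms) the proof uses only the shared rule \textbf{QAC} at the root, the now-available approximation property for its argument premises, and the monotonicity of primitive interpretations from Definition~\ref{def:cd}; none of this mentions \textbf{QDF}, so the proof applies literally.

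The substantive case is item~\ref{lem:qcrwl-properties:3} (entailment), again proved by induction on the number of inference steps in the derivation $\mathcal{T}$ of $\varphi$, with a case split on the root inference. The cases \textbf{QTI}, \textbf{QRR}, \textbf{QDC}, \textbf{QPF} and \textbf{QAC} reproduce those of Lemma~\ref{lem:iqcrwl-properties}(\ref{lem:iqcrwl-properties:3}) word for word, so only \textbf{QDF}$_\Prog$ is new. Suppose $\varphi : (f(\ntup{e}{n}) \to t)\sharp d \Leftarrow \Pi$ is derived by \textbf{QDF}$_\Prog$ from an instance $(f(\ntup{t}{n}) \qto{\alpha} r \Leftarrow \delta_1,\ldots,\delta_m) \in [\Prog]_\bot$, with argument premises $(e_i \to t_i)\sharp d_i \Leftarrow \Pi$, body premise $(r \to t)\sharp d'_0 \Leftarrow \Pi$ and constraint premises $\delta_j\sharp d'_j \Leftarrow \Pi$ subject to $d \dleq d_i$ and $d \dleq \alpha \circ d'_j$; and let $\varphi \entail{\qdom,\cdom} \varphi'$ with $\varphi' : (f(\ntup{e'}{n}) \to t')\sharp d' \Leftarrow \Pi'$ witnessed by a substitution $\sigma$, so $e_i\sigma \sqsubseteq e'_i$, $t\sigma \sqsupseteq t'$, $d \dgeq d'$ and $\Pi' \model{\cdom} \Pi\sigma$. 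The key move, replacing the appeal to closure of $\I$ under $(\qdom,\cdom)$-entailment that was used for \textbf{QDF}$_\I$, is that $[\Prog]_\bot$ is closed under substitution, so $(f(\ntup{t}{n})\sigma \qto{\alpha} r\sigma \Leftarrow \delta_1\sigma,\ldots,\delta_m\sigma)$ is again a legal instance (note that $\alpha$, being a domain constant, is untouched by $\sigma$).

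With this instance fixed I would rebuild each premise through the induction hypothesis: one checks directly from Definition~\ref{dfn:entailment}, all with the same $\sigma$, that $(e_i \to t_i)\sharp d_i \Leftarrow \Pi \entail{\qdom,\cdom} (e'_i \to t_i\sigma)\sharp d_i \Leftarrow \Pi'$, that $(r \to t)\sharp d'_0 \Leftarrow \Pi \entail{\qdom,\cdom} (r\sigma \to t')\sharp d'_0 \Leftarrow \Pi'$, and that $\delta_j\sharp d'_j \Leftarrow \Pi \entail{\qdom,\cdom} \delta_j\sigma\sharp d'_j \Leftarrow \Pi'$; applying the induction hypothesis to each proper subderivation yields replacements of no larger non-crucial size. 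A closing \textbf{QDF}$_\Prog$ step on the substituted instance then derives $\varphi'$, its side conditions $d' \dleq d_i$ and $d' \dleq \alpha \circ d'_j$ following from $d' \dleq d$ together with the original conditions. Since the root inference is crucial in both the old and the new tree, the non-crucial counts of the subtrees combine to give $|\mathcal{T'}| \leq |\mathcal{T}|$. I expect the only delicate point, and the main obstacle, to be the bookkeeping in this \textbf{QDF}$_\Prog$ case: matching the substituted instance's argument terms $t_i\sigma$, body $r\sigma$ and constraints $\delta_j\sigma$ against the three families of reshaped premises, and confirming that $\alpha$ and the attenuation side conditions transfer correctly under $\sigma$.
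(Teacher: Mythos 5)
Your proposal is correct and takes essentially the approach the paper intends: the paper omits this proof, stating only that it is ``similar to that of Lemma~\ref{lem:iqcrwl-properties}'', and your adaptation --- carrying over items \ref{lem:qcrwl-properties:1} and \ref{lem:qcrwl-properties:2} and the non-\textbf{QDF} cases of item \ref{lem:qcrwl-properties:3} verbatim, while in the \textbf{QDF}$_\Prog$ case replacing the appeal to closure of $\I$ under $(\qdom,\cdom)$-entailment by closure of $[\Prog]_\bot$ under substitution --- is precisely the natural reformulation called for. The premise entailments you set up (all witnessed by the same $\sigma$), the transfer of the side conditions via $d' \dleq d \dleq d_i$ and $d' \dleq d \dleq \alpha \circ d'_j$, and the bookkeeping of non-crucial inference counts all check out.
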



The next theorem is the main result in this section. It provides a nice equivalence between $\QCRWLdc$-derivability and semantic consequence in the sense of Definition \ref{dfn:model} ({\em soundness} and {\em completeness} properties), as well as a characterization of least program models in terms of $\QCRWLdc$-derivability ({\em canonicity property}).

\begin{theorem}[$\QCRWLdc$ characterizes program semantics]
\label{thm:correctness}
For any $\qcflp{\qdom}{\cdom}$-program $\Prog$ and any qc-statement $\varphi$, the following three conditions are equivalent:
$$
  (a) \quad \Prog \qcrwldc \varphi \hspace{1cm}
  (b) \quad \Prog \model{\qdom,\cdom} \varphi \hspace{1cm}
  (c) \quad \Sp \iqcrwldc \varphi
$$
Moreover, we also have:
\begin{enumerate}
  \item Soundness: for any qc-statement $\varphi$, $\Prog \qcrwldc \varphi \Longrightarrow \Prog \model{\qdom,\cdom} \varphi$.
  \item Completeness: for any qc-statement $\varphi$, $\Prog \model{\qdom,\cdom} \varphi \Longrightarrow \Prog \qcrwldc \varphi$.
  \item Canonicity: $\Sp = \{\varphi \mid \varphi \text{ is a qc-fact and } \Prog \qcrwldc \varphi\}$.
\end{enumerate}
\end{theorem}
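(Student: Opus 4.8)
The plan is to establish the three-way equivalence by proving a cycle of implications, specifically $(a) \Rightarrow (c) \Rightarrow (b) \Rightarrow (a)$, from which the soundness, completeness, and canonicity statements follow as easy corollaries. The canonicity property is essentially the restriction of $(a) \Leftrightarrow (c)$ to qc-facts, combined with the conservation property of $\IQCRWLdc$ (Lemma \ref{lem:iqcrwl-properties}(\ref{lem:iqcrwl-properties:4})), since $\Sp \iqcrwldc \varphi$ for a qc-fact $\varphi$ is equivalent to $\varphi \in \Sp$; soundness is $(a) \Rightarrow (b)$ and completeness is $(b) \Rightarrow (a)$.

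For the implication $(c) \Rightarrow (b)$, I would argue that if $\Sp \iqcrwldc \varphi$, then $\varphi$ is a semantic consequence of $\Prog$. Since $\Sp$ is itself a model of $\Prog$ (being the least fixpoint of $\STp$, by Theorem \ref{thm:least-model-lfp} and Proposition \ref{prop:properties-STp}, any fixpoint satisfies $\STp(\Sp) \subseteq \Sp$ hence $\Sp \model{\qdom,\cdom} \Prog$), I must show that derivability in $\IQCRWLdc$ from any model $\I$ propagates: if $\I \model{\qdom,\cdom} \Prog$ then $\Sp \iqcrwldc \varphi$ implies $\I \iqcrwldc \varphi$. The natural route is to show $\Sp \subseteq \I$ for every model $\I$ (leastness of $\Sp$ among models, which follows from $\STp(\I) \subseteq \I$ and Knaster--Tarski characterizing $\Sp$ as the least pre-fixpoint), and then invoke monotonicity of $\IQCRWLdc$-derivability in its interpretation argument, i.e. $\I_1 \subseteq \I_2$ and $\I_1 \iqcrwldc \varphi$ imply $\I_2 \iqcrwldc \varphi$ — which is immediate from inspecting rule \textbf{QDF}$_\I$, the only rule consulting the interpretation. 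This gives $(c) \Rightarrow (b)$.

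The implication $(b) \Rightarrow (a)$ is the standard completeness direction: I would instantiate the semantic consequence $\Prog \model{\qdom,\cdom} \varphi$ at the particular model $\Sp$, obtaining $\Sp \iqcrwldc \varphi$, and then use $(c) \Rightarrow (a)$. Thus the real technical content lives in relating $\QCRWLdc$-derivability from $\Prog$ with $\IQCRWLdc$-derivability from $\Sp$, i.e. the equivalence $(a) \Leftrightarrow (c)$. For $(a) \Rightarrow (c)$ I would proceed by induction on the structure (and non-crucial step count) of the $\QCRWLdc$ proof tree, translating each inference into $\IQCRWLdc$: the rules \textbf{QTI}, \textbf{QRR}, \textbf{QDC}, \textbf{QPF}, \textbf{QAC} are shared verbatim, so only applications of \textbf{QDF}$_\Prog$ need work. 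There, a program-rule instance together with subproofs for $r \to t$ and the conditions $\delta_j$ witnesses membership of the corresponding qc-fact $(f(\ntup{t}{n})\theta \to t)\sharp d_0 \Leftarrow \Pi$ in $\Sp$ via the definition of $pre\STp$ and $\STp$; this qc-fact then licenses a matching \textbf{QDF}$_\Sp$ step in $\IQCRWLdc$. The converse $(c) \Rightarrow (a)$ proceeds symmetrically by induction, replacing each \textbf{QDF}$_\Sp$ step — which consumes a qc-fact $\psi \in \Sp$ — by unfolding how $\psi$ entered $\Sp$: by the canonicity-style fixpoint construction $\Sp = \union_k \STp\!\!\uparrow^k(\ibot)$ and Lemma \ref{lem:non-trivial-STp}, any non-trivial $\psi \in \Sp$ lies in $pre\STp(\Sp)$, hence arises from a program rule instance with $\IQCRWLdc$-subproofs that translate back into a \textbf{QDF}$_\Prog$ inference.

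The main obstacle I anticipate is the bookkeeping in the $(c) \Rightarrow (a)$ direction, where unwinding membership $\psi \in \Sp$ requires a secondary induction on the fixpoint iteration level $k$ with $\psi \in \STp\!\!\uparrow^k(\ibot)$, carefully separating trivial qc-facts (handled by \textbf{QTI}) from non-trivial ones (handled by Lemma \ref{lem:non-trivial-STp} and Proposition \ref{prop:preSTp-entailment}). One must ensure the qualification-value side conditions $d \dleq \alpha \circ d'_j$ and $d \dleq d_i$ transfer faithfully between the two logics, and that the entailment property (Lemmas \ref{lem:iqcrwl-properties}(\ref{lem:iqcrwl-properties:3}) and \ref{lem:qcrwl-properties}(\ref{lem:qcrwl-properties:3})) is available to absorb the substitution and information-ordering slack introduced when matching program-rule instances against the expressions appearing in the proof tree. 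Because $\STp$ is defined through $\IQCRWLdc$-derivability from $\I$ in exactly the shape of a \textbf{QDF}$_\Prog$ premise, the alignment is tight and the induction should close cleanly once these side conditions are tracked.
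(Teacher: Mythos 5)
Your proposal is correct, and its technical heart coincides with the paper's: the hard direction $(c) \Rightarrow (a)$ is handled exactly as in the paper's proof, by first bounding the finitely many qc-facts used in the $\IQCRWLdc$-tree inside a single stage $\STp\!\uparrow^k(\ibot)$ and then running a main induction on $k$ with a nested induction on tree size, using Lemma \ref{lem:non-trivial-STp} to drop each non-trivial fact from stage $k{+}1$ into $pre\STp(\STp\!\uparrow^k(\ibot))$ — you rightly identify that unfolding $\Sp = \STp(\Sp)$ naively would be circular, and the stage index is what makes the induction well-founded. Where you genuinely diverge is in the orientation of the cycle for the easy directions. The paper proves $(a) \Rightarrow (b)$ directly, by rule-wise soundness of each $\QCRWLdc$-inference with respect to an \emph{arbitrary} model $\I$, invoking Definition \ref{dfn:model} in the \textbf{QDF}$_\Prog$ case; $(b) \Rightarrow (c)$ is then just instantiation at $\Sp$. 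You instead prove $(a) \Rightarrow (c)$ — the same induction over the $\QCRWLdc$-tree, but specialized to $\I = \Sp$, with membership of the needed qc-fact in $\Sp$ obtained via $pre\STp(\Sp) \subseteq \STp(\Sp) = \Sp$ rather than via the model definition — and recover $(c) \Rightarrow (b)$ from two auxiliary observations the paper does not need to state: that models are precisely pre-fixpoints of $\STp$ (Proposition \ref{prop:properties-STp}(2)) so Knaster--Tarski gives $\Sp \subseteq \I$ for every model $\I$, and that $\IQCRWLdc$-derivability is monotone in the interpretation argument, which is indeed immediate since \textbf{QDF}$_\I$ is the only rule consulting $\I$. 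Both arrangements are sound; yours isolates all model-theoretic content in a one-line order-theoretic argument and runs the proof-tree induction only once against the concrete interpretation $\Sp$, at the price of making $(b) \Rightarrow (a)$ factor through $(c)$, while the paper's direct soundness proof over arbitrary models keeps the three implications independent. Two small points of bookkeeping you should make explicit when writing this up: in your $(a) \Rightarrow (c)$ step for \textbf{QDF}$_\Prog$, the fact placed in $pre\STp(\Sp)$ may be trivial (when $t = \bot$ or $\Unsat{\cdom}{\Pi}$), in which case the conclusion is trivial and must be closed by \textbf{QTI} rather than \textbf{QDF}$_{\Sp}$, since the latter requires a non-trivial fact; and in $(c) \Rightarrow (a)$ a \emph{single} stage $k$ must bound all facts occurring anywhere in the tree simultaneously (finiteness of the tree plus monotonicity of the chain $\STp\!\uparrow^k(\ibot)$), as the paper makes explicit before starting its induction.
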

\begin{proof}
Assuming the equivalence between $(a)$, $(b)$ and $(c)$,
soundness and completeness are a trivial consequence of the equivalence between $(a)$ and $(b)$,
and canonicity holds because of the equivalences $\varphi \in \Sp \iff \Sp \iqcrwldc \varphi \iff \Prog \qcrwldc \varphi$,
which follow from the  conservation property from Lemma \ref{lem:iqcrwl-properties} and the equivalence between $(c)$ and $(a)$.
The rest of the proof consists of separate proofs for the three implications
$(a) \Rightarrow (b)$, $(b) \Rightarrow (c)$ and $(c) \Rightarrow (a)$.

\smallskip
\noindent [$(a) \Rightarrow (b)$] We assume $(a)$, i.e.,  $\Prog \qcrwldc \varphi$ with a $\QCRWLdc$-proof tree $\mathcal{T}_\Prog$
including $k \geq 1$ $\QCRWLdc$-inference steps. In order to prove $(b)$ we also assume a qc-interpretation $\I$
such that $\I \model{\qdom,\cdom} \Prog$.
We must prove $\I \iqcrwldc \varphi$ with some $\QCRWLdc$-proof tree $\mathcal{T}_\I$. This follows easily by induction on $k$, using the fact that each $\QCRWLdc$-inference rule \textbf{QRL} is {\em sound} in the following sense: each inference step
$$\displaystyle\frac{~ \varphi_1  ~\cdots~ \varphi_n  ~}{\varphi} \text{ QRL}$$
verifying $\I \iqcrwldc \varphi_i ~ (1 \le i \le n)$ (i.e., the premises are valid in $\I$) also verifies
$\I \iqcrwldc \varphi$ (i.e., the conclusion is valid in $\I$).
For \textbf{QRL} other than \textbf{QDF}$_\Prog$, soundness of \textbf{QRL} does not depend on the assumption $\I \model{\qdom,\cdom} \Prog$;
it can be easily proved by using the homonomous $\IQCRWLdc$-inference rule \textbf{QRL}.
In the case of  \textbf{QDF}$_\Prog$, $\varphi$ has the form $f(\ntup{e}{n}) \to t) \sharp d \Leftarrow \Pi$
and the validity of the premises in $\I$ means the following:
\begin{itemize}
  \item (1) $\I \iqcrwldc (e_i \to t_i) \sharp d_i \Leftarrow \Pi ~ (1 \le i \le n)$ ,
  \item (2) $\I \iqcrwldc (r \to t) \sharp d'_0 \Leftarrow \Pi$ , and
  \item (3) $\I \iqcrwldc \delta_j\sharp d'_j \Leftarrow \Pi ~ (1 \le j \le m)$
\end{itemize}
with $f \in DF^n$,
$(f(\ntup{t}{n}) \qto{\alpha} r \Leftarrow \delta_1, \cdots, \delta_m) \in [\Prog]_\bot$,
$d \dleq d_i ~ (1 \le i \le n)$ and
$d \dleq \alpha \circ d'_j ~ (0 \le j \le m)$.
Then, from the assumption $\I \model{\qdom,\cdom} \Prog$ and Def. \ref{dfn:model} we obtain
\begin{itemize}
  \item (4) $((f(\ntup{t}{n}) \to t)\sharp d \Leftarrow \Pi) \in \I$.
\end{itemize}
Finally, from (1), (4) we conclude that $(f(\ntup{e}{n}) \to t) \sharp d \Leftarrow \Pi$ can de derived by means of a
\textbf{QDF}$_\I$-inference step from premises $(e_i \to t_i) \sharp d_i \Leftarrow \Pi ~ (1 \le i \le n)$.
Therefore, $\I \iqcrwldc (f(\ntup{e}{n}) \to t) \sharp d \Leftarrow \Pi$, as desired.

\smallskip
\noindent [$(b) \Rightarrow (c)$] Straightforward, given that $\Sp \model{\qdom,\cdom} \Prog$,
as proved in Th. \ref{thm:least-model-lfp}.

\smallskip
\noindent [$(c) \Rightarrow (a)$] Let $\varphi$ be any c-statement and assume $\Sp \iqcrwldc \varphi$ with proof tree $\mathcal{T}$.
Note that  $\mathcal{T}$ includes a finite number of \textbf{QDF}$_\I$-inference steps with $\I = \Sp$,
relying on finitely many qc-facts $\psi_i \in \Sp\, (1 \leq i \leq p)$.
As $\Sp = \union_{k\in\NAT} \STp\!\uparrow^k(\ibot)$ because of Th. \ref{thm:least-model-lfp},
there must exist some $k \in \NAT$ such that all the $\psi_i\, (1 \leq i \leq p)$ belong to $\STp\!\uparrow^k(\ibot)$ and thus
$\STp\!\uparrow^k(\ibot) \iqcrwldc \varphi$.
Therefore, it is enough to prove by induction on $k$ that
$$\STp\!\uparrow^k(\ibot) \iqcrwldc \varphi \Longrightarrow \Prog \qcrwldc \varphi \enspace $$

\noindent\emph{Basis} ($k$=0). Assume $\STp\!\uparrow^0(\ibot) \iqcrwldc \varphi$ with $\IQCRWLdc$-proof tree $\mathcal{T}$.
As $\STp\!\uparrow^0(\ibot) = \ibot$, which only includes trivial qc-facts and \textbf{QDF}$_\I$ always uses non-trivial qc-facts,
$\mathcal{T}$  cannot include \textbf{QDF}$_\I$-inference steps. Hence, $\mathcal{T}$  also serves as a $\QCRWLdc$-proof tree
which includes no \textbf{QDF}$_\Prog$-inference steps and proves $\STp\!\uparrow^0(\ibot) \qcrwldc \varphi$. \\

\noindent\emph{Inductive step} ($k{>}$0). Assume $\STp\!\uparrow^{k+1}(\ibot) \iqcrwldc \varphi$ with $\IQCRWLdc$-proof tree $\mathcal{T}$.
Then $\Prog \qcrwldc \varphi$ can be proved by an auxiliary  induction on the size of $\mathcal{T}$, measured as its number of nodes.
The reasoning must distinguish six cases,
according to the $\IQCRWLdc$-inference rule \textbf{QRL} used to infer $\varphi$ at the root of $\mathcal{T}$.
Here we present only the most interesting case, when \textbf{QRL} is \textbf{QDF}$_\I$.
In this case, $\varphi$ is a non-trivial qc-statement of the form $(f(\ntup{e}{n}) \to t)\sharp d \Leftarrow \Pi$,
and $\mathcal{T}$ has the form
$$\displaystyle\frac{(~ (e_i \to t_i)\sharp d_i \Leftarrow \Pi ~)_{i =1\ldots n}}{\varphi : (f(\ntup{e}{n}) \to t)\sharp d \Leftarrow \Pi} \text{ QDF}_\I$$
with non-trivial $\psi : ((f(\ntup{t}{n}) \to t)\sharp d_0 \Leftarrow \Pi) \in \STp\!\uparrow^{k+1}(\ibot)$,
$d \dleq d_i ~ (0 \le i \le n)$,
and $\STp\!\uparrow^{k+1}(\ibot) \iqcrwldc (e_i \to t_i)\sharp d_i \Leftarrow \Pi$
proved by $\IQCRWLdc$-proof trees $\mathcal{T}_i$ wit sizes smaller than the size of $\mathcal{T}$ $(1 \leq i \leq n)$.
Therefore, the inductive hypothesis of the nested induction guarantees
\begin{itemize}
\item (1) $\Prog \qcrwldc (e_i \to t_i)\sharp d_i \Leftarrow \Pi$
with $\QCRWLdc$-proof trees $\mathcal{\hat{T}}_i\  (1 \le i \le n)$
\end{itemize}
On the other hand, Lemma \ref{lem:non-trivial-STp} ensures  $\psi \in pre\STp(\STp\!\uparrow^k(\ibot))$.
Therefore, recalling Def. \ref{dfn:transformers}, there must exist
$f(\ntup{s}{n}) \qto{\alpha} r \Leftarrow \ntup{\delta}{m} \in \Prog$,
a substitution $\theta$ and qualification values $d'_0, d'_1, \ldots, d'_m$
satisfying $s_i\theta = t_i ~ (1 \le i \le n)$ and
\begin{itemize}
  \item (2) $\STp\!\uparrow^k(\ibot) \iqcrwldc \delta_j\theta \sharp d'_j \Leftarrow \Pi ~ (1 \le j \le m)$
  \item (3) $\STp\!\uparrow^k(\ibot) \iqcrwldc (r\theta \to t) \sharp d'_0 \Leftarrow \Pi$
  \item (4) $d_0 \dleq \alpha \circ d'_j ~ (0 \le j \le m)$
\end{itemize}
By the inductive hypothesis of the main induction, applied to (2) and (3), we get:
\begin{itemize}
  \item (5) $\Prog \qcrwldc \delta_j\theta \sharp d'_j \Leftarrow \Pi$
  with $\QCRWLdc$-proof trees $\mathcal{\hat{T'}}_j\  (1 \le j \le m)$
  \item (6) $\Prog \qcrwldc (r\theta \to t) \sharp d'_0 \Leftarrow \Pi$
  with $\QCRWLdc$-proof tree $\mathcal{\hat{T'}}$
\end{itemize}
From  $d \dleq d_i ~ (0 \le i \le n)$ and (4) we also obtain:
\begin{itemize}
  \item (7) $d \dleq d_i ~ (0 \le i \le n), ~ d \dleq \alpha \circ d'_j ~ (0 \le j \le m)$
\end{itemize}
Finally, we can prove $\Prog \qcrwldc \varphi$ with a $\QCRWLdc$-proof tree $\mathcal{\hat{T}}$ of the form:
$$
\displaystyle\frac
{((e_i \to s_i\theta)\sharp d_i \Leftarrow \Pi)_{i=1\ldots n} ~~
(r\theta \to t)\sharp d'_0 \Leftarrow \Pi ~~
(\delta_j\theta \sharp d'_j \Leftarrow \Pi)_{j=1\ldots m}}
{\varphi : (f(\ntup{e}{n}) \to t)\sharp d \Leftarrow \Pi} \text{ QDF}_\Prog
$$
using the program rule instance $(f(\ntup{s}{n}) \qto{\alpha} r \Leftarrow \ntup{\delta}{m})\theta \in [\Prog]_\bot$,
where (5) and (6) provide proof trees for deriving the premises and (7) ensures the additional conditions required by the
\textbf{QDF}$_\Prog$ inference at the root of $\mathcal{\hat{T}}$. \qed
\end{proof}

\subsection{Goals and their Solutions}

In all declarative programming paradigms, programs are generally used by placing goals and computing answers for them.
In this brief subsection we define the syntax of $\qcflp{\qdom}{\cdom}$-goals and we give a declarative characterization of goal solutions,
based on the $\QCRWLdc$ logic. This will allow formal proofs of correctness for the  goal solving methods presented  in Section \ref{Implementing}.

\begin{definition}[$\qcflp{\qdom}{\cdom}$-Goals and their Solutions]
\label{dfn:goals}
Assume a a countable set $\War$ of so-called  {\em qualification variables} $W$, disjoint from $\Var$ and $\cdom$'s signature $\Sigma$, and a $\qcflp{\qdom}{\cdom}$-program $\Prog$.  Then:
\begin{enumerate}
  \item A {\em goal} $G$ for $\Prog$ has the form $\delta_1\sharp W_1, \ldots, \delta_m\sharp W_m \sep W_1 \dgeq \beta_1, \ldots, W_m \dgeq \beta_m$, abbreviated as $(~ \delta_i\sharp W_i,\ W_i \!\!\dgeq\!\! \beta_i ~)_{i = 1 \ldots m}$, where $\delta_j\sharp W_j ~ (1 \leq j \leq m)$ are atomic $\cdom$-constraints annotated with different qualification variables $W_i$, and $W_i \dgeq \beta_i$ are so-called {\em threshold conditions}, with $\beta_i \in \aqdom ~ (1 \leq i \leq m)$.
  \item A {\em solution} for $G$ is any triple $\langle \sigma, \mu, \Pi \rangle$ such that $\sigma$ is a substitution, $\mu$ is a $\qdom$-valuation, $\Pi$ is a finite set of atomic primitive $\cdom$-constraints, and the following two conditions hold for all $1 \leq i \leq m$: $W_i\mu = d_i \dgeq \beta_i$, and $\Prog \qcrwldc (\delta_i\sigma)\sharp d_i \Leftarrow \Pi$. The set of all solutions for $G$ is noted $Sol_\Prog(G)$.\qed
\end{enumerate}
\end{definition}

Thanks to the \emph{Canonicity} property of Theorem \ref{thm:correctness}, solutions of $\Prog$ are valid in the least model $\Sp$ and hence in all models of $\Prog$. A goal for the library program and one solution for it have been presented in the Introduction. In this particular example, $\Pi = \emptyset$ and the $\QCRWL{\U}{\rdom}$ proof needed to check the solution according to Definition \ref{dfn:goals} can be formalized by following the intuitive ideas sketched in the Introduction.

\section{Implementation by Program Transformation} \label{Implementing}

Goal solving in instances of the $\cflp{\cdom}$ scheme from \cite{LRV07} has been  formalized by
means of {\em constrained narrowing} procedures as e.g. \cite{LRV04,Vad05}, and is supported by
systems such as {\sf Curry} \cite{curry} and $\toy$ \cite{toy}. In this section we present a
semantically correct transformation from $\qcflp{\qdom}{\cdom}$ into the first-order fragment of
$\cflp{\cdom}$ which can be used for implementing goal solving in $\qcflp{\qdom}{\cdom}$.


By abuse of notation, the first-order fragment of the $\cflp{\cdom}$ scheme will be noted simply as
$\cflp{\cdom}$ in the sequel. A formal description of $\cflp{\cdom}$ can be found in \cite{LRV07};
it is easily derived from the previous Section \ref{QCFLP} by simply omitting everything related to
qualification domains and values. Programs $\Prog$ are sets of program rules of the form
$f(\ntup{t}{n}) \to r \Leftarrow \Delta$, with no attenuation factors attached. Program semantics
relies on inference mechanisms for deriving {\em c-staments} from programs. In analogy to Def.
\ref{dfn:qc-statements}, a c-statement $\varphi$ may be a c-production $e \to t \Leftarrow \Pi$ or
a c-atom $\delta \Leftarrow \Pi$. In analogy to Def. \ref{dfn:interpretation}, {\em
c-interpretations} are defined as sets of c-statements closed under a $\cdom$-entailment relation.
Program models and semantic consequence are defined similarly as in Def. \ref{dfn:model}. Results
similar to Th. \ref{thm:least-model-lfp} and Th. \ref{thm:correctness} can be obtained to
characterize program semantics in terms of an interpretation transformer and a rewriting logic
$\CRWLc$, respectively.

For the purposes of this section it is enough to focus on $\CRWLc$, which is a formal system
consisting of the six inference rules displayed in Fig. \ref{fig:crwl}. They are quite similar to
the $\QCRWLdc$-inference rules from Fig. \ref{fig:qcrwl}, except that attenuation factors and
qualification values are absent.

\begin{figure}[h]
  \small
  \centering
    \begin{tabular}{|@{\hspace*{0.5cm}}l@{\hspace*{0.5cm}}|}\hline
      \\
      \textbf{TI} $\quad$ $\displaystyle\frac{}{\quad\varphi\quad}$ ~
      if $\varphi$ is a trivial c-statement.\\
      \\
      \textbf{RR} $\quad$ $\displaystyle\frac{}{~ v \to v \Leftarrow \Pi ~}$ ~
      if $v \in \Var \cup B_\cdom$.\\
      \\
      \textbf{DC} $\quad$ $\displaystyle\frac
        {(~ e_i \to t_i \Leftarrow \Pi ~)_{i = 1 \ldots n}}
        {c(\ntup{e}{n}) \to c(\ntup{t}{n}) \Leftarrow \Pi}$ ~
        if $c \in DC^n$. \\
      \\
      \textbf{DF$_{\Prog}$} $\quad$ $\displaystyle\frac
        {(~ e_i \to t_i \Leftarrow \Pi ~)_{i = 1 \ldots n}
          \quad r \to t  \Leftarrow \Pi
          \quad (~ \delta_j \Leftarrow \Pi ~)_{j=1 \ldots m}}
        {f(\ntup{e}{n}) \to t \Leftarrow \Pi}$ \\ \\
      if $f \in DF^n$ and $(f(\ntup{t}{n}) \qto{\alpha} r \Leftarrow \delta_1,\ldots,\delta_m) \in [\Prog]_{\bot}$ \\
      where $[\Prog]_\bot = \{ R_l\theta \mid R_l \mbox{ is a rule in } \Prog \mbox{ and } \theta \mbox{ is a substitution}\}$.\\
      \\
      \textbf{PF} $\quad$ $\displaystyle\frac
        {(~ e_i \to t_i \Leftarrow \Pi ~)_{i=1 \ldots n}}
        {p(\ntup{e}{n}) \to v \Leftarrow \Pi}$ ~
      \begin{tabular}{l}
        if $p \in PF^n$, $v \in \Var \cup DC^0 \cup B_\cdom$ \\
        and $\Pi \model{\cdom} p(\ntup{t}{n}) \to v$.
      \end{tabular} \\
      \\
      \textbf{AC} $\quad$ $\displaystyle\frac
        {(~ e_i \to t_i \Leftarrow \Pi ~)_{i=1 \ldots n}}
        {p(\ntup{e}{n}) == v \Leftarrow \Pi}$ ~
      \begin{tabular}{l}
        if $p \in PF^n$, $v \in \Var \cup DC^0 \cup B_\cdom$ \\
        and $\Pi \model{\cdom} p(\ntup{t}{n}) == v$.
      \end{tabular} \\
      \\\hline
    \end{tabular}
  \caption{First Order Constrained Rewriting Logic}
  \label{fig:crwl}
\end{figure}

The notation $\Prog \crwlc \varphi$  indicates that $\varphi$ can be inferred from $\Prog$ in
$\CRWLc$. In analogy to the Canonicity Property from Th.  \ref{thm:correctness}, it is possible to
prove that the least model of $\Prog$ w.r.t. set inclusion can be characterized as $\Sp = \{\varphi
\mid \varphi \text{ is a c-fact and } \Prog \crwlc \varphi\}$. Therefore, working with formal
inference in the rewrite logics $\QCRWLdc$ and $\CRWLc$ is sufficient for proving the semantic
correctness of the transformations presented in the rest of this section.


The following definition is similar to Def. \ref{dfn:goals}. It will be useful for proving the
correctness  of the goal solving procedure for $\qcflp{\qdom}{\cdom}$-goals discussed in the final
part of this section.

\begin{definition}[$\cflp{\cdom}$-Goals and their Solutions]
\label{dfn:cflpgoals} Assume a $\cflp{\cdom}$-program $\Prog$. Then:
\begin{enumerate}
  \item A {\em goal} $G$ for $\Prog$ has the form $\delta_1, \ldots, \delta_m$ where $\delta_j$ are atomic $\cdom$-constraints.
  \item A {\em solution} for $G$ is any pair $\langle \sigma, \Pi\rangle$ such that $\sigma$ is a substitution, $\Pi$ is a finite set of atomic primitive
  $\cdom$-constraints, and $\Prog \crwlc \delta_j\sigma \Leftarrow \Pi$ holds for $1 \leq j \leq m$.
  The set of all solutions for $G$ is noted $Sol_\Prog(G)$.\qed
\end{enumerate}
\end{definition}



Now we are ready to describe a semantically correct transformation from $\qcflp{\qdom}{\cdom}$ into
$\cflp{\cdom}$. The transformation goes from a source signature $\Sigma$ into a target signature
$\Sigma'$ such that each $f \in DF^n$ in $\Sigma$ becomes $f' \in DF^{n+1}$ in $\Sigma'$, and all the other
symbols in $\Sigma$ remain the same in $\Sigma'$. There are four group of transformation rules
displayed in Figure \ref{fig:transformation} and designed to transform expressions, qc-statements,
program rules and goals, respectively. The transformation works by introducing fresh  qualification
variables $W$ to represent the qualification values attached to the results of calls to defined
functions, as well as qualification constraints to be imposed on the values of qualification
variables. Let us comment the four groups of rules in order.

\begin{figure}
  \small
  \centering
    \begin{tabular}{|@{\hspace*{0.5cm}}l@{\hspace*{0.5cm}}|}\hline
      \\
      \bf Transforming Expressions \\
      \\
      \textbf{TAE} $\quad$ $\displaystyle\frac
      {}
      {~ \transform{v} = (v,\emptyset, \emptyset) ~}$
      ~ if  $v \in \Var \cup B_\cdom$. \\
      \\
      \textbf{TCE$_1$} $\quad$ $\displaystyle\frac
      {(~ \transform{e_i} = (e_i',\Omega_i,\W_i) ~)_{i=1\ldots n}}
      {~ \transform{h(\ntup{e}{n}) } = (h(\ntup{e'}{n}),\, \bigcup_{i=1}^n \Omega_i,\, \bigcup_{i=1}^n \W_i) ~}$
      ~ if $h \in DC^n \cup PF^n$. \\
      \\
      \textbf{TCE$_2$} $\quad$ $\displaystyle\frac
      {(~ \transform{e_i} = (e_i',\Omega_i,\W_i) ~)_{i=1\ldots n}}
      {~ \transform{f(\ntup{e}{n}) } = (f'(\ntup{e'}{n},W), \Omega', \{W\}) ~}$ \\
      \\
      if $f \in DF^n$ and $W$ is a fresh variable,\\
      where $\Omega' = (\bigcup_{i=1}^n \Omega_i) \cup \{ \qval{W} \} \cup \{ \encode{W \dleq W'} \mid W' \in \bigcup_{i=1}^n \W_i \}$. \\
      \\\hline
 \\
      \bf Transforming qc-Statements \\
      \\
      \textbf{TP} $\quad$ $\displaystyle\frac
      {\transform{e} = (e', \Omega,\, \W)}
      {~ \transform{(e \rightarrow t)} = (e'\rightarrow t,\, \Omega ,\W) ~}$ \\
      \\
      \textbf{TA} $\quad$ $\displaystyle\frac
      {(~ \transform{e_i} = (e_i', \Omega_i, \W_i)  ~)_{i=1 \dots n}}
      {~ \transform{(p(\ntup{e}{n}) == v)} = (~ p(\ntup{e'}{n})==v, \bigcup_{i=1}^n \Omega_i , \bigcup_{i=1}^n  \W_i  ~) ~}$ \\
      \\
      if $p \in PF^n$, $v \in \Var \cup DC^0 \cup B_\cdom$. \\
       \\
      \textbf{TCS} $\quad$ $\displaystyle\frac
      {\transform{\psi} = (\psi', \Omega, \W)}
      {~ \transform{(\psi \sharp d \Leftarrow \Pi)} = (\psi' \Leftarrow \Pi, \Omega \cup \{ \encode{d \dleq W} \mid W \in \W\})) ~}$ \\
      \\
     if $\psi$ is of the form $e \to t$ or $p(\ntup{e}{n}) == v$ and $d \in D_\qdom$. \\
      \\\hline
      \\
      \bf Transforming Program Rules \\
      \\
      \textbf{TPR} $\quad$ $\displaystyle\frac
      {\transform{r} = (r',\Omega_r,\W_r) \qquad (~ \transform{\delta_i} = (\delta_i',\Omega_i, \W_i) ~)_{i=1 \dots m}}
      {~
        \begin{array}{l}
          \transform{(f(\ntup{t}{n}) \qto{\alpha} r \Leftarrow \delta_1, \dots, \delta_m)} = \\
          $\qquad$
          \begin{array}{ll}
            f'(\ntup{t}{n},W) \to r' \Leftarrow & \qval{W}, ~ \Omega_r, ~ (\encode{W \dleq \alpha \circ W'})_{W'\in \W_r}, \\
            & (~ \Omega_i, ~ (\encode{W \dleq \alpha \circ W'})_{W' \in \W_i}, ~ \delta'_i ~)_{i=1 \ldots m} \\
          \end{array}
        \end{array}
      ~}$ \\
      where $W$ is a fresh variable. \\
      \\\hline
      \\
      \bf Transforming Goals \\
      \\
      \textbf{TG} $\quad$ $\displaystyle\frac
      {(~ \transform{\delta_i} = (\delta_i',\Omega'_i,\W'_i) ~)_{i=1\dots m}}
      {~
      \begin{array}{l}
        \transform{((~ \delta_i \sharp W_i, W_i \dgeq \beta_i ~)_{i=1\dots m})} = \\
        $\qquad$ (~ \Omega_i', ~ \qval{W_i}, ~ (\encode{W_i \dleq W'})_{W' \in \W'_i}, ~ \encode{W_i \dgeq \beta_i}, ~ \delta'_i ~)_{i=1\dots m}
      \end{array}
      ~}$ \\
      \\\hline
    \end{tabular}
  \caption{Transformation rules}
  \label{fig:transformation}
\end{figure}

Transforming any expression $e$ yields a triple $\transform{e} = (e', \Omega, \W)$, where $\Omega$
is a set of qualification constraints and $\W$ is the set of qualification variables occurring in
$e'$ at outermost positions. This set is relevant because the qualification value attached to $e$
cannot exceed the infimum in $\qdom$ of the values of the variables $W \in \W$, and $\transform{e}$
is computed by recursion on $e$'s syntactic structure as specified by the transformation rules {\bf
TAE}, {\bf TCE$_1$} and {\bf TCE$_2$}. Note that {\bf TCE$_2$} introduces a new qualification
variable $W$ for each call to a defined function $f \in DF^n$ and builds a set $\Omega'$ of
qualification constraints ensuring that $W$ must be interpreted as a qualification value not
greater than the qualification values attached to $f$'s arguments. {\bf TCE$_1$} deals with calls
to constructors and primitive functions just by collecting information from the arguments, and
{\bf TAE} is self-explanatory.

Unconditional productions and atomic constraints are transformed by means of {\bf TP} and {\bf TA},
respectively, relying on the transformation of expressions in the obvious way. Relying on {\bf TP}
and {\bf TA},  {\bf TCS} transforms any qc-statement of the form $\psi \sharp d \Leftarrow \Pi$
into a c-statement  whose conditional part includes, in addition to $\Pi$, the qualification
constraints $\Omega$ coming from $\transform{\psi}$ and extra qualification constraints ensuring
that $d$ is not greater than allowed by $\psi$'s qualification.

Program rules are transformed by {\bf TPR}. Transforming the left-hand side $f(\ntup{t}{n})$
introduces a fresh symbol $f' \in DF^{n+1}$ and a fresh qualification variable $W$. The transformed
right-hand side $r'$ comes from $\transform{r}$, and the transformed conditions are obtained from
the constraints coming from $\transform{r}$ and $\transform{\delta_i} (1 \leq i \leq m)$ by adding
extra qualification constraints to be imposed on $W$, namely $\qval{W}$ and $(\encode{W \dleq
\alpha \circ W'})_{W' \in \W'}$, for $\W' = \W_r$ and $\W' = \W_i$ $(1 \leq i \leq m)$. By
convention, $(\encode{W \dleq \alpha \circ W'})_{W' \in \W'}$ is understood as $\encode{W \dleq
\alpha}$ in case that $\W' = \emptyset$. The idea is that $W$'s value cannot exceed the infimum in
$\qdom$ of all the values $\alpha \circ \beta$, for the different $\beta$ coming from the
qualifications  of $r$ and $\delta_i$ $(1 \leq i \leq m)$.

Finally, {\bf TG} transforms a goal $(~ \delta_i \sharp W_i,\, W_i \dgeq \beta_i ~)_{i=1\dots m}$
by transforming each atomic constraint $\delta_i$ and adding $\qval{W_i}$, ($\encode{W_i \dleq
W'})_{W' \in \W'_i}$ and $\encode{W_i \dgeq \beta_i}$ $(1 \leq i \leq m)$ to ensure that each $W_i$
is interpreted as a qualification value not bigger than the qualification computed for $\delta_i$
and satisfying the threshold condition $W_i \dgeq \beta_i$. In case that $\W'_i = \emptyset$,
$(\encode{W_i \dleq W'})_{W' \in \W'_i}$ is understood as $\encode{W_i \dleq \tp}$.


The result of applying {\bf TPR} to all the program rules of a program $\Prog$ will be noted as
$\transform{\Prog}$. The following theorem proves that $\QCRWLdc$-derivability from $\Prog$
corresponds to $\CRWLc$-derivability from $\transform{\Prog}$. Since program semantics in
$\qcflp{\qdom}{\cdom}$ and in $\cflp{\cdom}$ is characterized by, respectively, derivability in
$\QCRWLdc$ and in $\CRWLc$, the program transformation is semantically correct. The theorem uses an
auxiliary lemma we are proving first which indicates that the constraints obtained when
transforming a qc-statement always admit a solution.

\begin{lemma} \label{lemma:trivialaux}
Let $\varphi = \psi\sharp d \Leftarrow \Pi$ be a qc-statement such that $\transform{\varphi} =
(\psi' \Leftarrow \Pi,\Omega')$. Then exists $\rho : \varset{\Omega'} \to \aqdom$ solution of
$\Omega'$.
\end{lemma}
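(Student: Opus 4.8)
The plan is to produce an explicit witness, namely the valuation that sends every qualification variable to the top value $\tp$, and to verify that it solves $\Omega'$. First I would determine the exact shape of the constraints that can occur in $\Omega'$. By inspecting the transformation rules one sees that constraints are created only by \textbf{TCE$_2$} and \textbf{TCS}: a routine induction on the structure of an expression $e$ shows that the set $\Omega$ in $\transform{e} = (e',\Omega,\W)$ consists solely of constraints of the forms $\qval{W}$ and $\encode{W \dleq W'}$ (with $W,W'$ qualification variables), because \textbf{TAE} and \textbf{TCE$_1$} add nothing and \textbf{TCE$_2$} adds exactly $\qval{W}$ and the $\encode{W \dleq W'}$. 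Rules \textbf{TP} and \textbf{TA} merely pass this set along, and \textbf{TCS} augments it with the constraints $\encode{d \dleq W}$ for $W \in \W$. Hence every member of $\Omega'$ is of one of the three forms $\qval{W}$, $\encode{W \dleq W'}$, $\encode{d \dleq W}$, and $\varset{\Omega'}$ contains only qualification variables.

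Next I would set $\rho(W) = \tp$ for all $W \in \varset{\Omega'}$; this is a legitimate map into $\aqdom$ since $\tp \neq \bt$, so $\tp \in \aqdom$. To see that $\rho \in \Solc{\Omega'}$ I would check the three forms one at a time, each time invoking the characterization of qualification constraints from Definition \ref{dfn:expressible}, which applies precisely because all the values involved --- $\rho(W)$, $\rho(W')$ and the fixed $d \in \aqdom$ attached to $\varphi$ --- belong to $\aqdom$. A constraint $\qval{W}$ holds because $\rho(W) = \tp \in \aqdom$; a constraint $\encode{W \dleq W'}$ holds because $\rho(W) = \tp \dleq \tp \circ \tp = \tp \circ \rho(W')$ by axiom 3(b) of Definition \ref{def:qd}; and a constraint $\encode{d \dleq W}$ holds because $d \dleq \tp = \rho(W)$, $\tp$ being the top of the lattice.

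I do not expect any genuine obstacle, since the claim is only a satisfiability statement and the top valuation relaxes every upper bound the transformation can impose. The two points deserving care are purely bookkeeping: verifying that $\Omega'$ really contains no other kind of constraint --- in particular no constraint $\encode{W \dleq \alpha \circ W'}$, as those are generated only by \textbf{TPR} and \textbf{TG}, which play no role in transforming a bare qc-statement --- and noting that $d \in \aqdom$ and $\tp \in \aqdom$, so that Definition \ref{dfn:expressible} indeed governs the solution sets in question.
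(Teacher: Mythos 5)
Your proof is correct and follows essentially the same route as the paper's: both classify the constraints in $\Omega'$ as being of the forms $\qval{W}$, $\encode{W \dleq W'}$ and $\encode{d \dleq W}$ by inspecting the transformation rules, and both take the valuation sending every qualification variable to $\tp$ as the witness. If anything, you are slightly more thorough than the paper, which merely asserts that the top valuation works, whereas you explicitly verify each constraint form via Definition \ref{dfn:expressible} and the axioms of Definition \ref{def:qd}.
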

\begin{proof}
$\transform{\varphi}$ is obtained by the transformation rule {\bf TCS} of Figure
\ref{fig:transformation}. This rule needs to obtain $\transform{\psi}$ which can be done using
either the transformation rule {\bf TP} or {\bf TA} of the same figure. In the case of using
\textbf{TP}, $\psi$ must be of the form $(e \rightarrow t)$ and $\Omega'$ will be of the form
$\Omega \cup \{ \encode{d \dleq W} \mid W \in \W\}$, with $\Omega, \W$ such that $\transform{e} = (e',
\Omega, \W)$. Checking the transformation rules for expressions (again Figure
\ref{fig:transformation}) we see that $\Omega$  is a set of constraints where each element is
either of the form $\encode{W \dleq W'}$ or $\qval{W}$, with $W, W' \in \War$. Then $\rho$ can be
defined assigning $\tp$ to every variable $W$ occurring in either $\Omega'$ or $\W$. The case
corresponding to the transformation rule {\bf TA} is analogous. \qed
\end{proof}

\begin{theorem}
\label{theo:transformation} Let $\Prog$ be a $\qcflp{\qdom}{\cdom}$-program and $\psi\sharp d
\Leftarrow \Pi$ a qc-statement such that $\transform{(\psi\sharp d \Leftarrow \Pi)} = (\psi'
\Leftarrow \Pi,\Omega')$. Then the two following statements are equivalent:
\begin{enumerate}
  \item \label{theo:transformation:1} $\Prog \qcrwl{\qdom}{\cdom} \psi \sharp d \Leftarrow \Pi$.
  \item \label{theo:transformation:2} $\transform{\Prog} \crwl{\cdom} \psi'\!\rho \Leftarrow \Pi$ for some $\rho \in \Sol{\cdom}{\Omega'}$ such that $\domset{\rho} = \varset{\Omega'}$.
\end{enumerate}
\end{theorem}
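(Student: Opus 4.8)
The plan is to prove the two implications separately, each by induction on the structure of the relevant proof tree --- the $\QCRWLdc$-tree for \ref{theo:transformation:1}~$\Rightarrow$~\ref{theo:transformation:2} and the $\CRWLc$-tree for the converse --- distinguishing cases according to the inference rule standing at the root. The guiding observation is that the six rules of $\QCRWLdc$ (Fig.~\ref{fig:qcrwl}) and of $\CRWLc$ (Fig.~\ref{fig:crwl}) are in evident one-to-one correspondence, and that the rules of Fig.~\ref{fig:transformation} are designed so that the side conditions on qualification values in a $\QCRWLdc$-rule (e.g.\ $d \dleq d_i$ and $d \dleq \alpha \circ d'_j$ in \textbf{QDF}$_\Prog$) translate precisely into the satisfiability of the generated qualification constraints. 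This is where the expressibility of $\qdom$ in $\cdom$ (Definition~\ref{dfn:expressible}) and the monotonicity of $\circ$ are used repeatedly: a $\qdom$-valuation satisfies $\encode{X \dleq Y \circ Z}$ exactly when $X \dleq Y \circ Z$ holds in $\qdom$. The trivial case is dispatched uniformly, since triviality of $\psi\sharp d \Leftarrow \Pi$ coincides with triviality of $\psi' \Leftarrow \Pi$: then \ref{theo:transformation:1} holds by \textbf{QTI}, and \ref{theo:transformation:2} holds by \textbf{TI} together with the solution $\rho$ of $\Omega'$ provided by Lemma~\ref{lemma:trivialaux}.

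For \ref{theo:transformation:1}~$\Rightarrow$~\ref{theo:transformation:2} I would read $\rho$ off the qualification values occurring in the given $\QCRWLdc$-proof. For a call $f(\ntup{e}{n})$ treated by \textbf{QDF}$_\Prog$ with head qualification $d$, the fresh variable $W_f$ introduced by \textbf{TCE}$_2$ is set to $\rho(W_f) = d$; the inductive hypotheses applied to the premises $(e_i \to t_i)\sharp d_i$, $(r \to t)\sharp d'_0$ and $\delta_j\sharp d'_j$ supply solutions of the constraint sets generated for those premises, which combine --- their qualification variables being globally fresh, hence disjoint --- into $\rho$ and into a rule-instance substitution $\theta'$ for the transformed rule $\transform{R_l}$. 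Each generated qualification constraint then holds: $\encode{d \dleq W_f}$ since $d \dleq d$; $\encode{W_f \dleq W'}$ for $W' \in \W_{e_i}$ since $d \dleq d_i \dleq \rho(W')$; and $\encode{W \dleq \alpha \circ W'}$ for $W'$ in $\W_r$ or $\W_i$ since $d \dleq \alpha \circ d'_k \dleq \alpha \circ \rho(W')$ by monotonicity of $\circ$. The rules \textbf{QDC}, \textbf{QPF}, \textbf{QAC} follow the same pattern with $\rho$ the mere union of the argument solutions, and \textbf{QRR} produces $\Omega' = \emptyset$.

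For the converse \ref{theo:transformation:2}~$\Rightarrow$~\ref{theo:transformation:1} I would recover the qualification values from $\rho$ and from the rule-instance substitution of the $\CRWLc$-proof. For a set $\W$ of outermost qualification variables attached to a subexpression, the associated qc-value is taken to be $\infi_{W \in \W} \rho(W)$ (read as $\tp$ when $\W = \emptyset$); because $\rho$ solves $\qval{W}$ and the relevant $\encode{d \dleq W}$, this infimum lies in $\aqdom$ and dominates $d$, so the side condition $d \dleq d_i$ of \textbf{QDC}/\textbf{QPF}/\textbf{QAC} is met. In the \textbf{DF}$_\Prog$ case the qualification constraints occurring in the body of the transformed rule must, by the $\CRWLc$ analogue of Lemma~\ref{lem:iqcrwl-properties}(\ref{lem:iqcrwl-properties:2}), be entailed by $\Pi$; since these constraints involve only qualification variables, which do not occur in $\Pi$, the rule-instance substitution is forced to ground them to values of $\aqdom$ rendering them true, and expressibility turns them back into the $\qdom$-inequalities $d_0 \dleq \alpha \circ d'_j$ demanded by \textbf{QDF}$_\Prog$. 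The inductive hypothesis applied to the premises yields the $\QCRWLdc$-derivations of the arguments, the right-hand side and the conditions, which \textbf{QDF}$_\Prog$ recombines.

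The main obstacle is the \textbf{QDF}$_\Prog$/\textbf{DF}$_\Prog$ case. Besides the bookkeeping of fresh qualification variables across the nested expression transformations, it needs an auxiliary commutation fact: transforming a program-rule instance $R_l\theta$ agrees, up to the fresh qualification variables that $\theta$ leaves untouched, with instantiating the transformed rule $\transform{R_l}$. This holds precisely because substitutions map variables to constructor terms, which contain no defined-function symbols and hence introduce no new qualification variables, so $\theta$ commutes with \textbf{TPR}, \textbf{TCE}$_1$ and \textbf{TCE}$_2$. With this in hand, the match between the attenuation operation $\circ$ and its encoding $\qbound{X,Y,Z}$, justified by expressibility and monotonicity as above, closes the case and thereby the induction.
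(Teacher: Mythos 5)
Your proposal is correct and follows essentially the same route as the paper's own proof: both directions proceed by induction on the respective proof trees with a case split on the root inference rule, building $\rho$ in the completeness direction as the disjoint union of the premise solutions extended with $W \mapsto d$ for the fresh variable of each \textbf{TCE}$_2$/\textbf{TPR} step, recovering the qualification values in the soundness direction as infima $\infi \{ W\rho \mid W \in \W \}$, and invoking the same auxiliary ingredients (Lemma \ref{lemma:trivialaux} for trivial statements, the commutation of substitutions with the transformation of right-hand sides and conditions, and the extraction of $\qdom$-inequalities from the qualification constraints in the transformed rule's body). The few places where you are more explicit than the paper --- citing the $\CRWLc$ analogue of Lemma \ref{lem:iqcrwl-properties}(\ref{lem:iqcrwl-properties:2}) to justify that the grounded qualification constraints entailed by $\Pi$ yield true $\qdom$-inequalities, and observing that the recovered infima dominate $d$ and hence remain in $\aqdom$ --- are refinements of, not departures from, its argument.
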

\begin{proof} We prove the equivalence separately proving each implication.

\smallskip
\noindent $[\mathit{\ref{theo:transformation:1}}. \Rightarrow
\mathit{\ref{theo:transformation:2}}.]$ (\emph{Transformation completeness}). Assume $\Prog
\qcrwl{\qdom}{\cdom} \psi \sharp d \Leftarrow \Pi$ by means of a $\QCRWLdc$ proof tree $T$ with $k$
nodes. By induction on $k$ we show the existence of a $\CRWLc$ proof tree $T'$ witnessing
$\transform{\Prog} \crwl{\cdom} \psi'\!\rho \Leftarrow \Pi$ for some $\rho \in \Solc{\Omega'}$ such
that $\domset{\rho} = \varset{\Omega'}$.

\smallskip
\noindent \emph{Basis} ($k$=1). If $T$ contains only one node the $\QCRWLdc$ inference step applied
at the root must be one of the following:
\begin{itemize}
  \item {\bf QTI}. In this case $\psi\sharp d \Leftarrow \Pi$ is a trivial qc-statement, and we take $\rho$ as the substitution defined in Lemma \ref{lemma:trivialaux}. By Def. \ref{dfn:qc-statements}, $\psi\sharp d \Leftarrow \Pi$ trivial implies either $\psi = e \to \bot$ or $\Unsat{\cdom}{\Pi}$. In the first case $\psi' = e' \to \bot$ and therefore $\psi'\!\rho \Leftarrow \Pi$ is trivial. Analogously, if $\Unsat{\cdom}{\Pi}$ then $\psi'\!\rho \Leftarrow \Pi$ is trivial as well. Hence $T'$ consists of a single node $\psi'\!\rho \Leftarrow \Pi$ with a {\bf TI} inference step at its root.

  \item {\bf QRR}. In this case $\psi = t \to t$ for some $t \in \Var \cup \B_\cdom$, and $\transform{(\psi\sharp d \Leftarrow \Pi)} = (t \to t \Leftarrow \Pi, \emptyset)$
(applying the transformation rules {\bf TCS}, {\bf TP} and {\bf TAE}  to obtain $\transform{t} =
(t,\emptyset,\emptyset)$). Therefore $\rho$ can be defined as the identity substitution and prove
$\transform{\Prog} \crwl{\cdom} \psi'\!\rho \Leftarrow \Pi$ by using a single {\bf RR} inference
step.

  \item {\bf QDC}. In this case $\psi = c \to c$ and $\transform{(\psi\sharp d \Leftarrow \Pi)}= (c \to c \Leftarrow \Pi, \emptyset)$ (applying the transformation rules {\bf TCS}, {\bf TP} and {\bf TCE$_1$} for $\transform{c} = (c,\emptyset,\emptyset)$). Therefore $\rho$ can be defined as the identity substitution and prove $\transform{\Prog} \crwl{\cdom} \psi'\!\rho \Leftarrow \Pi$ by using a single {\bf DC} inference step.
\end{itemize}

\smallskip
\noindent \emph{Inductive step} ($k{>}$1). The $\QCRWLdc$ inference step applied at the root must
be one of the following:

\begin{itemize}
  \item {\bf QDC}. In this case $\psi = c(\ntup{e}{n}) \to c(\ntup{t}{n})$ and the first inference step is of the form
$$
  \displaystyle\frac
    {(~ (e_i \to t_i)\sharp d_i \Leftarrow \Pi ~)_{i = 1 \ldots n}}
    {(c(\ntup{e}{n}) \to c(\ntup{t}{n}))\sharp d \Leftarrow \Pi}
$$
with $d \dleq d_i$ $(1 \le i \le n)$. In order to obtain $\transform{\psi\sharp d \Leftarrow \Pi}$
we apply the transformation rules as follows:
\begin{itemize}
  \item By the transformation rule {\bf TCE$_1$}, $$\transform{c(\ntup{e}{n}) } = (c(\ntup{e'}{n}),\, \bigcup_{i=1}^n \Omega_i,\, \bigcup_{i=1}^n \W_i)$$ with $\transform{e_i} = (e_i',\Omega_i,\W_i)$ for $i=1 \dots n$.
  \item By {\bf TP} and with the result of the previous step, $$\transform{\psi} = \transform{(c(\ntup{e}{n}) \to c(\ntup{t}{n}))} = (c(\ntup{e'}{n}) \to c(\ntup{t}{n}),\, \bigcup_{i=1}^n \Omega_i,\, \bigcup_{i=1}^n \W_i) \enspace .$$
  \item And finally from $\transform{\psi}$ and by {\bf TCS}, $$\transform{(\psi\sharp d \Leftarrow \Pi)} = (c(\ntup{e'}{n}) \to c(\ntup{t}{n}) \Leftarrow \Pi, \Omega') \enspace ,$$ with $$ \Omega' =  \bigcup_{i=1}^n \Omega_i \cup \{\encode{d \dleq W} \mid W \in \bigcup_{i=1}^n \W_i\} \enspace .$$
\end{itemize}

From the premises ${(~ (e_i \to t_i)\sharp d_i \Leftarrow \Pi ~)_{i = 1 \ldots n}}$ of the {\bf
QDC} step and by the induction hypothesis we have that $\transform{\Prog} \crwl{\cdom} (e'_i \to
t_i)\rho_i \Leftarrow \Pi$, $i=1 \ldots n$ for some substitutions $\rho_i : \varset{\Omega'_i} \to
\aqdom$ solution of $$\Omega'_i = \Omega_i \cup \{\encode{d_i \dleq W} \mid W \in \W_i\}$$ for $i=1
\dots n$. Since $\varset{\Omega'_i} \cap \varset{\Omega'_j} = \emptyset$ for every $1 \leq i,j \leq
n$, $i \neq j$, and $\varset{\Omega'} = \bigcup_{i=1}^n \varset{\Omega'_i}$, we can define a new
substitution $\rho : \varset{\Omega'} \to \aqdom$ as $\rho = \biguplus_{i=1}^n \rho_i$. It is easy
to check that $\rho$ is solution of $\Omega'$:
\begin{itemize}
  \item It is solution of every $\Omega'_i$ for $i=1 \dots n$, since $\rho {\upharpoonright} \varset{\Omega'_i} = \rho_i$. Therefore it is solution of $\bigcup_{i=1}^n \Omega_i$.
  \item It is a solution of $\{\encode{d \dleq W} \mid W \in \bigcup_{i=1}^n \W_i\}$ because as solution of $\Omega'_i$ for $i=1 \dots n$, $\rho$ is solution of $\{\encode{d_i \dleq W} \mid W \in  \W_i\}$, and by the hypothesis of {\bf QDC} $d \dleq d_i$.
\end{itemize}
Therefore we prove $\transform{\Prog} \crwl{\cdom} (c(\ntup{e'}{n})\rho \to c(\ntup{t}{n}))\rho
\Leftarrow \Pi$ with a proof tree $T'$ which starts with a {\bf DC} inference rule of the form
$$
  \displaystyle\frac
    {((~ e'_i \to t_i)\rho \Leftarrow \Pi ~)_{i = 1\ldots n}}
    {(c(\ntup{e'}{n}) \to c(\ntup{t}{n}))\rho \Leftarrow \Pi} \enspace .
$$

In order to justify that $\transform{\Prog} \crwl{\cdom} (e'_i \to t_i)\rho \Leftarrow \Pi$ for
each $i=1 \dots n$, we observe that the only variables of $e'_i \to t_i$ that can be affected by
$\rho$ are those introduced in $e'_i$ by the transformation, and that therefore $(e'_i \to t_i)\rho
= (e'_i \to t_i)\rho_i$ for $i=1 \dots n$, and these premises correspond to the inductive
hypotheses of this case.

  \item {\bf QDF$_\Prog$}. In this case $\psi = f(\ntup{e}{n}) \to t$ and the inference step applied at the root is of the form
$$
  \displaystyle\frac
    {(~ (e_i \to t_i\theta)\sharp d_i \Leftarrow \Pi ~)_{i = 1 \ldots n}
      \quad (r\theta \to t)\sharp d'_0  \Leftarrow \Pi
      \quad (~ \delta_j\theta\sharp d'_j  \Leftarrow \Pi ~)_{j = 1 \ldots m}}
    {(f(\ntup{e}{n}) \to t)\sharp d \Leftarrow \Pi}
$$
for some program rule $R_l = (f(\ntup{t}{n}) \qto{\alpha} r \Leftarrow \ntup{\delta}{m}) \in \Prog$
and substitution $\theta$ such that $R_l\theta \in [\Prog]_{\bot}$, and with $d \dleq d_i ~ (1 \le
i \le n)$ and $d \dleq \alpha \circ d'_j ~ (0 \le j \le m)$.

The inductive hypotheses in this case are:
\begin{enumerate}
  \item $\transform{\Prog} \crwl{\cdom} (e'_i \to t_i\theta)\rho_i \Leftarrow \Pi$ for $i=1 \dots n$, with $\transform{e_i} = (e'_i,\Omega_i,\W_i)$ and $\rho_i$ solution of $\Omega'_i = \Omega_i \cup \{\encode{d_i \dleq W'} \mid W' \in \W_i\}$, for $i=1 \dots n$.
  \item $\transform{\Prog} \crwl{\cdom} (r'\theta \to t)\rho'_0 \Leftarrow \Pi$, with $\transform{r} = (r',\Omega_r,\W'_0)$ (it is easy to check that if $\transform{r} = (r',\Omega_r,\W'_0)$ then $\transform{(r\theta)} = (r'\theta, \Omega_r, \W'_0)$ for every substitution $\theta$), and $\rho'_0$ solution of $\Omega'_r = \Omega_r \cup \{\encode{d'_0 \dleq W'} \mid W' \in \W'_0\}$.
  \item $\transform{\Prog} \crwl{\cdom} (\delta'_j\theta)\rho'_j \Leftarrow \Pi$ with $\transform{\delta_j} = (\delta_j',\Omega_{\delta_j}, \W'_j)$ for ${j=1 \dots k}$ (it is easy to check that if $\transform{\delta_j} = (\delta'_j,\Omega_{\delta_j}, \W'_j)$ then $\transform{(\delta_j\theta)} = (\delta'_j\theta,\Omega_{\delta_j}, \W'_j)$ for every substitution $\theta$ and $j=1 \dots k$). The substitution $\rho'_j$ is solution of $\Omega'_{\delta_j} = \Omega_{\delta_j} \cup \{\encode{d'_j \dleq W'} \mid W' \in \W'_j\}$ for $j=1 \dots m$.
\end{enumerate}

In this case, $\transform{(\psi\sharp d \Leftarrow \Pi)}$\! is obtained by means of the
transformation rule {\bf TCS}. This rule asks first for the transformation of the qualified
statement $(f(\ntup{e}{n}) \to t)\sharp d$, which can be obtained by rule {\bf TP}, and this one
requires the transformation of $f(\ntup{e}{n})$, provided by rule rule {\bf TCE$_2$}. Let's see it:

$$
  \displaystyle\frac
    {
      \quad \displaystyle\frac
        {
          \quad \displaystyle\frac
            {
              \begin{array}{l}
                (~ \transform{e_i} = (e_i',\Omega_i,\W_i) ~)_{i = 1 \ldots n} \\ \\
              \end{array}
            }
            {
              \begin{array}{l} \\
                \transform{f(\ntup{e}{n})} = (~ f(\ntup{e'}{n}, W), \\
                \qquad (\union_{i=1}^{n} \Omega_i) \cup \{\qval{W}\} ~\cup\\
                \qquad \{\encode{W \dleq W'} \mid W' \in \union_{i=1}^{n} \W_i\}, ~ \{W\} ~) \\ \\
              \end{array}
            } \quad \mbox{\bf TCE$_2$}
        }
        {
          \begin{array}{l}  \\
            \transform{(f(\ntup{e}{n}) \to t)} =  (~ f(\ntup{e'}{n}, W) \to t, \\
            \qquad (\union_{i=1}^{n} \Omega_i) \cup \{\qval{W}\} ~\cup\\
            \qquad \{\encode{W \dleq W'} \mid W' \in \union_{i=1}^{n} \W_i\}, ~\{W\} ~) \\ \\
          \end{array}
        } \quad \mbox{\bf TP}
    }
    {
      \begin{array}{l} \\
        \transform{((f(\ntup{e}{n})\to t)\sharp d \Leftarrow\Pi)} = (~ f(\ntup{e'}{n},W)\to t \Leftarrow \Pi, \\
        \qquad (\union_{i=1}^{n} \Omega_i) \cup \{\qval{W}\} ~\cup\\
        \qquad \{\encode{W \dleq W'} \mid W' \in \union_{i=1}^{n} \W_i\} \cup \{\encode{d \dleq W}\} ~) \\
      \end{array}
    } \quad \mbox{\bf TCS}
$$
Therefore
$$
\Omega' = (\bigcup_{i=1}^n \Omega_i) \cup \{ \qval{W} \} \cup \{\encode{W \dleq W'} \mid W' \in
\bigcup_{i=1}^n \W_i \} \cup \{\encode{d \dleq W}\} \enspace .
$$
We define a new substitution
$$
\rho = \biguplus_{i=1}^n \rho_i \uplus \rho'_0 \uplus \biguplus_{j=1}^m \rho'_j \uplus \{W
\!\mapsto d\} \enspace .
$$

It is straightforward to check that $\rho$ is a solution for $\Omega'$ because $\rho$ is solution
of:
\begin{itemize}
  \item Each $\Omega_i ~ (1 \le i \le n)$, because $\rho_i$ is solution of $\Omega'_i$ which contains $\Omega_i$ (see inductive hypothesis 1) and $\rho$ is an extension of $\rho_i$.
  \item  $\{\qval{W}\}$ because $\qval{W}\rho = \qval{d}$ which holds by definition.
  \item $\{\encode{W \dleq W'} \mid W' \in \union_{i=1}^n \W_i\}$ because $W\!\rho = d$, $\rho$ is solution of $ \{\encode{d_i \dleq W'} \mid W' \in \W_i\}$ for each $i=1 \dots n$ (see inductive hypothesis 1), and $d \dleq d_i ~ (1 \le i \le n)$ by the hypotheses of the inference rule \textbf{QDP$_\Prog$}.
  \item $\{\encode{d \dleq W}\}$ since $W\!\rho = d$ and trivially $d \dleq d$.
\end{itemize}

The transformed of the program rule $R_l = (f(\ntup{t}{n}) \qto{\alpha} r \Leftarrow
\ntup{\delta}{m}) \in \Prog$ will be a program rule in $\transform{\Prog}$ of the form:
$$
  \begin{array}{ll}
  \transform{(R_l)} = (f(\ntup{t}{n},W) \rightarrow r' \Leftarrow & \qval{W}, \Omega_r, (\encode{W \dleq \alpha \circ W'})_{W'\in \W'_0}, \\
  & \Omega_{\delta_1}, (\encode{W \dleq \alpha \circ W'_1})_{W'_1 \in \W'_1}, \delta_1' \\
  & \vdots \\
  & \Omega_{\delta_m}, (\encode{W \dleq \alpha \circ W'_m})_{W'_m \in \W'_m}, \delta_m' \\
  \end{array}
$$
with $\transform{r} = (r',\Omega_r,\W'_0)$ and $(~ \transform{\delta_j} =
(\delta_j',\Omega_{\delta_j}, \W'_j) ~)_{j=1 \dots m}$.

Then we prove $(f(\ntup{e'}{n},W) \to t)\rho \Leftarrow \Pi$ in $\cflp{\cdom}$ with a {\bf
DF$_\Prog$} root inference step using the program rule $\transform{(R_l)}$ and the substitution
$\theta' = \theta \uplus \rho$ to instantiate the program rule. We next check that every premise of
this inference can be proven in $\CRWLc$:
\begin{itemize}
  \item $\transform{\Prog} \crwl{\cdom} (e'_i\rho \to t_i(\theta \uplus \rho)) \Leftarrow \Pi$ for $i=1 \dots n$. We observe that the only variables of $e'_i$ that can be affected by $\rho$ are those in $\rho_i$. Moreover, $\rho$ cannot affect $t_i$ because the program transformation does not introduce new variables in terms. Therefore $(e'_i\rho \to t_i(\theta \uplus \rho)) = (e'_i \to t_i\theta)\rho_i$ and $\transform{\Prog} \crwl{\cdom} (e'_i \to t_i\theta)\rho_i \Leftarrow \Pi$ for $i=1 \dots n$ follows from inductive hypothesis number 1.
  \item $\transform{\Prog} \crwl{\cdom} (W\rho \to W(\theta \uplus \rho)) \Leftarrow \Pi$. By construction of $\rho$, $(W\!\rho \to W(\theta \uplus \rho)) = d \to d$ and one {\bf RR} inference step proves this statement.
  \item $\transform{\Prog} \crwl{\cdom} (r'(\theta \uplus \rho) \to t\rho) \Leftarrow \Pi$. In this case $t\rho = t$ because $t$ it contains no variables introduced during the transformation, and $r'(\theta\uplus\rho) = r'(\theta\rho'_0)$ since $\rho'_0$ is the only part of $\rho$ that can affect $r'$ and the range of $\theta$ does not include any of the new variables in the domain of $\rho'_0$. Now, $\transform{\Prog} \crwlc (r'\theta \to t)\rho'_0 \Leftarrow \Pi$ follows from inductive hypothesis number 2.
  \item $\transform{\Prog} \crwl{\cdom} \qval{W}(\theta\uplus\rho) \Leftarrow \Pi$. $W$ is a fresh variable and, by construction of $\rho$, $\qval{W}(\theta\uplus\rho) = \qval{d}$. $\transform{\Prog} \crwl{\cdom} \qval{d} \Leftarrow \Pi$ trivially holds.
  \item $\transform{\Prog} \crwl{\cdom} \Omega_r(\theta\uplus\rho) \Leftarrow \Pi$. $\Omega_r(\theta \uplus \rho)$ = $\Omega_r\rho = \Omega_r\rho'_0$ and, by construction, $\rho'_0$ is solution of $\Omega_r$.
  \item $\transform{\Prog} \crwl{\cdom} (\encode{W \dleq \alpha \circ W'})(\theta \uplus \rho) \Leftarrow \Pi$ for each ${W'\in \W'_0}$. We have $(\encode{W \dleq \alpha \circ W'})(\theta \uplus \rho) = (\encode{W \dleq \alpha \circ W'})\rho$ = $\encode{W\rho \dleq \alpha \circ W'\!\rho'_0}$ = $\encode{d \dleq \alpha \circ W'\rho'_0}$. And $\encode{d \dleq \alpha \circ W'\!\rho'_0}$ holds because $d \dleq \alpha \circ d'_0$ by the hypotheses of the inference rule \textbf{QDP$_\Prog$}, and $\encode{d'_0 \dleq W'}$ by inductive hypothesis number 2.
  \item $\transform{\Prog} \crwl{\cdom} \Omega_{\delta_j}(\theta \uplus \rho) \Leftarrow \Pi$  for $j=1 \dots m$. As in the previous premises $\Omega_{\delta_j}(\theta \uplus \rho) = \Omega_{\delta_j}\rho = \Omega_{\delta_j}\rho'_j$ and $\rho'_j$ is solution of $\Omega_{\delta_j}$ as a consequence of the inductive hypothesis number 3.
  \item $\transform{\Prog} \crwl{\cdom} (\encode{W \dleq \alpha \circ W'_j})(\theta \uplus \rho) \Leftarrow \Pi$ for every ${W_j' \in \W'_j}$ and $j=1 \dots m$. We have $(\encode{W \dleq \alpha \circ W'_j})(\theta \uplus \rho) = (\encode{W \dleq \alpha \circ W'_j})\rho = \encode{W\rho \dleq \alpha \circ W'_j\rho} = \encode{d \dleq \alpha \circ W'_j\rho'_j}$. Now, from the hypotheses of the inference rule \textbf{QDP$_\Prog$} follows $d \dleq \alpha \circ d'_j$ for $j=1 \dots m$, and from inductive hypothesis number 3, $\rho'_j$ is solution of $\encode{d'_j \dleq W'_j}$. Hence $\transform{\Prog} \crwl{\cdom} \encode{d \dleq \alpha \circ W_j'\rho_j'} \Leftarrow \Pi$ for $j=1 \dots k$.
  \item $\transform{\Prog} \crwl{\cdom} \delta'_j(\theta \uplus \rho) \Leftarrow \Pi$ for $j=1 \dots m$. In this case $\delta'_j$ can contain variables from both $\theta$ and $\rho'_j$. Hence $\delta'_j(\theta \uplus \rho) = (\delta'_j\theta)\rho'_j$. And $\transform{\Prog} \crwl{\cdom}
(\delta'_j\theta)\rho'_j \Leftarrow \Pi$ follows from the inductive hypothesis number 3.
\end{itemize}

\item {\bf QPF}. In this case $\psi = p(\ntup{e}{n}) \to v$ and the inference step applied at the root is of the form
$$
  \displaystyle\frac
    {(~ (e_i \to t_i)\sharp d_i \Leftarrow \Pi ~)_{i=1 \ldots n}}
    {(p(\ntup{e}{n}) \to v)\sharp d \Leftarrow \Pi}
$$
with $v \in \Var \cup DC^0 \cup B_\cdom$, $\Pi \model{\cdom} p(\ntup{t}{n}) \to v$ and $d \dleq d_i
~ (1 \le i \le n)$. In order to obtain $\transform{(\psi\sharp d \Leftarrow \Pi)}$ one has to:
\begin{itemize}
  \item First, apply the transformation rule {\bf TCE$_1$}, $$\transform{p(\ntup{e}{n})} = (p(\ntup{e'}{n}),\bigcup_{i=1}^n \Omega_i, \bigcup_{i=1}^n \W_i)$$ where $\transform{e_i} = (e_i',\Omega_i,\W_i)$ for $i=1 \dots n$.
  \item Second, apply the transformation rule {\bf TP}, $$\transform{(p(\ntup{e}{n}) \to v)} =  (p(\ntup{e'}{n}) \to v, \bigcup_{i=1}^n \Omega_i, \bigcup_{i=1}^n \W_i) \enspace .$$
  \item And finally, apply the transformation rule {\bf TCS}, $$\transform{(\psi\sharp d \Leftarrow \Pi)} = (p(\ntup{e'}{n}) \to v \Leftarrow \Pi, \bigcup_{i=1}^n \Omega_i \cup \{\encode{d \dleq W} \mid W \in \bigcup_{i=1}^n \W_i\}) \enspace .$$
\end{itemize}

Therefore $$\Omega' = \bigcup_{i=1}^n \Omega_i \cup \{\encode{d \dleq W} \mid W \in \bigcup_{i=1}^n
\W_i\} \enspace .$$

From the premises ${(~ (e_i \to t_i)\sharp d_i \Leftarrow \Pi ~)_{i = 1 \ldots n}}$ of the
inference rule {\bf QPF}, and by the inductive hypothesis we have $\transform{\Prog} \crwl{\cdom}
(e'_i \to t_i)\rho_i \Leftarrow \Pi ~ (1 \le i \le n)$ for some substitutions $\rho_i :
\varset{\Omega'_i} \to \aqdom$ solution of $$\Omega'_i = \Omega_i \cup \{\encode{d_i \dleq W} \mid
W \in \W_i\}$$ for $i=1 \dots n$. We define a new substitution $\rho : \varset{\Omega'} \to \aqdom$
as $\rho = \biguplus_{i=1}^n \rho_i$. It is easy to check that $\rho$ is solution of $\Omega'$:
\begin{itemize}
  \item It is solution of every $\Omega'_i$ for $i=1 \dots n$, since $\rho{\upharpoonright}\varset{\Omega'_i} = \rho_i$. Therefore it is solution of $\bigcup_{i=1}^n \Omega_i$.
  \item It is a solution of $\{\encode{d \dleq W} \mid W \in \bigcup_{i=1}^n \W_i\}$ because as solution of $\Omega'_i$ for $i=1 \dots n$, $\rho$ is solution of $\{\encode{d_i \dleq W} \mid W \in  \W_i\}$, and by the hypothesis of the inference rule {\bf QPF}, $d \dleq d_i ~ (1 \le i \le n)$.
\end{itemize}

We now prove $\transform{\Prog} \crwl{\cdom} (p(\ntup{e'}{n}) \to v)\rho \Leftarrow \Pi$ with a
proof tree $T'$ with a {\bf PF} root inference of the form:
$$
   \displaystyle\frac
    {(~ (e'_i \to t_i)\rho \Leftarrow \Pi ~)_{i = 1\ldots n}}
    {(p(\ntup{e'}{n})\rho \to v) \Leftarrow \Pi}
$$
The rule can be applied because the requirements $v \in \Var \cup DC^0 \cup B_\cdom$ and $\Pi
\model{\cdom} p(\ntup{t}{n}) \to v$ are ensured by the hypothesis of the inference rule {\bf QPF}.
In order to justify that $\transform{\Prog} \crwl{\cdom} (e'_i \to t_i)\rho \Leftarrow \Pi$ for
each $i=1 \dots n$, we observe that the only variables of $(e'_i \to t_i)$ that can be affected by
$\rho$ are those introduced in $e'_i$ by the transformation, and that therefore $(e'_i \to t_i)\rho
= (e'_i \to t_i)\rho_i$ for $i=1 \dots n$, and it is easy to check that these premises correspond
to the inductive hypotheses of this case.

\item {\bf QAC}. This case is analogous to the previous proof, with the only differences being:
\begin{itemize}
\item The inference rule applied at the root of the proof tree is a {\bf QAC} inference rule instead of
       a {\bf QPF} inference rule.
\item In order to obtain the $\transform{(\psi\sharp d \Leftarrow \Pi)}$, the transformation rules
      applied are {\bf TA} and {\bf TCS} instead of {\bf TCE$_1$}, {\bf TP} and {\bf TCS}.
\item The proof tree $T'$ will have an {\bf AC} inference step at its root instead of a {\bf PF}
     inference step.
\end{itemize}
\end{itemize}

\smallskip
\noindent $[\mathit{\ref{theo:transformation:2}}. \Rightarrow
\mathit{\ref{theo:transformation:1}}.]$ \emph{(Transformation soundness).} Assume $\rho \in
\Solc{\Omega'}$ such that $\domset{\rho} = \varset{\Omega'}$ and $\transform{\Prog} \crwl{\cdom}
\psi'\rho \Leftarrow \Pi$ by means of a $\CRWLc$ proof tree $T$ with $k$ nodes. Reasoning by
induction on $k$ we show the existence of a $\QCRWLdc$ proof tree $T'$ witnessing $\Prog \qcrwldc
\psi\sharp d \Leftarrow \Pi$.

\smallskip
\noindent {\em Basis} ($k$=1). If $T$ contains only one node the $\QCRWLdc$ inference step applied
at the root must be any of the following:
\begin{itemize}
\item {\bf TI}. In this case $\psi'\!\rho \Leftarrow \Pi$ is a trivial c-statement. Then $\psi'\!\rho$ is either of the form $e' \to \bot$ or $\Unsat{\cdom}{\Pi}$. In the first case, since the transformation introduces no new variables at the right-hand side of a production, $\psi'$ is of the form $e'' \to \bot$ with $e' = e''\rho$, and $\psi$ is of the form $e \to \bot$, hence $\psi\sharp d \Leftarrow \Pi$ is trivial. Analogously, if $\Unsat{\cdom}{\Pi}$ then $\psi\sharp d \Leftarrow \Pi$ is trivial as well. Therefore $T'$ consists of a single node $\psi\sharp d \Leftarrow \Pi$ with $d$ any value in $\aqdom$, with a {\bf QTI} inference step at its root.

\item {\bf RR}. In this case $\psi'\!\rho = v \to v$ with $v \in \Var \cup B_\cdom$. Then $\psi' = v_1 \to v_2$ for some $v_1, v_2 \in \Var \cup B_\cdom$ such that $\psi'\!\rho = v \to v$. Since $\psi'$ cannot contain new variables introduced by the transformation (by the transformation rules), this means $\psi'\!\rho = \psi'$, and then $\psi' = v \to v$. Therefore $\psi = {v \to v}$, and $T'$ consists of a single node containing $(v \to v)\sharp d \Leftarrow \Pi$ for any $d \in \aqdom$ as the conclusion of a {\bf QRR} inference step.

\item {\bf DC}. Then $\psi'\!\rho  = c \to c$, which means that $\psi'$ can be either of the form $c \to c$, $X \to c$, or $X \to Y$ with $X,Y$ variables. In every case $\psi'$ does not include new variables introduced by the transformation, and therefore $\psi'\rho = \psi'$, which means that $\psi' = c \to c$ is the only possibility. Therefore $\psi = c \to c$, and $T'$ consists of a single node containing $(c \to c)\sharp d \Leftarrow \Pi$ for some $d \in \aqdom$ as the conclusion of a {\bf QDC} inference step.
\end{itemize}

\smallskip
\noindent \emph{Inductive step} ($k{>}$1). The $\CRWLc$ inference step applied at the root must be
any of the following:

\begin{itemize}
\item {\bf DC}. Then $\psi'\!\rho  = c(\ntup{e''}{n}) \to c(\ntup{t}{n})$ where $c \in DC^n$ and $n>0$, which implies that $\psi = c(\ntup{e}{n}) \to c(\ntup{t}{n})$ for values $e_i$ verifying $\transform{e_i} = (e_i', \Omega_i, \W_i)$ for $i = 1 \dots n$, and $e''_i = e'_i \rho$ for $i=1 \dots n$. Then $$\transform{\psi} = \transform{(c(\ntup{e}{n}) \to c(\ntup{t}{n}))} = (c(\ntup{e'}{n}) \to c(\ntup{t}{n}), \union_{i=1}^n \Omega_i, \union_{i=1}^n \W_i)$$ and thus $\varphi = (c(\ntup{e}{n}) \to c(\ntup{t}{n}))\sharp d \Leftarrow \Pi$ for some $d \in \aqdom$ such that $\transform{\varphi} = (\psi' \Leftarrow \Pi, \Omega')$, with $$\Omega' =  \union_{i=1}^n \Omega_i \cup \{\encode{d \dleq W} \mid W \in \union_{i=1}^n \W_i\}$$

The substitution $\rho : \varset{\Omega'} \to \aqdom$ must be solution of $\Omega'$, and the
inference step at the root must be of the form:

$$
  \displaystyle\frac
    {(~ e'_i\rho \to t_i \Leftarrow \Pi ~)_{i = 1\ldots n}}
    {c(\ntup{e'}{n})\rho \to c(\ntup{t}{n}) \Leftarrow \Pi}
$$

In the premises we have the proofs $T_i$ of $\transform{\Prog} \crwl{\cdom} e_i'\rho \Leftarrow
\Pi$ for $i = 1 \dots n$. Now, for each $1 \le i \le n$ we obtain a new value $d_i \in \aqdom$ as
$d_i = \infi \{ W\!\rho \mid W \in \W_i \}$. Then we will prove $\Prog \qcrwl{\qdom}{\cdom}
\varphi$ applying the following {\bf QDC} inference step at the root:
$$
  \displaystyle\frac
    {(~ (e_i \to t_i)\sharp d_i \Leftarrow \Pi ~)_{i = 1 \ldots n}}
    {(c(\ntup{e}{n}) \to c(\ntup{t}{n}))\sharp d \Leftarrow \Pi}
$$

In order to ensure that this step must be applied we must check that $d \dleq d_i ~ (1 \le i \le
n)$. This holds because $\rho$ is solution of $\Omega'$, in particular of $\{\encode{d \dleq W}
\mid W \in \W_i\}$ for $i=1 \dots n$. Therefore for each $i=1 \dots n$ and $W \in \W_i$, $d \dleq
W\!\rho$, which means that $d \dleq d_i = \infi \{ W\rho \mid W \in \W_i\}$. To complete the proof
we must check that there are proof trees for the premises, i.e. that $\Prog \qcrwl{\qdom}{\cdom}
\varphi_i$ with $\varphi_i = (e_i \to t_i)\sharp d_i \Leftarrow \Pi$, $i=1 \dots n$. This is a
consequence of the inductive hypotheses since for each $i=1 \dots n$:
\begin{itemize}
  \item $\transform{\varphi_i} = (e'_i \to t_i \Leftarrow \Pi, \Omega'_i)$, with $\Omega'_i = \Omega_i \cup \{ \encode{d_i \dleq W} \mid W \in \W_i\}$.
  \item $\rho$ is solution of $\Omega'_i$, since it is solution of $\Omega_i$ and by the definition of $d_i$, $d_i \dleq W\!\rho$ for every $W \in \W_i$.
  \item We have that $\transform{\Prog} \crwl{\cdom} e_i'\rho \Leftarrow \Pi$ for $i=1 \dots n$ (the premises of the {\bf DC} step).
\end{itemize}

\item {\bf DF$_\Prog$}. The inference step at the root of $T$ will use an instance $(\transform{R_l})\theta \in [\transform{\Prog}]_\bot$ of a program rule $\transform{R_l}$ of $\transform{\Prog}$. $\transform{R_l}$ will be the transformed of a program rule $R_l = (f(\ntup{t}{n}) \qto{\alpha} r \Leftarrow \ntup{\delta}{m}) \in \Prog$, and therefore will have the form:
$$
  \begin{array}{ll}
    \transform{R_l} = (f(\ntup{t}{n},W) \rightarrow r' \Leftarrow & \qval{W}, \Omega_r, (\encode{W \dleq \alpha \circ W'})_{W'\in \W'_0}, \\
    & \Omega_{\delta_1}, (\encode{W \dleq \alpha \circ W_1'})_{W_1' \in \W'_1}, \delta_1' \\
    & \vdots \\
    & \Omega_{\delta_m}, (\encode{W \dleq \alpha \circ W_m'})_{W_m' \in \W'_m}, \delta_m' \\
  \end{array}
$$
with $\transform{r} = (r',\Omega_r,\W'_0)$ and $(\transform{\delta_j} = (\delta_j',\Omega'_j,
\W'_j))_{j=1 \dots m}$.

In this case, $\psi'\!\rho$ must be of the form $(f(\ntup{e'}{n+1}) \to t)\rho$. By the theorem
premises, there exists a qc-statement $\psi\sharp d \Leftarrow \Pi$ such that
$\transform{(\psi\sharp d \Leftarrow \Pi)} = (\psi' \Leftarrow \Pi,\Omega')$ for some $\Omega'$.
Examining the transformation program rules we observe that the only possibility for $\psi$ is to be
of the form $f(\ntup{e}{n}) \to t$ and that the {\bf TCS} transformation rules should have been
applied followed by {\bf TP} and {\bf TCE$_2$}. This means in particular that $d \neq \bt$ and that
$\transform{e_i} = (e'_i,\Omega_i,\W_i)$  for $i=1 \dots n$ and that $e'_{n+1} = V$ with $V$ fresh
variable. Hence
$$
  \begin{array}{ll}
    \transform{\psi} = (f(\ntup{e'}{n},V) \to t, (& \union_{i=1}^n \Omega_i) \cup \{\qval{V}\} \cup \\
    & \{\encode{V \dleq W'} \mid W' \in \union_{i=1}^n \W_i\}, ~ \{V\}) \\
  \end{array}
$$
and $\varphi = {(f(\ntup{e}{n}) \to t)\sharp d \Leftarrow \Pi}$ for some $d \in \aqdom$. By
hypotheses, $\rho$ is solution of $$ \Omega' = (\union_{i=1}^n \Omega_i) \cup \{\qval{V}\} \cup
\{\encode{V \dleq W'} \mid W' \in \union_{i=1}^n \W_i\} \cup \{\encode{d \dleq V}\}$$ which means,
in particular, that $V\!\rho \in \aqdom$, since it must hold both  $\qval{V}$ and $\encode{d \dleq
V}$.

Therefore the root of $T$ will be $f(\ntup{e'}{n},V)\rho \to t \Leftarrow \Pi$, with premises proof
trees proving:

\begin{enumerate}
  \item $\transform{\Prog} \crwl{\cdom} (~ e'_i\rho \to t_i\theta \Leftarrow \Pi ~)_{i = 1\ldots n}$.
  \item $\transform{\Prog} \crwl{\cdom} (~ V\!\rho \to W\theta \Leftarrow \Pi ~)$. Since $V\!\rho \in \aqdom$ then either $W\theta = V\!\rho$ or $W\theta = \bt$. By premise 4 below, $W\theta \neq \bt$, therefore $W\theta = V\!\rho$.
  \item $\transform{\Prog} \crwl{\cdom} r'\theta \to t \Leftarrow \Pi$.
  \item $\transform{\Prog} \crwl{\cdom} \qval{W\theta} \Leftarrow \Pi$.
  \item $\transform{\Prog} \crwl{\cdom} \Omega_r\theta \Leftarrow \Pi$.
  \item $\transform{\Prog} \crwl{\cdom} (\encode{W \dleq \alpha \circ W'})_{W'\in \W'_0}\theta \Leftarrow \Pi$.
  \item $\transform{\Prog} \crwl{\cdom} \Omega_{\delta_j}\theta \Leftarrow \Pi$ for $j = 1 \dots m$.
  \item $\transform{\Prog} \crwl{\cdom} (\encode{W \dleq \alpha \circ W_j'})_{W_j' \in \W'_j}\theta \Leftarrow \Pi$ for $j = 1 \dots m$.
  \item $\transform{\Prog} \crwl{\cdom} \delta'_j\theta \Leftarrow \Pi$ for $j = 1 \dots m$.
\end{enumerate}

Then we can prove $\Prog \qcrwl{\qdom}{\cdom} \varphi$ by applying a {\bf QDF$_\Prog$} inference
step of the form:
$$
  \displaystyle\frac
    {(~ (e_i \to t_i\theta)\sharp d_i \Leftarrow \Pi ~)_{i = 1 \ldots n}
      \quad (r\theta \to t)\sharp d'_0  \Leftarrow \Pi
      \quad (~ \delta_j\theta\sharp d'_j  \Leftarrow \Pi ~)_{j=1 \ldots m}}
    {(f(\ntup{e}{n}) \to t)\sharp d \Leftarrow \Pi}
$$
where
\begin{itemize}
  \item $d_i = \infi \{W\!\rho \mid W \in \W_i \}$  for $i=1 \dots n$.
  \item $d'_0 = \infi \{W\theta \mid W \in \W'_0 \}$.
  \item $d'_j = \infi \{W\theta \mid W \in \W'_j \}$ for $j=1 \dots m$.
\end{itemize}

For proving $\Prog \qcrwl{\qdom}{\cdom} \varphi$ we need to check that
\begin{itemize}
  \item $d \dleq d_i ~ (1 \le i \le n)$. Since $\rho$ is solution of $\Omega'$, $d \dleq  W\!\rho$, and $W\!\rho \dleq W'\!\rho$ for every $W' \in \W_i$ and every $1 \leq i \leq n$. Therefore $d \dleq \infi \{\rho(W) \mid W \in \W_i \} = d_i$ for $i=1 \dots n$.
  \item $d \dleq \alpha \circ d'_0$. Since $\rho$ is solution of $\Omega'$, $d \dleq V\!\rho = W\theta$. From premise 6, $W\theta \dleq \alpha \circ W'\theta$ for every $W' \in \W'_0$. Therefore $d \dleq \infi \{ W\theta \mid W \in \W'_0 \} = d'_0$.
  \item $d \dleq \alpha \circ d'_j ~ (1 \le j \le m)$. Analogous to the previous point but using premise 8.
\end{itemize}

Finally, in order to justify the premises of the {\bf QDF$_\Prog$} we must prove:
\begin{itemize}
  \item $\Prog \qcrwl{\qdom}{\cdom} (e_i \to t_i\theta) \sharp d_i \Leftarrow \Pi$, which is a consequence of applying the inductive hypotheses to the premises 1, $(~ e'_i\rho \to t_i\theta \Leftarrow \Pi ~)_{i = 1\ldots n}$, following the same reasoning we applied for the premises of the {\bf DC} inference.
  \item $\Prog \qcrwl{\qdom}{\cdom} (r\theta \to t) \sharp d'_0 \Leftarrow \Pi$. Analogously, is a consequence of the inductive hypothesis and of premise 3.
  \item $\Prog \qcrwl{\qdom}{\cdom} (~ \delta_j\theta\sharp d_j' \Leftarrow \Pi ~)_{j=1 \ldots m}$. Again a consequence of the inductive hypothesis, this time applied to the premise 9.
\end{itemize}
\item {\bf PF}. Analogous to the proof for the {\bf DC} inference step.
\item {\bf AC}. analogous to the proof for the {\bf DC} inference step. \qed
\end{itemize}
\end{proof}


Using Theorem \ref{theo:transformation} we can prove that the transformation of goals specified in
Fig. \ref{fig:transformation} preserves solutions in the sense of the following result.

\begin{theorem}
\label{thm:goal-transformation} Let $G$ be a goal for a given $\qcflp{\qdom}{\cdom}$-program
$\Prog$. Then, the two following statements are equivalent:
\begin{enumerate}
  \item $\langle \sigma,\mu,\Pi \rangle \in \Sol{\Prog}{G}$.
  \item $\langle \sigma \uplus \mu \uplus \rho, \Pi \rangle \in \Sol{\transform{\Prog}}{\transform{G}}$ for some $\rho \in \mbox{Val}_\qdom$ such that $\domset{\rho}$ is the set of new variables $W$ introduced by the transformation of $G$.
\end{enumerate}
\end{theorem}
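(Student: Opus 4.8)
The plan is to reduce the statement to Theorem~\ref{theo:transformation} applied separately to each of the $m$ annotated constraints of the goal, supplementing it with elementary reasoning about the purely qualificational constraints that rule \textbf{TG} adds. Throughout I would exploit the disjointness of the three kinds of variables involved: the program variables in $\Var$ moved by $\sigma$, the goal qualification variables $W_1,\dots,W_m$ moved by the $\qdom$-valuation $\mu$, and the fresh qualification variables introduced by the transformation, moved by $\rho$. I would also use the commutation fact already invoked inside the proof of Theorem~\ref{theo:transformation}: since every qualification constraint and qualification variable produced by the transformation lives in $\War$, disjoint from $\Var$, one has $\transform{(\delta_i\sigma)} = (\delta_i'\sigma,\Omega_i',\W_i')$ whenever $\transform{\delta_i} = (\delta_i',\Omega_i',\W_i')$ and $\sigma$ is a substitution over $\Var$. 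Writing $d_i = W_i\mu$, rule \textbf{TCS} then gives $\transform{(\delta_i\sigma\sharp d_i\Leftarrow\Pi)} = (\delta_i'\sigma\Leftarrow\Pi,\hat\Omega_i)$ with $\hat\Omega_i = \Omega_i'\cup\{\encode{d_i\dleq W'}\mid W'\in\W_i'\}$, and a short structural induction on the transformation rules shows $\W_i'\subseteq\varset{\Omega_i'}$, so $\varset{\hat\Omega_i}=\varset{\Omega_i'}$ is exactly the set of fresh variables contributed by $\delta_i$.

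For the implication from $(1)$ to $(2)$, I would start from a solution $\langle\sigma,\mu,\Pi\rangle$, so that $d_i = W_i\mu\dgeq\beta_i$ and $\Prog\qcrwldc(\delta_i\sigma)\sharp d_i\Leftarrow\Pi$ for each $i$. Applying Theorem~\ref{theo:transformation} to $\delta_i\sigma\sharp d_i\Leftarrow\Pi$ yields $\rho_i\in\Solc{\hat\Omega_i}$ with $\domset{\rho_i}=\varset{\Omega_i'}$ and $\transform{\Prog}\crwlc\delta_i'\sigma\rho_i\Leftarrow\Pi$; because every fresh variable carries a $\qval{\cdot}$ constraint, $\rho_i$ maps into $\aqdom$. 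Since the fresh variables of distinct constraints are disjoint, I take $\rho = \biguplus_{i=1}^m\rho_i\in\mbox{Val}_\qdom$, whose domain is precisely the new variables of $\transform{G}$. It then remains to check, via Definition~\ref{dfn:cflpgoals}, that each constraint of $\transform{G}$ is $\CRWLc$-derivable under $\sigma\uplus\mu\uplus\rho$: the $\Omega_i'$ reduce to $\Omega_i'\rho_i$ (untouched by $\sigma,\mu$) and hold since $\rho_i\in\Solc{\hat\Omega_i}$; $\qval{W_i}$ becomes $\qval{d_i}$, true because $d_i\in\aqdom$; each $\encode{W_i\dleq W'}$ becomes $\encode{d_i\dleq W'\rho_i}$, a member of $\hat\Omega_i$; $\encode{W_i\dgeq\beta_i}$ becomes $\encode{d_i\dgeq\beta_i}$, true by $d_i\dgeq\beta_i$ and Definition~\ref{dfn:expressible}; and $\delta_i'$ becomes $\delta_i'\sigma\rho_i$, already derived. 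Each qualificational constraint is primitive, so its derivability is just $\Pi$-entailment.

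The implication from $(2)$ to $(1)$ reverses this bookkeeping. Given the combined solution with $\rho\in\mbox{Val}_\qdom$, I set $d_i=W_i\mu\in\aqdom$ and read off, for each $i$, that the restriction $\rho_i := \rho\!\upharpoonright\!\varset{\Omega_i'}$ solves $\hat\Omega_i$ (from the derivability of the $\Omega_i'$ and the $\encode{W_i\dleq W'}$ instantiated by $W_i\mapsto d_i$) and that $\transform{\Prog}\crwlc\delta_i'\sigma\rho_i\Leftarrow\Pi$ (using $\delta_i'(\sigma\uplus\mu\uplus\rho)=\delta_i'\sigma\rho_i$, as $\mu$ does not touch $\delta_i'$). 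By the commutation fact this matches $\transform{(\delta_i\sigma\sharp d_i\Leftarrow\Pi)}=(\delta_i'\sigma\Leftarrow\Pi,\hat\Omega_i)$, so Theorem~\ref{theo:transformation} delivers $\Prog\qcrwldc(\delta_i\sigma)\sharp d_i\Leftarrow\Pi$.

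The remaining threshold $d_i\dgeq\beta_i$ must be recovered from the derivability of $\encode{W_i\dgeq\beta_i}$, i.e.\ from $\Pi\model{\cdom}\encode{d_i\dgeq\beta_i}$, and this is where I expect the main obstacle. When $\Pi$ is satisfiable this entailment forces the encoded $\qdom$-inequality to be true, hence $d_i\dgeq\beta_i$ by Definition~\ref{dfn:expressible}, and the equivalence closes cleanly. The delicate point is that the threshold is a property of $\mu$ alone while the transformed goal only records it through a $\Pi$-conditioned primitive constraint, so the degenerate case $\Unsat{\cdom}{\Pi}$ has to be treated directly (there every $\CRWLc$- and $\QCRWLdc$-obligation is trivially met, and one checks that the remaining threshold condition is the only content that survives). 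The rest of the proof is the routine but careful tracking of which of $\sigma$, $\mu$, $\rho$ acts on which variables, guaranteeing that the three substitutions, having pairwise disjoint domains and compatible ranges, genuinely combine into $\sigma\uplus\mu\uplus\rho$.
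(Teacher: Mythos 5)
Your proof follows the paper's own route essentially step for step: the paper likewise decomposes the goal into its $m$ annotated constraints, applies Theorem \ref{theo:transformation} to each with $\psi = \delta_i\sigma$ and $d = W_i\mu$ (your $\hat\Omega_i$ is its $\Omega'_i\mu$), assembles $\rho = \biguplus_{i=1}^m \rho_i$ in the forward direction using the pairwise disjointness of the fresh variable sets, and restricts $\rho$ to $\varset{\Omega_i}$ in the backward direction; the commutation fact $\transform{(\delta_i\sigma)} = (\delta'_i\sigma, \Omega_i, \W_i)$ that you invoke is the same one used inside the proof of Theorem \ref{theo:transformation}. Up to the reformulation bookkeeping, the two arguments coincide.

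The one point where you go beyond the paper is the degenerate case $\Unsat{\cdom}{\Pi}$, and there your deferred check cannot be completed as described. If $\Pi$ is unsatisfiable, every instantiated constraint of $\transform{G}$ is a trivial c-statement, derivable by a single \textbf{TI} step, so statement (2) holds for an \emph{arbitrary} $\mu \in \mbox{Val}_\qdom$ and arbitrary $\rho$, whereas statement (1) still demands the genuine $\qdom$-inequality $W_i\mu \dgeq \beta_i$. Concretely, in $\qcflp{\U}{\rdom}$ take $\beta_1 = 0.9$, $W_1\mu = 0.5$ and $\Pi = \{0 == 1\}$: then (2) holds and (1) fails, so the implication from (2) to (1) is false as literally stated. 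Hence it is not true that ``the threshold condition is the only content that survives'' on both sides: it survives in (1), and nothing at all survives in (2). What you have actually uncovered is a proviso the paper leaves implicit: its proof silently identifies derivability of the ground primitive qualification constraints (i.e.\ $\Pi \model{\cdom} \pi$) with their truth in the sense of Definition \ref{dfn:expressible} when it reformulates statement (2) as the plain conditions $\rho \in \Solc{\Omega'_i\mu}$ and $W_i\mu \dgeq \beta_i$, and that identification is valid exactly when $\Pi$ is satisfiable. With that hypothesis made explicit (harmless in practice, since answer constraint stores are meant to be satisfiable), your argument closes and agrees with the paper's; without it, your proof and the paper's share the same hole --- which you, unlike the paper, at least had the merit of noticing.
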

\begin{proof}
Let $G = (~ \delta_i \sharp W_i, W_i \dgeq \beta_i ~)_{i=1\dots m}$, $\sigma$ and $\mu$ be given.
For $i = 1 \dots m$, consider  $\transform{\delta_i} = (\delta'_i, \Omega_i, \W_i)$ and $\Omega'_i
= \Omega_i \cup  \{\encode{W_i \dleq W} \mid W \in \W_i\}$. According to Fig.
\ref{fig:transformation}, $\transform{G} =  (\Omega'_i,\, \qval{W_i},\, \encode{W_i \dgeq
\beta_i},\, \delta'_i)_{i=1\dots m}$. Then, because of Def. \ref{dfn:goals}(2) and the analogous
notion of solution for $\cflp{\cdom}$ goals explained in Sect. \ref{QCFLP}, the two statements of
the theorem can be reformulated as follows:
\begin{enumerate}
  \item[($a$)] $W_i\mu \dgeq \beta_i$ and $\Prog \qcrwldc \delta_i\sigma \sharp W_i\mu \Leftarrow \Pi$ hold for $i = 1 \dots m$.
  \item[($b$)] There exists $\rho \in \mbox{Val}_\qdom$ with   $\domset{\rho} = \bigcup_{i = 1}^m \varset{\Omega_i}$ such that $\rho \in \Sol{\cdom}{\Omega'_i\mu}$, $W_i\mu \dgeq \beta_i$ and $\transform{\Prog} \crwlc (\delta'_i\sigma)\rho \Leftarrow \Pi$ hold for $i = 1 \dots m$.
\end{enumerate}

\noindent [$(a) \Rightarrow (b)$] Assume ($a$). Note that $\transform{\delta_i\sigma \sharp W_i\mu
\Leftarrow \Pi}$ is $\delta'_i\sigma \Leftarrow \Pi, \Omega'_i\mu$. Applying Theorem
\ref{theo:transformation} (with $\psi = \delta_i\sigma$, $d = W_i\mu$ and $\Pi$) we obtain
$\transform{\Prog} \crwlc (\delta'_i\sigma)\rho_i \Leftarrow \Pi$ for some $\rho_i \in
\Sol{\cdom}{\Omega'_i\mu}$ with $\domset{\rho_i} = \varset{\Omega'_i\mu} = \varset{\Omega_i}$. Then
($b$) holds for $\rho = \biguplus_{i=1}^m \rho_i$.

\smallskip
\noindent [$(b) \Rightarrow (a)$] Assume ($b$). Let $\rho_i = \rho {\upharpoonright}
\varset{\Omega_i}$, $i = 1 \dots m$. Note that ($b$) ensures $\transform{\Prog} \crwlc
(\delta'_i\sigma)\rho_i \Leftarrow \Pi$ and $\rho \in \Sol{\cdom}{\Omega'_i\mu}$. Then Theorem
\ref{theo:transformation} can be applied (again with $\psi = \delta_i\sigma$, $d = W_i\mu$ and
$\Pi$) to obtain $\Prog \qcrwldc \delta_i\sigma \sharp W_i\mu \Leftarrow \Pi$. Therefore, ($a$)
holds. \qed
\end{proof}

As an example of goal solving via the transformation, we consider again the \emph{library program}
$\Prog$ and the goal $G$ discussed in the Introduction. Both belong to the instance
$\qcflp{\U}{\rdom}$ of our scheme. Their translation into $\cflp{\rdom}$ can be executed in the
$\toy$ system \cite{toy} after loading the Real Domain Constraints library (\texttt{cflpr}). The
source and translated code are publicly available at \texttt{gpd.sip.ucm.es/cromdia/qlp}. Solving
the transformed goal in $\toy$ computes the answer announced in the Introduction as follows:
\begin{flushleft}
\small
\begin{verbatim}
Toy(R)> qVal([W]), W>=0.65, search("German","Essay",intermediate,W) == R
      { R -> 4 }
      { W=<0.7, W>=0.65 }
sol.1, more solutions (y/n/d/a) [y]? no
\end{verbatim}
\end{flushleft}
The best qualification value for \texttt{W} provided by the answer constraints is  \texttt{0.7}.
\section{Conclusions} \label{Conclusions}

The work in this report is based on the scheme $\cflp{\cdom}$ for functional logic programming with
constraints presented in \cite{LRV07}. Our main results are: a new programming scheme
$\qcflp{\qdom}{\cdom}$ extending the first-order fragment of $\cflp{\cdom}$ with qualified
computation capabilities; a rewriting logic $\QCRWL{\qdom}{\cdom}$  characterizing
$\qcflp{\qdom}{\cdom}$-program semantics; and  a transformation of  $\qcflp{\qdom}{\cdom}$ into
$\cflp{\cdom}$ preserving program semantics and goal solutions, that can be used as a correct
implementation technique. Existing $\cflp{\cdom}$ systems such as $\toy$ \cite{toy} and {\sf Curry}
\cite{curry} that use definitional trees as an efficient implementation tool can easily adopt the
implementation, since the structure of definitional trees is quite obviously preserved by the
transformation.

As argued in the Introduction, our scheme is more expressive than the main related approaches we
are aware of. By means of an example dealing with a simplified library, we have shown that
instances of $\qcflp{\qdom}{\cdom}$ can serve as a declarative language for  flexible information
retrieval problems, where qualified (rather than exact) answers to user's queries can be helpful.

As future work we plan to extend $\qcflp{\qdom}{\cdom}$ and the program transformation in order to
provide explicit support for similarity-based reasoning, as well as the higher-order programming
features available in $\cflp{\cdom}$. We also plan to automate the program transformation, which
should be embedded as part of an enhanced version of the $\toy$ system. Finally, we plan further
research on flexible information retrieval applications, using different instances of our scheme.


\end{document}